\newtheorem{thm}{Theorem}%[section]
\newtheorem{lem}[thm]{Lemma}
\newtheorem{coro}{Corollary}[thm]
\newtheorem{prop}[thm]{Proposition}
\newtheorem{defn}[thm]{Definition}%[section]
\DeclareMathOperator{\Tr}{Tr}
\newcommand{\lr}[1]{\left( #1 \right)}
\newcommand{\Lr}[1]{\left[ #1 \right]}
\newcommand{\LR}[1]{\left\{ #1 \right\}}
\newcommand{\nc}{\newcommand}
\nc{\rank}{\operatorname{Rank}}
\nc{\RR}{{{\mathbb R}}}
\nc{\CC}{{{\mathbb C}}}
\nc{\cA}{{\mathcal{A}}}
\nc{\cB}{{\mathcal{B}}}
\nc{\proj}[1]{| #1\rangle\!\langle #1 |}
\title{VSQL: Variational Shadow Quantum Learning for Classification}
\author {
%    % Authors
%
      Guangxi Li,\textsuperscript{\rm 1,2}
       Zhixin Song,\textsuperscript{\rm 1}
        Xin Wang\textsuperscript{\rm 1}\thanks{Corresponding author.} \\
}
\begin{document}

\maketitle

% \linenumbers

\begin{abstract}

Classification of quantum data is essential for quantum machine learning and near-term quantum technologies. In this paper, we propose a new hybrid quantum-classical framework for supervised quantum learning, which we call Variational Shadow Quantum Learning (VSQL). Our method in particular utilizes the classical shadows of quantum data, which fundamentally represent the side information of quantum data with respect to certain physical observables. Specifically, we first use variational shadow quantum circuits to extract classical features in a convolution way and then utilize a fully-connected neural network to complete the classification task. We show that this method could sharply reduce the number of parameters and thus better facilitate quantum circuit training. Simultaneously, less noise will be introduced since fewer quantum gates are employed in such shadow circuits. Moreover, we show that the Barren Plateau issue, a significant gradient vanishing problem in quantum machine learning, could be avoided in VSQL. Finally, we demonstrate the efficiency of VSQL in quantum classification via  numerical experiments on the classification of quantum states and the recognition of multi-labeled handwritten digits. In particular, our VSQL approach outperforms existing variational quantum classifiers in the test accuracy in the binary case of handwritten digit recognition and notably requires much fewer parameters.

\end{abstract}

\section{Introduction}
\label{sec:intro_VSQL}
Quantum computers are expected to have significant applications in solving challenging problems in information processing. Inspired by the powerful capacity of classical supervised learning and its growing community~\cite{LeCun2015,Goodfellow-et-al-2016}, it is natural to develop their quantum counterparts and explore the emerging field of quantum machine learning (QML) \cite{Biamonte2017,Schuld2015a,Ciliberto2018,Lloyd2013}. Among many topics in this area, classification is one of the most important tasks,  e.g., distinguishing quantum states \cite{Bae2015,Chen2018,Patterson2019} or recognizing classical data \cite{Havlicek2019,Benedetti2019,Schuld2018}. Classification is usually described as a decision making process with discrete variable where the processing unit is provided with a labeled training set $\mathcal{D}^{(train)} = \{(\rho^{(m)}, y^{(m)})\}$ in order to find the convoluted mapping $\mathcal{F}$ between each set element $\rho^{(m)}$  and its corresponding label $y^{(m)}$. Once the training process is complete, we would expect the classifier $\mathcal{F}$ not only learns the map $\mathcal{F}(\rho^{(m)}) = y^{(m)}$ precisely, but also generalizes its capacity of discrimination to discover some hidden features shared with similar test data $\mathcal{F}(\rho^{(new)}) = y^{(new)}$ (i.e. recognize an unseen cat as a cat). This ability of generalization is valuable to all classification tasks, and hence it is frequently used to benchmark the performance of a classifier.

In classical machine learning, various approaches have been proposed to implement classification tasks, including perceptron-based algorithms, support vector machines, and the most prevalent neural network (NN) framework \cite{LeCun2015}. With the quantum computing community growing in the NISQ era \cite{Preskill2018}, similar ideas have been developed respectively, including the quantum perceptron model \cite{kapoor2016quantum}, kernel-based method \cite{li2019sublinear}, and the quantum neural network (QNN) framework \cite{Havlicek2019,Mitarai2018,Schuld2018,Grant2018,Farhi2018,Schuld2019,Schuld2018a,Bhatia2019}.  

This paper focuses on the QNN-based algorithms, also referred to as Variational Quantum Algorithms (VQA) or hybrid quantum-classical algorithms since they are regarded as well-suited for execution on NISQ devices by combining quantum computers with classical computers. The main idea of VQA is employing parametrized quantum circuits (as a unitary neural network architecture) to search the Hilbert space and combining classical optimization methods like gradient descent (GD) to find the best parameters \cite{Peruzzo2014,Kandala2017,Farhi2014,Farhi2016}. VQA has been applied to many topics such as quantum eigensolver~\cite{Peruzzo2014}, quantum simulation~\cite{Yuan2018a}, quantum state distance estimation~\cite{Cerezo2020a,Chen2020a} and quantum matrix decomposition~\cite{Wang2020d}.
So far, most proposals for variational quantum classification process information in the global sense such that the quantum circuit always acts on the whole Hilbert space fulfilling high-dimensional transformation.
And the classical feature/information extracted from the quantum system is achieved through measurement. This formulation faces two potential challenges. There could exist more efficient architectures to achieve the same task performance but significantly reducing the quantum resource required (number of quantum gates) by limiting the operating scope to a few selected qubits. 
The other challenge is the notorious Barren Plateau problem \cite{McClean2018}. As the problem size increases, it will exhibit exponentially vanishing gradients, making the optimization landscape flat and hence untrainable using gradient-based optimization methods.

To overcome the above challenges, we explore a significantly different hybrid architecture based on classical shadows. In comparison, our method extracts only "local" features from the subspace, which we call \textit{shadow features}. 
We introduce shadow features for quantum classification due to the inspiration from the recent studies on shadows \cite{Aaronson2018,Huang2020}. Specifically, important quantum properties such as quantum fidelities and entanglement entropies can be predicted using classical shadows rather than possessing full information of the quantum states~\cite{Huang2020}. 
This provides us the intuition that classical shadows may also be applied in quantum classification.

\textbf{Contributions.}
Our work is, as far as we know, the first attempt to combine the concept of shadow with the mindset of learning. We propose a Variational Shadow Quantum Learning (VSQL) framework that could be adapted to many near-term quantum applications.  In particular, we apply this framework to develop quantum classifiers for near-term quantum devices. Firstly, we employ the parameterized shadow quantum circuits $U(\bm{\theta})$ (denoted as \textit{shadow circuits}) acting on selected
% $n_{qsc}$
local
qubit subspace rather than the whole 
% $n$-
qubit Hilbert space, which considers the operating scope efficiency and the connectivity limit on quantum hardware. Secondly, the shadow features 
% $\{o_i\}$
of the input data (encoded as quantum states $\rho^{(m)}$ with labels $y^{(m)}$) will be computed via measuring the Pauli $X\otimes X \cdots \otimes X$ observables on the quantum devices. The final step is to utilize a classical Fully-Connected  Neural Network (FCNN) to post-process these shadow features, and we could then decide the label prediction $\hat{y}^{(m)}$ through an activation function. We refer to~Fig.~\ref{fig:scheme_vsqc} for a detailed sketch of our method for binary classification. 

The contributions of our work are multi-folded. 
First, we introduce a hybrid quantum-classical framework that can be easily implemented on quantum devices with topological connectivity limits, since our VSQL mainly considers locally-operated quantum circuits. Second, we show that VSQL involves significantly fewer number of parameters (independent of the problem size) than existing variational quantum classifiers \cite{Schuld2018,Farhi2018}. Notably, we prove that VSQL could naturally avoid the Barren Plateau issue \cite{McClean2018} (gradients vanishing issue in QML) by limiting the operating scope. 
Finally, we demonstrate real-world applications of VSQL to do quantum state classification and handwritten digits recognition. We in particular show that VSQL outperforms existing variational quantum classifiers in the test accuracy while requiring much fewer parameters.

\section{Method}
\label{sec:method_VSQL}

\subsection{Preliminaries and notations}
\label{sec:quantum_pre}

Here, we briefly introduce the basic concepts of quantum computation that are necessary for this paper. Interested readers are recommended to the celebrated textbook by Nielsen and Chuang \cite{Nielsen2002}. Information in the quantum computing field is represented by $n$-qubit quantum states over space $\CC^{2^n\times 2^n}$, which could be mathematically described by positive semi-definite matrices $\rho\succeq 0$ with property $\Tr(\rho) =1$. Following this density matrix formulation, a quantum state is pure if $\rank(\rho)=1$; otherwise, it is mixed. For a pure state $\rho$, it can be represented by a unit vector in the sense that $\rho = \proj{\psi}$, where the \textit{ket} notation $\ket \psi \in \CC^d$ denotes a column vector and \textit{bra} notation $\bra\psi = \ket \psi^{\dagger}$ with  $\dagger$ denoting conjugate transpose. In general, we would also use $\ket{\psi}$ to denote a pure state for simplicity. A mixed state could be represented as $\rho=\sum_iq_i\op{\psi_i}{\psi_i}$, where the coefficients $q_i\ge 0$ records the probability for quantum system to be in each corresponding pure state $\op{\psi_i}{\psi_i}$ and hence $\sum_iq_i=1$.

In this paper, the unitary matrices
\begin{small}
\begin{align}
    X:= \begin{bmatrix} 0 & 1\\ 1 & 0\end{bmatrix},
    Y:= \begin{bmatrix} 0 & -i\\ i & 0\end{bmatrix},
    Z:= \begin{bmatrix} 1 & 0\\ 0 & -1\end{bmatrix}
\end{align}
\end{small}
refer specifically to the Pauli matrices. Their corresponding rotation gates are denoted by $R_P(\theta):=  \text{e}^{-i\theta P/2} = \cos\frac{\theta}{2}\mathbb I -i\sin\frac{\theta}{2}P$, where $\theta\in [0,2\pi]$ is the rotation angle and $P\in\{X,Y,Z\}$.
The evolution of a quantum state $\rho$ could be mathematically described by employing a quantum circuit (or QNN) $\rho^\prime = U \rho U^\dag$, where the unitary $U$ is usually parameterized by a series of rotation gate angles $\bm{\theta}$ and basic two-qubit gates. Quantum measurement is usually introduced at the end of algorithms to extract classical information from the quantum states. One hardware-friendly way to fulfill this is through the Pauli expectations $\langle P \rangle=\Tr(P\rho^\prime)$, where $P \in\{X,Y,Z\}^{\otimes n}$
is a single Pauli matrix or Pauli product matrix, e.g., $P=X\otimes X\cdots \otimes X$.

\subsection{Sketch of our method}
\label{sec:sketch_VSQL}
We now present the sketch of VSQL for binary classification. Our goal is to find the optimal parameters $\bm{\theta}^*$ in the shadow circuits $U(\bm{\theta})$ and the best weights $\{\bm{w}^*, b^*\}$ in the FCNN such that the algorithm could correctly predict the label of an unknown input quantum state.
Like most classifiers, VSQL consists of two separate processes, training and inference. During the training process illustrated in Fig. \ref{fig:scheme_vsqc}, we are given the training data set encoded in $n$-qubit quantum state $\mathcal{D}^{(train)}:=\{(\rho_{in}^{(m)}, y^{(m)})\}_{m=1}^{N_{train}}$, where $y^{(m)}\in\{0,1\}$ denotes the binary label for the $m$-th input density matrix $\rho_{in}^{(m)}$. Then, the $n_{qsc}$-local shadow circuit acts on the first $n_{qsc}$ qubits and the Pauli-$(X\otimes \cdots \otimes X)$ expectation value is estimated, recorded as \textit{shadow feature} $o_1$.
Next, the identical shadow circuit is implemented on the subspace spanned from the $2^{nd}$ up to the $(2+n_{qsc}-1)^{th}$ qubit to extract the second shadow feature $o_2$. As the shadow circuit sliding down, we obtain $n-n_{qsc}+1$ shadow features in total. 
%Please note that 
This convolution-like way of sliding through the qubit positions can be adjusted according to the hardware connectivity.
We also note that there is only one shadow circuit here. However, the \textbf{n}umber of \textbf{s}hadow circuits ($n_s$) could be increased appropriately to accommodate the difficulty of classification tasks, with $n_s(n-n_{qsc}+1)$ shadow features.
Sequentially, we feed these local features $\{o_i\}$ into a classical FCNN, which means they are weighted summed with weights $\bm{w}\in \mathbb R^{n-n_{qsc}+1}$, bias $b \in \mathbb R$ and mapped into the range $\hat{y}^{(m)} \in [0,1]$ via the sigmoid activation function $\sigma(z) = \lr{1+\text{e}^{-z}}^{-1}$. Repeat the same procedure for each input data and compute the accumulated loss $\mathcal{L}(\bm{\theta},\bm{w},b;\mathcal{D}^{(train)})$
between the predicted value $\hat{y}^{(m)}$ and its true label $y^{(m)}$. Finally, VSQL utilizes gradient-based optimizer to update the shadow circuit parameters $\bm{\theta}$ and 
 the neural network parameters $\bm{w}, b$ and
 gradually minimizing the loss function. Repeat these steps until the loss is converged with tolerance $\Delta \mathcal{L} \leq \varepsilon$ or other stopping criteria is satisfied. Check Algorithm \ref{alg:VSQL_binary_train} for details.

\begin{figure}[t]
	\centering
		\includegraphics[width=0.47\textwidth]{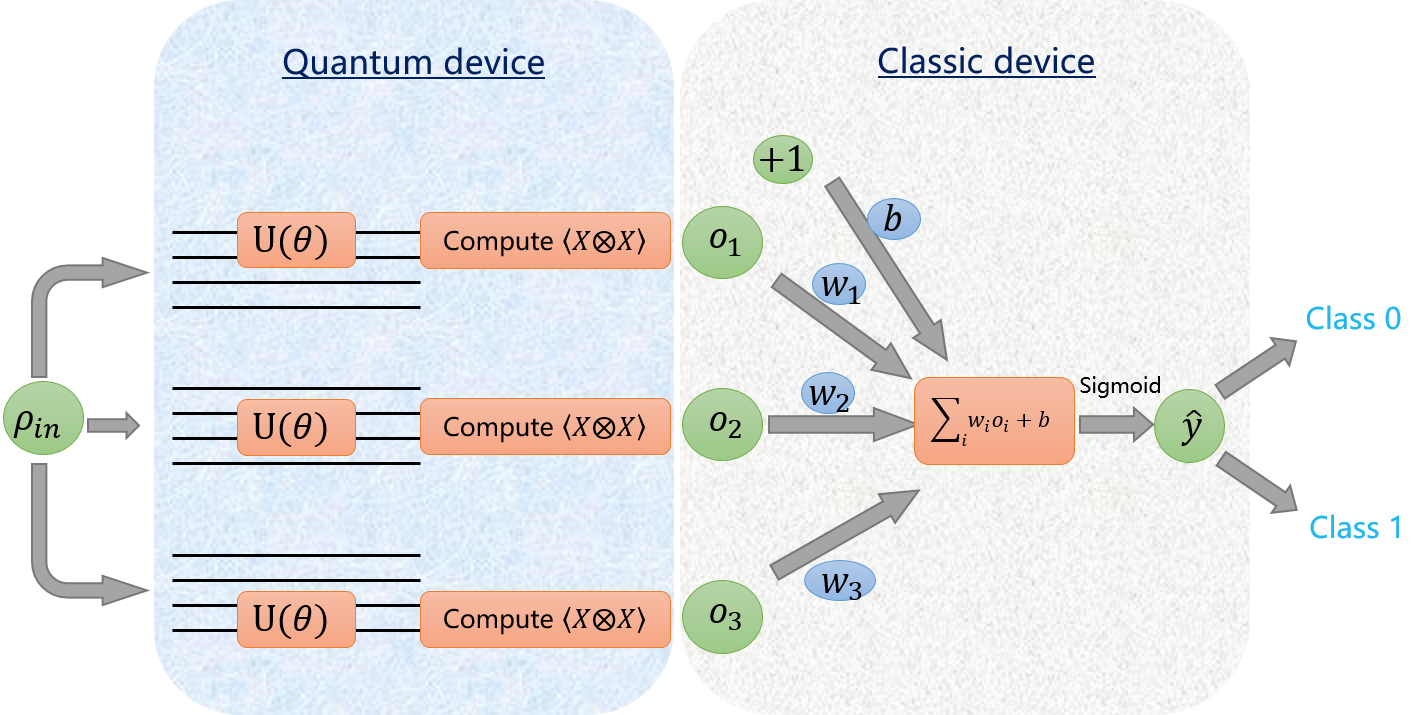}
	\caption{Sketch of variational shadow quantum learning (VSQL) for binary classification with $n=4$ and $n_{qsc} =2$. In the quantum device, the shadow circuit is implemented on the subspace of input state $\rho_{in}$. Sliding through the whole system to collect the Pauli-$(X\otimes X)$ expectations, i.e., shadow features. In the classic device, the resulting shadow features $o_i$'s are fed into a fully-connected neural network. Here, the output $\hat{y}$ is a value between 0 and 1 for binary case. We should denote that all the shadow circuits $U(\theta)$'s sliding through the $n$-qubit Hilbert space are identical.}
	\label{fig:scheme_vsqc}
\end{figure}
 
During the inference process, the unseen test data set $\mathcal{D}^{(test)}:=\{(\rho_{in}^{(m)}, y^{(m)}\in\{0,1\})\}_{m=1}^{N_{test}}$ is provided to the classifier $\mathcal{F}$. We feed each sample in the test set to the trained hybrid framework (combination of shadow circuit and FCNN) to predict its label. Then, the test accuracy could be calculated by comparing the predicted labels and the true labels. Due to the space limitation, we refer the details to Appendix (cf. Algorithm  \ref{alg:VSQL_binary_inference}).
Furthermore, VSQL can be naturally generalized to multi-label classification by replacing the sigmoid activation function with a softmax function. We defer the details for multi-label classification to Appendix.

\subsection{Loss function}
Given the data set  $\mathcal{D}:=\{(\rho_{in}^{(m)},
y^{(m)})\}_{m=1}^N$ and $n_{qsc}$-local shadow circuits, %where $y^{(m)}\in \{0, 1\}$ is the label of  $\rho_{in}^{(m)} \in \mathbb C^{2^{n}\times 2^n}$, 
the loss function of VSQL for binary classification is designed to be the mean square error\footnote{The cross-entropy loss is considered in the multi-label case; see Appendix.} \cite{Ziegel1999}:
\begin{small}
\begin{align}\label{eq:def_loss}
\mathcal{L}(\bm{\theta},\bm{w},b;\mathcal{D}) \!:=\! \frac{1}{2N}\!\sum_{m=1}^N \Lr{\hat{y}^{(m)}\lr{\rho_{in}^{(m)};\bm{\theta},\bm{w},b}\! -\! y^{(m)}}^2.
\end{align}
\end{small}
Here, the predicted label $\hat{y}^{(m)}$ is defined as follows:
\begin{small}
\begin{align}
    \hat{y}^{(m)}\lr{\rho_{in}^{(m)};\bm{\theta},\bm{w},b}\! :=\! \sigma \lr{\sum_{i} \! w_i o^{(m)}_i\lr{\rho_{in}^{(m)};\bm{\theta}}\!+\!b},
\end{align}
\end{small}
where $\sigma(z)$ denotes the sigmoid activation function and the shadow features $o_i$ are calculated through
\begin{small}
\begin{align}\label{eq:def_observ}
    o^{(m)}_i\lr{\rho_{in}^{(m)};\bm{\theta}} \!=\! \Tr \!\lr{ \rho_{in}^{(m)}  ( \mathbb I\!\otimes\! \cdots\! \otimes U^\dag(\bm{\theta}) {O} U(\bm{\theta})\otimes\! \cdots\! \otimes\! \mathbb I)  },
\end{align}
\end{small}
Note that the shadow circuit $U(\bm{\theta})$ and the physical observable ${O}=X\otimes\cdots\otimes X $ are applied on the same local qubits. Additionally, $U(\bm{\theta})$ is usually decomposed as a chain of unitary operators:
\begin{align}\label{eq:def_U}
    U(\bm{\theta})=\prod_{l=L}^1 U_l(\theta_l)V_l.
\end{align}
where $ U_l(\theta_l)=  \exp(-i\theta_lP_l/2)$ and $V_l$ denotes a fixed operator such as Identity, CNOT and so on.

\begin{algorithm}[t]
\caption{Variational shadow quantum learning (VSQL) for binary classification: the training process}
 \label{alg:VSQL_binary_train}
\begin{algorithmic}[1]
\REQUIRE The training data set $\mathcal{D}^{(train)}:=\{(\rho_{in}^{(m)}, y^{(m)}\in\{0,1\})\}_{m=1}^{N_{train}}$, $EPOCH$, optimization procedure
\ENSURE The final parameters $\bm{\theta}^*$, $\bm{w}^*$ and $b^*$, and the list of losses 
\STATE Initialize the parameters $\bm{\theta}$ of the 2-local (for example) shadow circuit $U(\bm{\theta})$ from uniform distribution Uni$[0, 2\pi]$ and $\bm{w},b$ from Gaussian distribution $\mathcal{N}(\bm{0},\mathbb I)$
\FOR{$ep=1,\ldots, EPOCH$}
    \FOR{$m = 1,\ldots, N_{train}$} 
    \STATE Apply multi-times the shadow circuit $U(\bm{\theta})$ to the input density matrix $\rho_{in}^{(m)}$
    \STATE Measure the subsystem and estimate a series of expectations $\langle X\otimes X \rangle$, recorded as $o_i$'s
    \STATE Feed the shadow features $o_i$'s into the classical neural network and obtain the output $\hat{y}^{(m)}$
    \STATE Compute the accumulated loss $\lr{\hat{y}^{(m)} -y^{(m)}}^2$ and update accordingly the parameters $\bm{\theta}$, $\bm{w}$ and $b$ via gradient-based optimization procedure
    \ENDFOR
    \IF{the stopping criterion is satisfied}
    \STATE Break
    \ENDIF
\ENDFOR
\end{algorithmic}
\end{algorithm}

\subsection{Analytical gradients}
%According to the loss function
With the above preparation, we can easily derive the analytical gradients, with which VSQL could naturally update its parameters $\bm{\theta}$ and $\{\bm{w}, b\}$ via gradient-based optimization method, e.g., SGD \cite{Bottou2004}. For each input $\rho_{in}^{(m)}$,
%For each data sample $(\rho_{in}^{(m)}, y^{(m)})$ in the data set $\mathcal{D}$, the partial derivatives with respect to the parameters $w_i,b$ and $\theta_l$ are calculated as follows:
\begin{small}
\begin{align}
    \frac{\partial \mathcal{L}}{\partial w_i} &= \lr{\hat{y}^{(m)} - y^{(m)}} \cdot \hat{y}^{(m)} \lr{1- \hat{y}^{(m)}} \cdot o^{(m)}_i,    \label{eq:derivative_w_i}        \\  
    \frac{\partial \mathcal{L}}{\partial b} &= \lr{\hat{y}^{(m)} - y^{(m)}} \cdot \hat{y}^{(m)} \lr{1- \hat{y}^{(m)}},   \label{eq:derivative_b}  \\
% \end{align}
% \begin{align}
    \frac{\partial \mathcal{L}}{\partial \theta_l} & = \frac{\partial \mathcal{L}}{\partial \hat{y}^{(m)}} \cdot \sum_i \frac{\partial \hat{y}^{(m)}}{\partial o^{(m)}_i} \cdot \frac{\partial o^{(m)}_i}{\partial \theta_l} \nonumber \\   \label{eq:derivative_theta} 
   & =  \lr{\hat{y}^{(m)} - y^{(m)}} \cdot \sum_i \hat{y}^{(m)} \lr{1- \hat{y}^{(m)}}  w_i \cdot \frac{\partial o^{(m)}_i}{\partial \theta_l},
\end{align}
\end{small}
%where $\hat{y}^{(m)}$ and $o^{(m)}_i$ are the corresponding abbreviations.
For the partial derivatives w.r.t $w_i$ and $b$ written in Eqs. \eqref{eq:derivative_w_i} and \eqref{eq:derivative_b}, they would be directly computed in the classical device and used to update $w_i,b$ through the back propagation algorithm \cite{Goodfellow-et-al-2016}. %of the classical neural networks.
For the partial derivative w.r.t $\theta_l$ in Eq. \eqref{eq:derivative_theta}, it can be regarded as a weighted sum of several partial gradients $ {\partial o^{(m)}_i}/{\partial \theta_l}$, 
%and each of which is calculated as, according to Eqs. \eqref{eq:def_observ} and \eqref{eq:def_U},
\begin{small}
\begin{align}\label{eq:ansatz_derivative_theta}
    \frac{\partial o^{(m)}_i\lr{\bm{\theta};\rho_{in}^{(m)}}}{\partial \theta_l} = -\frac{i}{2}\Tr \left(U_{> l}^\dag O U_{>l} \left[P_l, U_{\le l} \rho_i U_{\le l}^\dag\right] \right).
\end{align}
\end{small}
where $\rho_i=\Tr_{-i}(\rho_{in}^{(m)})$ denotes the partial trace of $\rho_{in}^{(m)}$ corresponding to the index $i$, $U_{\le l}=\prod_{j=l}^1 U_j(\theta_j)V_j$ and $U_{> l}=\prod_{j=L}^{l+1} U_j(\theta_j)V_j$ and $[\rho,\sigma]=\rho\sigma-\sigma\rho$ is a commutator. This gradient can be calculated exactly on the quantum device with the $\pi/2$ parameter shift rule proposed by \citeauthor{Mitarai2018}. Compared with the finite difference scheme, this method leads to a faster convergence \cite{Harrow2019} and is more suitable to the existing quantum devices.

\section{Theoretical Performance Analysis}
\label{sec:performance_VSQL}

\subsection{Number of parameters of VSQL}

In the hybrid quantum-classical framework, the number of parameters in the quantum circuit is an important metric to measure its complexity and efficiency. The main reason is that updating each parameter is costly in terms of quantum resources as it requires re-running the entire circuit multiple times. Therefore, algorithms with a smaller number of parameters are preferable in the NISQ era. Here, we will exhibit this advantage for VSQL.

There are two kinds of parameters in VSQL, i.e., the parameters $\bm{\theta}$ in the shadow circuits and the parameters $\bm{w},b$ in the classical NN. Assume the action mode of the shadow circuits is ``shadow sliding'' (illustrated in Fig. \ref{fig:scheme_vsqc}), which is also employed throughout this paper.
The number of parameters of VSQL for binary classification is summarized as follows.
\begin{prop}\label{prop:num_params}
For an $n$-qubit quantum system, if we use $n_s$ shadow circuits, then the number of parameters of VSQL for binary classification is 
\begin{align}
     \text{\# Params} &=  \text{\# Params} \big|_{\text{ in shadow circuits}} + \text{\# Params} \big|_{\text{ in NN}} \nonumber \\
     &= n_s n_{qsc} D +\Lr{ n_s \lr{n-n_{qsc}+1}+1},
\end{align}
where we denote by $n_{qsc}$ the \textbf{n}umber of \textbf{q}ubits of the \textbf{s}hadow \textbf{c}ircuits and assume each shadow circuit consists of $D$ layers with $n_{qsc}$ parameters in each layer. 
\end{prop}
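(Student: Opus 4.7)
The plan is to verify the claim by a direct counting argument, splitting the parameters into the two distinct components of VSQL (the shadow circuits on the quantum device and the FCNN on the classical device) and tallying each, while being careful about the parameter-sharing induced by ``shadow sliding''.

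First I would count the quantum-side parameters. By assumption each shadow circuit $U(\bm{\theta})$ acts on $n_{qsc}$ qubits and, as formalized in Eq.~\eqref{eq:def_U}, consists of $D$ layers with $n_{qsc}$ trainable rotation angles per layer; this gives $n_{qsc} D$ parameters per circuit. The key subtlety, which I would emphasize, is that the sliding procedure illustrated in Fig.~\ref{fig:scheme_vsqc} reuses the \emph{same} parameterized unitary at each of the $n - n_{qsc}+1$ positions along the qubit register (the statement in the sketch that ``all the shadow circuits $U(\theta)$'s sliding through the $n$-qubit Hilbert space are identical''). Hence the sliding does not multiply the parameter count; only the number of distinct shadow circuits does. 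With $n_s$ such circuits, the quantum-side parameter count is exactly $n_s n_{qsc} D$.

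Next I would count the classical-side parameters. Each of the $n_s$ shadow circuits produces one shadow feature $o_i$ at each of its $n - n_{qsc}+1$ sliding positions (the Pauli-$X^{\otimes n_{qsc}}$ expectation value defined in Eq.~\eqref{eq:def_observ}), giving a total of $n_s (n - n_{qsc}+1)$ shadow features that feed the FCNN. Since the FCNN for binary classification is a single affine layer followed by a sigmoid, it has exactly one weight $w_i$ per input feature plus a single scalar bias $b$, totaling $n_s (n - n_{qsc}+1) + 1$ parameters.

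Finally I would add the two contributions to obtain the displayed formula. The proof is essentially bookkeeping, so there is no substantial obstacle; the only point that deserves care is making explicit the parameter-sharing convention across the sliding positions, because a naive count that treats each slide as an independent circuit would overcount by a factor of $n - n_{qsc}+1$ on the quantum side. Noting this convention, together with the fact that the FCNN has exactly one bias term regardless of $n_s$ or $n_{qsc}$, yields the claim.
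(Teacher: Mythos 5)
Your proof is correct and follows the same direct bookkeeping that the paper relies on (the paper states this proposition without a separate proof, treating it as immediate from the layer/parameter assumptions, and its worked example $n_s=1$, $n_{qsc}=2$, $D=20$, $n=50$ giving $40+49+1=90$ confirms your count). Your explicit remark about parameter sharing across the $n-n_{qsc}+1$ sliding positions is exactly the right point to flag and matches the paper's convention that all slid copies of $U(\bm{\theta})$ are identical.
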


Thus, VSQL has a parameter quantity that is linearly related to $n$ and $D$ separately, rather than $nD$ that commonly appears in most of the ansatzes employed in the existing literature \cite{Schuld2018,Farhi2018,Mitarai2018}.
For a 50-qubit quantum system, if we use just one 2-local shadow circuit with 20 layers, i.e., $n_s=1, n_{qsc}=2, D=20$, then the number of parameters of VSQL is $40+(50-2+1)+1=90$, which is much smaller than $nD=1000$.

\subsection{Theoretical classification ability}
\label{sec:theoretical_ability}

In this subsection, we explore the theoretical classification ability of VSQL and give the corresponding necessary and sufficient conditions.

\begin{thm}\label{thm:classi_ability}
Given two types of input density matrices $\rho_{in}^{(0)}$ and $\rho_{in}^{(1)}$ with labels 0 and 1, respectively, if there exists a group of $\bm{\theta} $ that makes at least one pair of shadow features $o_i^{(0)}$ and $o_i^{(1)}$ different, i.e., $| o_i^{(0)}-o_i^{(1)}|  >0 $, then VSQL can distinguish them, vice versa.
\end{thm}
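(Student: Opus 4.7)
The plan is to handle the two directions separately, interpreting ``VSQL can distinguish them'' as the existence of admissible parameters $\bm{\theta}, \bm{w}, b$ for which the sigmoid outputs $\hat{y}^{(0)}$ and $\hat{y}^{(1)}$ fall on opposite sides of the decision threshold $1/2$, i.e.\ yield different predicted labels.

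For the sufficient direction, suppose there exist parameters $\bm{\theta}^*$ and an index $i^*$ with $o_{i^*}^{(0)} \neq o_{i^*}^{(1)}$. I would construct explicit FCNN parameters that realize distinct predictions. A natural choice is to fix $\bm{\theta} = \bm{\theta}^*$, set $w_j = 0$ for all $j \neq i^*$, take $w_{i^*} = c \cdot \mathrm{sign}(o_{i^*}^{(1)} - o_{i^*}^{(0)})$ for some $c > 0$, and set $b = -w_{i^*}\bigl(o_{i^*}^{(0)} + o_{i^*}^{(1)}\bigr)/2$. A direct computation then shows that the two pre-activation values reduce to $\pm \tfrac{c}{2}\lvert o_{i^*}^{(1)} - o_{i^*}^{(0)} \rvert$, which are strictly nonzero and of opposite sign. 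Since $\sigma$ is strictly monotonic with $\sigma(0) = 1/2$, the outputs $\hat{y}^{(0)}$ and $\hat{y}^{(1)}$ straddle $1/2$, and VSQL assigns opposite labels.

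For the necessary direction, I would prove the contrapositive. Assume that for every admissible $\bm{\theta}$ and every index $i$ one has $o_i^{(0)}(\bm{\theta}) = o_i^{(1)}(\bm{\theta})$. Then, using the explicit expression for $\hat{y}$ defined just after Eq.~\eqref{eq:def_loss}, the pre-activation $\sum_i w_i o_i + b$ and hence the sigmoid value are identical for $\rho_{in}^{(0)}$ and $\rho_{in}^{(1)}$ for every choice of $\bm{w}$ and $b$. Consequently $\hat{y}^{(0)} = \hat{y}^{(1)}$ always, and VSQL cannot output different labels on the two inputs, so it cannot distinguish them.

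The only genuinely delicate point is fixing the precise meaning of ``distinguishability,'' since without that the ``vice versa'' half is ambiguous; once the threshold-based interpretation above is adopted, both directions reduce to elementary algebra using strict monotonicity of the sigmoid, and no heavy machinery is required. I do not anticipate a real obstacle beyond stating the conventions carefully and ensuring that the constructed $\bm{w}, b$ lie in the admissible parameter domain (which they do, since $\bm{w} \in \mathbb{R}^{n-n_{qsc}+1}$ and $b \in \mathbb{R}$ are unconstrained).
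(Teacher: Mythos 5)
Your proposal is correct and follows essentially the same route as the paper's proof: for sufficiency, zero out all weights except the distinguishing coordinate, set the bias to the negative midpoint of the two shadow features so the pre-activations land symmetrically on either side of zero, and invoke strict monotonicity of the sigmoid; for necessity, argue by contraposition that identical shadow features force identical outputs. The only cosmetic difference is that you carry an explicit sign and scale $c$ where the paper absorbs the sign into a without-loss-of-generality assumption and takes $c=1$.
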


\begin{coro}\label{coro:classi_ability_partial_trace}
Given two types of $n$-qubit input density matrices. If each pair of their corresponding $m$-local partial traces are identical ($m<n$), then VSQL is theoretically incapable of distinguishing them via $m$-local shadow circuits, and vice versa.
\end{coro}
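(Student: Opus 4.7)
The plan is to apply Theorem~\ref{thm:classi_ability} directly, using the tensor-product structure of the shadow measurement to translate a statement about shadow features into one about local partial traces. First I would unfold the definition of the shadow feature in Eq.~\eqref{eq:def_observ}: since $U(\bm{\theta})$ and $O=X^{\otimes n_{qsc}}$ act only on the $i$-th $m$-qubit window while all other tensor factors are the identity, tracing out the remaining qubits gives
\begin{align*}
    o_i^{(m)}(\bm{\theta}) = \Tr\!\lr{\rho_i^{(m)}\cdot U^\dag(\bm{\theta})\,O\,U(\bm{\theta})},
\end{align*}
where $\rho_i^{(m)}:=\Tr_{-i}(\rho_{in}^{(m)})$ is the $m$-local reduced state on window $i$. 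This rewrites each shadow feature as a Hilbert--Schmidt inner product of the local state with a parameter-dependent observable, and reduces the question ``when can $\bm{\theta}$ make $o_i^{(0)}\neq o_i^{(1)}$?'' to ``when is $\rho_i^{(0)}-\rho_i^{(1)}$ non-orthogonal to some $U^\dag(\bm{\theta})\,O\,U(\bm{\theta})$?''.

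For the forward direction (all $m$-local partial traces identical $\Rightarrow$ VSQL incapable), the argument is immediate: if $\rho_i^{(0)}=\rho_i^{(1)}$ for every window $i$, then $o_i^{(0)}(\bm{\theta})=o_i^{(1)}(\bm{\theta})$ for every $\bm{\theta}$ and every $i$, and Theorem~\ref{thm:classi_ability} rules out any $\bm{\theta}$ that could separate the two inputs.

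For the reverse direction, suppose VSQL cannot distinguish them. Then Theorem~\ref{thm:classi_ability} forces $\Tr\!\lr{\Delta_i\cdot U^\dag(\bm{\theta})\,O\,U(\bm{\theta})}=0$ for every admissible $\bm{\theta}$ and every window $i$, where $\Delta_i:=\rho_i^{(0)}-\rho_i^{(1)}$. To conclude $\Delta_i=0$ I would invoke the standard assumption that the shadow-circuit ansatz is universal on its $m$ qubits, so that the orbit $\{U^\dag(\bm{\theta})\,O\,U(\bm{\theta})\}$ sweeps out the full conjugation orbit of $X^{\otimes m}$ under $U(2^m)$. A short Schur-lemma argument then shows the linear span of this orbit is the whole space of traceless Hermitian operators on the window (the adjoint action on that space is irreducible, and $X^{\otimes m}$ is a nonzero element of it). Since $\Delta_i$ is itself Hermitian and traceless, orthogonality to the entire span pins it to zero, completing the equivalence.

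The main obstacle is the richness claim for $\{U^\dag(\bm{\theta})\,O\,U(\bm{\theta})\}$: without some expressivity assumption on the ansatz the reverse implication can fail, because the family of reachable observables could be trapped in a strict subspace that happens to be orthogonal to a particular nonzero $\Delta_i$. I would therefore either state a universality (or 2-design) hypothesis explicitly, or verify that the concrete ansatz used in VSQL, being a product of arbitrary single-qubit rotations interleaved with entangling gates on the window, densely generates $U(2^m)$ on the local subsystem; either route delivers the span argument above and closes the corollary.
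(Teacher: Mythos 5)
Your proposal follows the same route as the paper: rewrite the shadow feature of Eq.~\eqref{eq:def_observ} as $o_i = \Tr\lr{\rho_i\, U^\dag(\bm{\theta}) O U(\bm{\theta})}$ with $\rho_i$ the $m$-local reduced state, and then invoke Theorem~\ref{thm:classi_ability}. The paper, however, disposes of the whole corollary in one sentence, asserting that ``getting identical partial traces is equivalent to having same shadow features'' --- which is immediate in the forward direction but genuinely non-trivial in the reverse. You are right to flag that the reverse implication needs the orbit $\{U^\dag(\bm{\theta}) O U(\bm{\theta})\}$ to be rich enough: if the ansatz were trapped in a proper subspace, a nonzero traceless Hermitian difference $\Delta_i=\rho_i^{(0)}-\rho_i^{(1)}$ could be orthogonal to every reachable observable. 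Your fix --- assume the ansatz is universal on the $m$-qubit window, note that the adjoint action of $U(2^m)$ on traceless Hermitian operators is irreducible, and conclude via Schur's lemma that the span of the conjugation orbit of $X^{\otimes m}$ is the full traceless Hermitian space --- is correct and closes the gap cleanly. So your argument is not a different route so much as a completed version of the paper's: the paper implicitly assumes the expressivity you make explicit, and your write-up is the more rigorous of the two. The only caveat is that the universality hypothesis should indeed be stated as part of the corollary (or verified for the specific shadow-circuit ansatz of Fig.~\ref{fig:ansatz_mnist_binary}), since the ``vice versa'' clause is false without it.
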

The proof of Corollary \ref{coro:classi_ability_partial_trace} could be immediately derived from Theorem \ref{thm:classi_ability}, because getting identical partial traces is equivalent to having same shadow features (cf. Eq. \eqref{eq:def_observ}).

% \subsubsection{Classification ability with different local shadow circuits}

After exploring the necessary and sufficient condition for the theoretical classification ability of VSQL, we now discuss this ability under different local shadow circuits. Intuitively, larger shadow circuits will give VSQL stronger classification ability. The following Theorem will give a detailed statement.

\begin{thm}\label{thm:n_1_n_2_lcoal_shadow}
Given two types of $n$-qubit input density matrices $\rho_{in}^{(0)}$ and $\rho_{in}^{(1)}$. If VSQL can not theoretically distinguish them via $m$-local shadow circuits, then neither can via $m^\prime$-local shadow circuits, where $m^\prime<m<n$. And not vice versa.
\end{thm}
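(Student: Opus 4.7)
The plan is to reduce Theorem \ref{thm:n_1_n_2_lcoal_shadow} to a statement about partial traces via Corollary \ref{coro:classi_ability_partial_trace} and then exploit the elementary fact that partial traces compose. Under the sliding convention, $k$-local VSQL inspects the reduced states on each contiguous $k$-qubit window $[i, i+k-1]$ for $i = 1, \ldots, n-k+1$; Corollary \ref{coro:classi_ability_partial_trace} says that inability to distinguish is equivalent to equality of all such reduced states for the two candidate inputs.

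For the forward implication, I would assume the $m$-local reduced states of $\rho_{in}^{(0)}$ and $\rho_{in}^{(1)}$ agree on every contiguous $m$-qubit window, fix an arbitrary $m'$-qubit sliding window $W' = [i', i'+m'-1]$, and exhibit a containing $m$-qubit sliding window $W = [j, j+m-1]$ by taking $j = i'$ when $i' \le n-m+1$ and $j = n-m+1$ otherwise. In either case $W' \subseteq W$ and $W$ appears in the list of $m$-local sliding windows, so tracing out the extra $m - m'$ qubits in $W \setminus W'$ sends matching $m$-local marginals into matching $m'$-local marginals on $W'$. Applying Corollary \ref{coro:classi_ability_partial_trace} in the other direction yields $m'$-local indistinguishability.

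For the ``not vice versa'' part I would produce two $n$-qubit states sharing all contiguous $m'$-local marginals but differing on some contiguous $m$-local marginal. Pick distinct $m$-qubit density matrices $\tau^{(0)} \neq \tau^{(1)}$ whose contiguous $(m-1)$-local marginals all coincide; for example the two Bell states $|\Phi^+\rangle\!\langle \Phi^+|$ and $|\Psi^+\rangle\!\langle \Psi^+|$ when $m = 2$, or the GHZ state $\tfrac{1}{\sqrt{2}}(|0\rangle^{\otimes m} + |1\rangle^{\otimes m})$ versus its classical mixture $\tfrac{1}{2}(|0\rangle\!\langle 0|^{\otimes m} + |1\rangle\!\langle 1|^{\otimes m})$ when $m \ge 3$. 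By the forward implication just established, these $\tau^{(b)}$ then agree on all contiguous $m'$-local sub-blocks for every $m' \le m-1$. Set $\rho_{in}^{(b)} = \tau^{(b)} \otimes |0\rangle\!\langle 0|^{\otimes (n-m)}$; each contiguous $m'$-qubit window of $\rho_{in}^{(b)}$ produces a tensor-product marginal whose nontrivial factor is a contiguous sub-marginal of $\tau^{(b)}$ and is therefore independent of $b$, so $m'$-local VSQL cannot distinguish them. The $m$-local sliding window $[1, m]$, however, returns $\tau^{(0)} \neq \tau^{(1)}$, so $m$-local VSQL distinguishes them by Theorem \ref{thm:classi_ability}.

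The main obstacle I anticipate is purely bookkeeping: the containment $W' \subseteq W$ must hold uniformly across all boundary positions of $W'$, and one must check that the $m'$-local windows straddling the boundary between $\tau^{(b)}$ and the trailing $|0\rangle\!\langle 0|^{\otimes(n-m)}$ block also produce identical marginals. This follows because such straddling windows intersect the $\tau^{(b)}$-block in a contiguous sub-block of length at most $m'$, on which the two $\tau^{(b)}$'s agree by construction, but the argument should be written out explicitly to avoid any off-by-one issue at the endpoints of the qubit chain.
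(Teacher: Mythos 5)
Your proof is correct and follows essentially the same route as the paper's: the forward implication reduces to equality of partial traces via Corollary \ref{coro:classi_ability_partial_trace} and then traces the matching contiguous $m$-local marginals down to $m'$-local ones, and the converse is refuted by a padded pair of states with identical small marginals but a distinct larger marginal. The only notable difference is that your counterexample (a Bell pair for $m=2$, or GHZ versus its dephased mixture for $m\ge 3$, tensored with $\ket{0}^{\otimes(n-m)}$) covers arbitrary $m'<m<n$, whereas the paper exhibits only the case $m'=1$, $m=2$, $n=3$ via $\tfrac{1}{\sqrt{2}}(\ket{000}\pm\ket{110})$ --- which is precisely your $m=2$ construction with $\Phi^{\pm}$ in place of $\Phi^{+},\Psi^{+}$.
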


The proof is shown in Appendix. From Theorem \ref{thm:n_1_n_2_lcoal_shadow}, we confirm the intuition that the larger the number of qubits $n_{qsc}$ of the shadow circuits is, the stronger the theoretical expressive ability of VSQL is. However, if this number is too large, it will lead to other problems, such as the Barren Plateau issue described in the next subsection.

\subsection{ Escape of Barren Plateau}
In the last subsection, we have shown that VSQL has a strong theoretical classification ability for a wide range of quantum states, especially by using large local shadow circuits. However, the sizeable operating scope of the shadow circuits will increase network parameters and the cost of compiling given limited hardware connections, and leads to the Barren Plateau (BP) issue. The BP issue \cite{McClean2018,Cerezo2020} refers to the vanishing gradient problem during the training process of QNN. That is, for a wide range of variational quantum circuits, the partial gradients of the objective function have a zero mean and an exponentially vanishing variance, which makes it difficult for the optimizer to find the correct direction to decrease the objective function. Therefore, it is important to discuss whether the BP problem exists when proposing a new variational quantum algorithm.

Next, we will evaluate the mean and the variance of the analytical gradients in VSQL. There is no BP issue for the partial gradients (see Eqs. \eqref{eq:derivative_w_i} and \eqref{eq:derivative_b}) with respect to the parameters $w_i$ and $b$ of the classical NN. And for the partial gradient (see Eq. \eqref{eq:derivative_theta}) with respect to $\theta_l$ of QNN, the BP issue is mainly reflected on the last term, i.e., the partial derivatives (see Eq. \eqref{eq:ansatz_derivative_theta}) of the shadow features $o_i$ with respect to $\theta_l$. Hence, it is sufficient to evaluate the mean and the variance of the partial gradient in Eq. \eqref{eq:ansatz_derivative_theta} to explore the BP problem in VSQL.
The results are summarized in the following proposition.

\begin{defn}
A unitary  $t$-design \cite{Dankert2009}  is defined as a finite set of unitaries $\{U_k\}_{k=1}^K$ on a d-dimensional Hilbert space
such that 
% \begin{small}
\begin{align}
\frac{1}{K}\cdot \sum_k P_{(t,t)}(U_k) = \int_{\mathcal{U}(d)} d\mu_{Haar}(U) P_{(t,t)}(U),
\end{align}
% \end{small}
where $P_{(t,t)}(U)$ denotes a polynomial of degree at most $t$ on the elements of $U$ and at most $t$ on the elements of $U^\dag$.
\end{defn}

\begin{prop}\label{prop:barren_VSQL}
If $U_{> l}$ or  $U_{\le l}$ forms at least an $n_{qsc}$-local unitary 2-design,
the mean and the variance of the analytical gradients with respect to $\theta_l$ in VSQL (see Eq. \eqref{eq:ansatz_derivative_theta})  are evaluated as
\begin{align}
    \mathbb E\Lr{\frac{\partial o_i}{\partial_{\theta_l}}} &=0;\quad
    \text{Var}\Lr{\frac{\partial o_i}{\partial_{\theta_l}}} = -\frac{1}{4}\cdot \frac{C \lr{\rho_i} }{2^{2n_{qsc}}-1},
\end{align}
where $C\lr{ \rho_i}\in \lr{- 4 \times 2^{n_{qsc}}, 0 }$ denotes a constant  and 
$n_{qsc}$ is the number of qubits of the shadow circuits.
\end{prop}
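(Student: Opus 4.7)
My plan is to parallel the Barren-Plateau analysis of McClean et al., but to localize all Haar / $2$-design integrals to the $n_{qsc}$-qubit subsystem on which the shadow circuit acts. The first step is to split the chain in Eq.~\eqref{eq:def_U} around index $l$, absorb the fixed factors $V_j$ into the adjacent unitary blocks, and rewrite the gradient in Eq.~\eqref{eq:ansatz_derivative_theta} as
\begin{align*}
g \;:=\; \frac{\partial o_i}{\partial\theta_l} \;=\; -\tfrac{i}{2}\,\Tr\!\bigl(B\,[P_l, A]\bigr),
\end{align*}
with $A := U_{\le l}\,\rho_i\,U_{\le l}^\dag$, $B := U_{>l}^\dag O\, U_{>l}$, $O = X^{\otimes n_{qsc}}$, and $d := 2^{n_{qsc}}$. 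All three objects $A$, $B$, $P_l$ act on the same $n_{qsc}$-qubit register, so the averages below are over $\mathcal{U}(d)$, which is exactly what an $n_{qsc}$-local unitary $2$-design reproduces up to second moments.

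For the mean I would invoke only the $1$-design identity $\mathbb E_U[U X U^\dag] = (\Tr X / d) I$, which is automatic from the $2$-design assumption. If $U_{>l}$ is the design, then $\mathbb E[B] = (\Tr O/d) I = 0$ because $X^{\otimes n_{qsc}}$ is traceless; if instead $U_{\le l}$ is the design, then $\mathbb E[A] = I/d$ and $[P_l, I/d] = 0$. In either case $\mathbb E[g] = 0$.

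For the variance, which now equals $\mathbb E[g^2]$, I would lift $g^{2}$ to a two-copy trace,
\begin{align*}
g^{2} \;=\; -\tfrac14\,\Tr_{1,2}\!\bigl((B\otimes B)\,([P_l,A]\otimes [P_l,A])\bigr),
\end{align*}
and twirl the design factor using the standard $2$-design second-moment formula $\mathbb E_U[U^\dag M U \otimes U^\dag M U] = \alpha(M)\, I\!\otimes\! I + \beta(M)\,\mathrm{SWAP}$, with $\alpha,\beta$ determined by $\Tr M$ and $\Tr M^{2}$. Two cancellations do the real work: (i) the $I\!\otimes\! I$ contribution drops because either $\Tr O = 0$ (case $U_{>l}$) or $\Tr[P_l,A] = 0$ (case $U_{\le l}$, after rewriting $\Tr(B[P_l,A]) = \Tr([B,P_l]A)$), and (ii) the $\mathrm{SWAP}$ contribution collapses via $\Tr(\mathrm{SWAP}\cdot N\otimes N) = \Tr(N^{2})$. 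Plugging in $\Tr(O^{2}) = d$ and simplifying yields the advertised form $\mathrm{Var}[g] = -\tfrac14\,C(\rho_i)/(d^{2}-1)$, with $C(\rho_i)$ proportional to $d\,\Tr([P_l,A]^{2})$ in the first case (and to $(\Tr(\rho_i^{2}) - 1/d)\Tr([B,P_l]^{2})$ in the second).

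Finally, I would verify $C(\rho_i)\in(-4d,0)$. Anti-Hermiticity of the commutator gives $\Tr([P_l,A]^{2}) = -\|[P_l,A]\|_F^{2}\le 0$, with equality only on a measure-zero commuting locus, so $C<0$. The triangle inequality $\|[P_l,A]\|_F \le 2\|A\|_F$ together with $\|A\|_F^{2}\le \Tr(A)\|A\|_\infty \le 1$ forces $\|[P_l,A]\|_F^{2} \le 4$, hence $C(\rho_i) > -4d$ away from saturation; the analogous bound in the second case is even tighter because $\Tr(\rho_i^{2}) - 1/d \le 1 - 1/d$. The main obstacle I foresee is not the integration itself but presenting both design hypotheses under a single constant $C(\rho_i)$, since the two cases produce structurally different expressions (one involves the purity of $\rho_i$, the other does not); I would either treat them as two sub-cases or define $C(\rho_i)$ abstractly as whichever quantity arises. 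Either way the qualitative conclusion is unambiguous: the variance scales only as $\mathcal{O}\bigl(1/(2^{2n_{qsc}}-1)\bigr)$, independent of the total qubit number $n$, so VSQL escapes the barren plateau whenever $n_{qsc}$ is kept constant.
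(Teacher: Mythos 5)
Your proposal is correct, and its first two stages coincide with the paper's: the zero mean follows from the same first-moment identity, and your two-copy twirl $\mathbb E[U^\dag M U\otimes U^\dag M U]=\alpha\, I\otimes I+\beta\,\mathrm{SWAP}$ is just a repackaging of the explicit second-moment formula the paper quotes as a lemma, landing on the same two case-dependent expressions for $C(\rho_i)$ (proportional to $2^{n_{qsc}}\,\mathbb E\,\Tr([P_l,A]^2)$ when $U_{>l}$ is the design, and to $(\Tr(\rho_i^2)-2^{-n_{qsc}})\,\mathbb E\,\Tr([P_l,B]^2)$ when $U_{\le l}$ is). Where you genuinely diverge is the bounding step. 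The paper writes $\mathbb E\,\Tr([P,U^\dag A U]^2)=2\,\mathbb E[(\vec\lambda)^\dag P_\Lambda\vec\lambda]-2\Tr(A^2)$ with $P_\Lambda$ a doubly stochastic matrix built from the twirled Pauli, and sandwiches the quadratic form between $\sum_i\lambda_i\lambda_{d-i+1}$ and $\sum_i\lambda_i^2$ via the rearrangement inequality; you instead note that the commutator is anti-Hermitian, so $\Tr([P_l,A]^2)=-\lVert[P_l,A]\rVert_F^2\le 0$, and bound its magnitude by $4\lVert A\rVert_F^2\le 4$ with the triangle inequality. Your route is shorter and more transparent; the paper's buys a tighter lower bound ($C\ge -2\cdot 2^{n_{qsc}}\Tr(\rho_i^2)$ rather than your $-4\cdot 2^{n_{qsc}}$) and the concentration remark $C\approx 2(1-2^{n_{qsc}}\Tr(\rho_i^2))$ that the main text uses to argue the gradients are estimable with polynomially many shots. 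One minor bookkeeping slip: in the $U_{>l}$ case the $I\otimes I$ contribution vanishes because $\Tr[P_l,A]=0$ (the commutator is traceless), not because $\Tr O=0$ --- both hold, so nothing breaks; and note that the strictness of $C<0$ is asserted rather than fully argued in both your write-up and the paper's.
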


The proof is shown in Appendix. From Proposition \ref{prop:barren_VSQL}, we notice the variance of the gradients decays exponentially with $n_{qsc}$, rather than the qubit number $n$. Hence, no matter how big the problem size $n$ is,
as long as we choose a small  $n_{qsc}$ (e.g., $n_{qsc}\le 4$) and assume $C\lr{ \rho_i} \approx - 2\times 2^{n_{qsc}}$, we can evaluate the analytical gradients efficiently via more than 1,000 repetitions derived from the Chernoff bound.
In one word, VSQL could escape the barren plateaus by choosing an appropriate operating scope $n_{qsc}$. 
Moreover, \citeauthor{Wang2020} indicates that noise could also induce the BP issue. Following this line of reasoning, the small shadow circuits in VSQL, where less noise is introduced, will also be beneficial for escaping the barren plateaus from a different perspective.

Here, we provide an illustrated example. Assume the $n$-qubit quantum state $\rho_{in}=\op{\psi_{in}}{\psi_{in}}$ we want to classify is labeled with 0, where
\begin{align}
\ket{\psi_{in}} := &\otimes_{j=0}^{n-1} R_y(2\pi j/n)\ket{0}.
\end{align}
The chosen shadow circuit consists of a layer of single-qubit $R_y$ rotations and a layer of CNOT gates which only connects the adjacent qubits, followed by another layer of $R_y$ rotations. Then, we compute the loss landscape in Eq. \eqref{eq:def_loss} with regard to the first two circuit parameters by fixing all the other parameters with $\pi/4$ and setting the bias $b=0$ and $\bm{w} \sim \mathcal{N}(\bm{0},\mathbb I)$ sampled from a Gaussian distribution.
The result, as shown in Fig. \ref{fig:bp_landscape1}, is in line with the above analysis, i.e., there is no barren plateaus with $n_{qsc}=2$, but the loss landscape shrinks dramatically with an increasing $n_{qsc}$.

\begin{figure}[t]
\centering
%%% ===== subfig-1 ===== 
\subfigure[$(2,2)$]{\label{fig:bp_n2qsc2}
\includegraphics[width=0.1\textwidth,height=1.5cm]{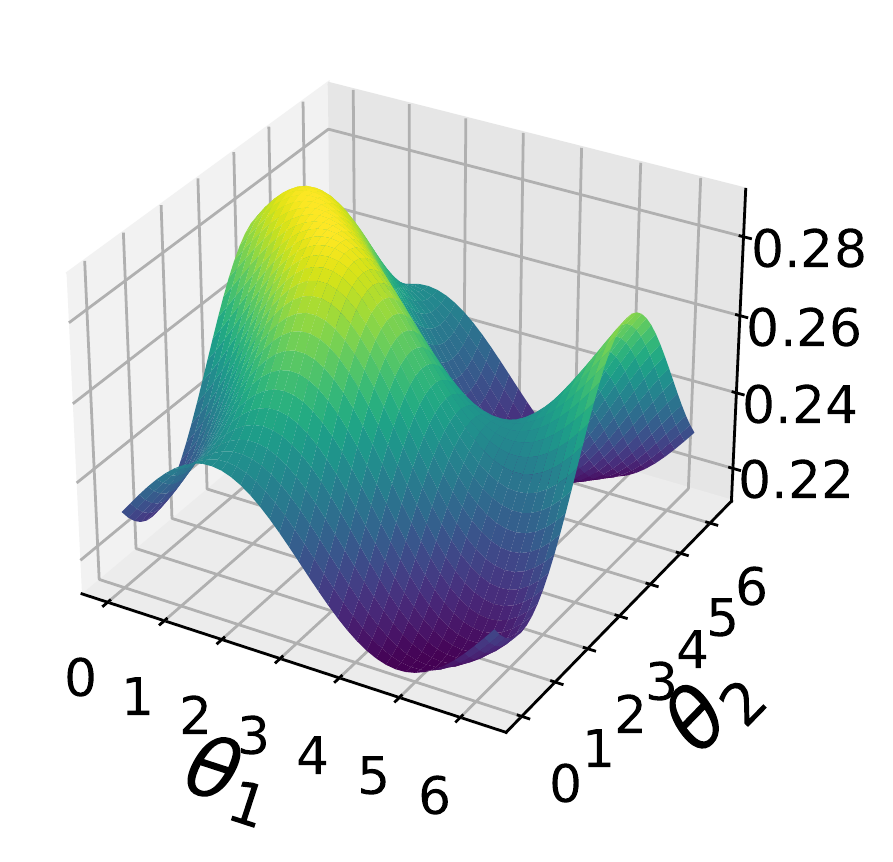}}
%%% ===== subfig-2 ===== 
\subfigure[$(10,2)$]{\label{fig:bp_n10qsc2}
\includegraphics[width=0.11\textwidth,height=1.5cm]{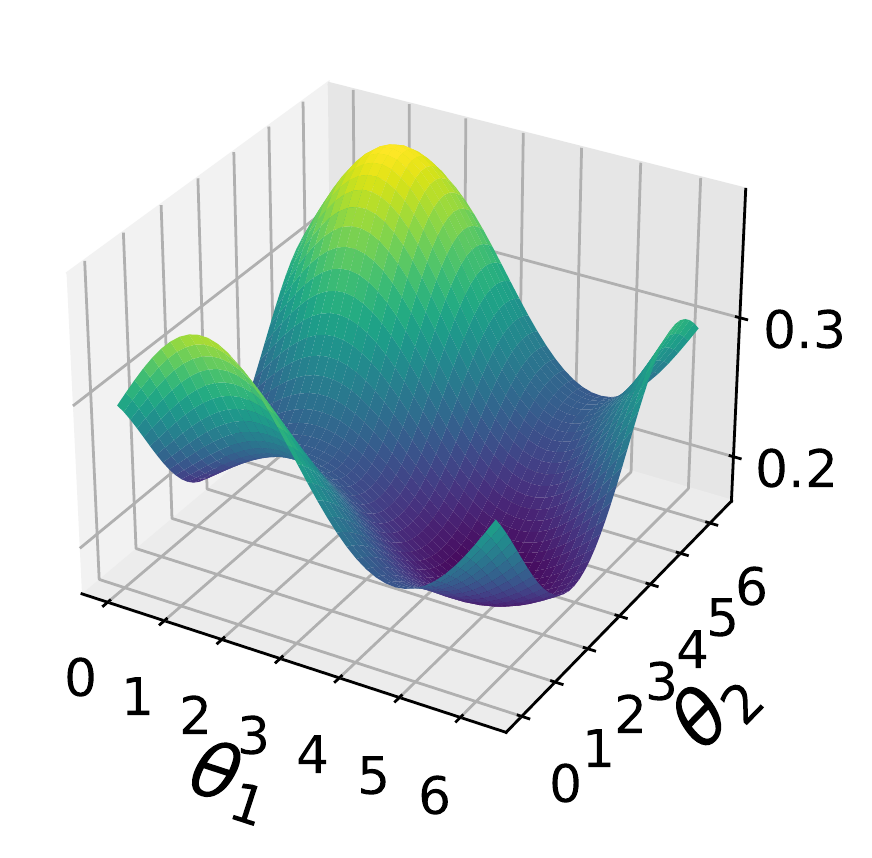}}
%%% ===== subfig-3 ===== 
\subfigure[$(20,2)$]{\label{fig:bp_n20qsc2}
\includegraphics[width=0.11\textwidth,height=1.5cm]{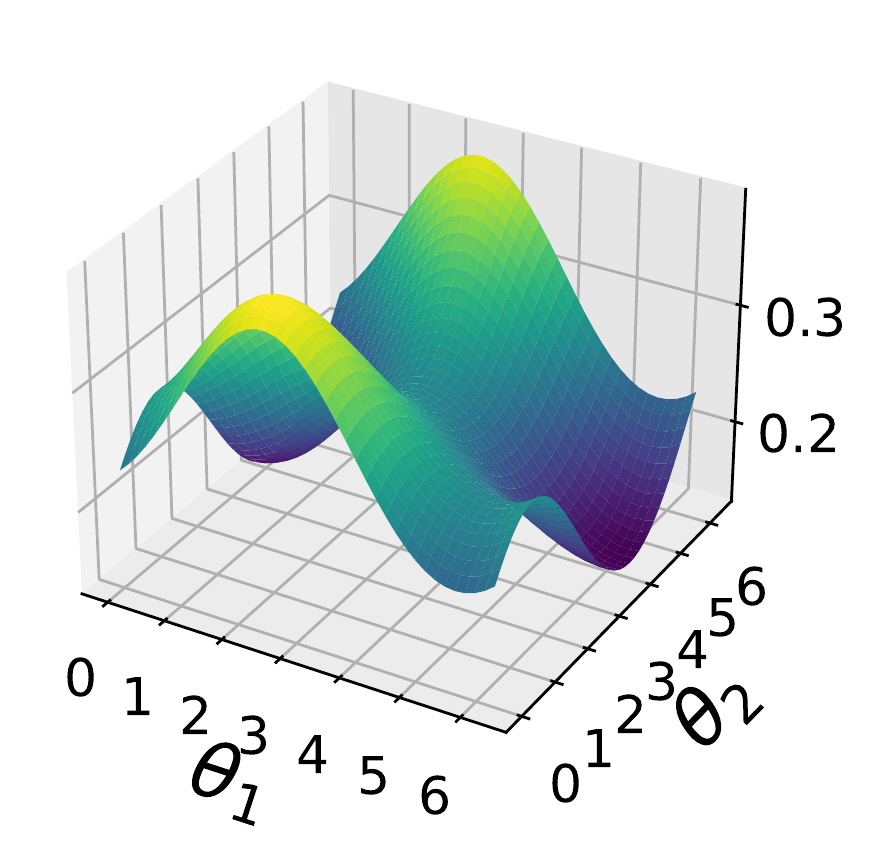}}
%%% ===== subfig-4 ===== 
\subfigure[$(100,2)$]{\label{fig:bp_n100nqsc2}
\includegraphics[width=0.11\textwidth,height=1.5cm]{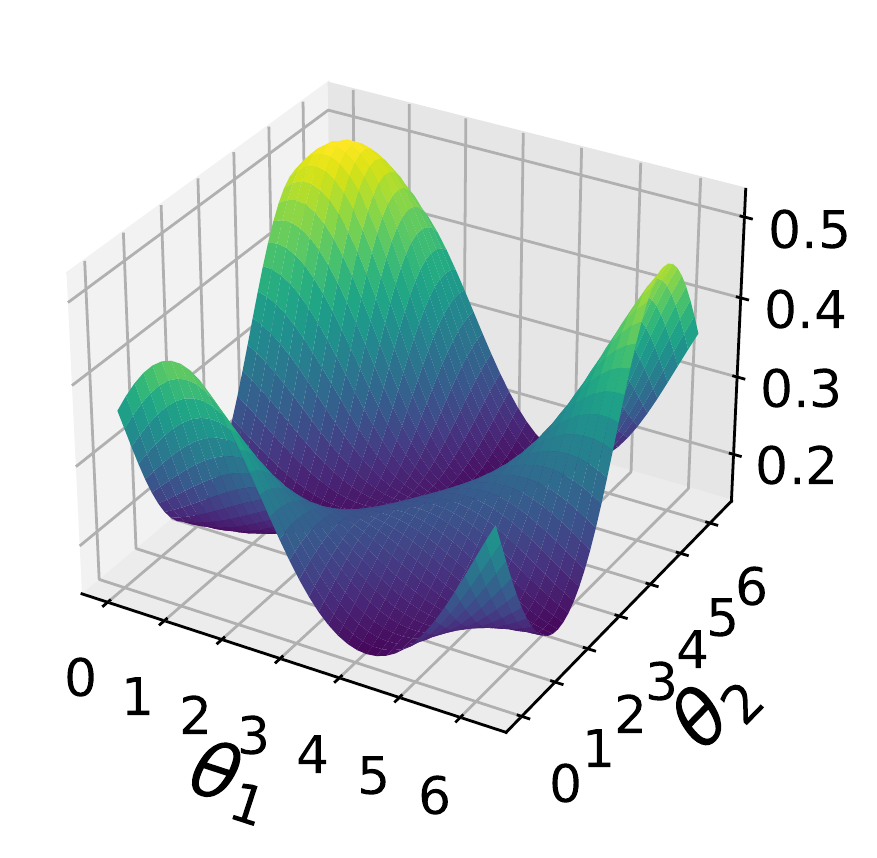}}
%%% ===== subfig-5 ===== 
\subfigure[$(10,2)$]{\label{fig:bp_n10nqsc2}
\includegraphics[width=0.1\textwidth,height=1.5cm]{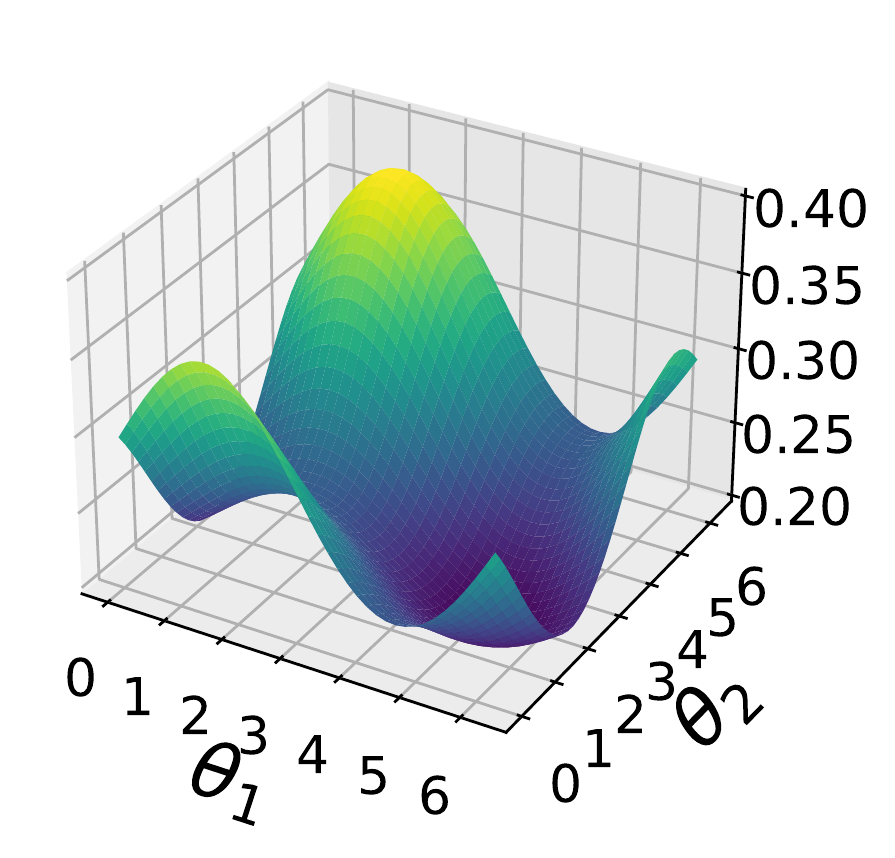}}
%%% ===== subfig-6 ===== 
\subfigure[$(10,4)$]{\label{fig:bp_n10qsc4}
\includegraphics[width=0.11\textwidth,height=1.5cm]{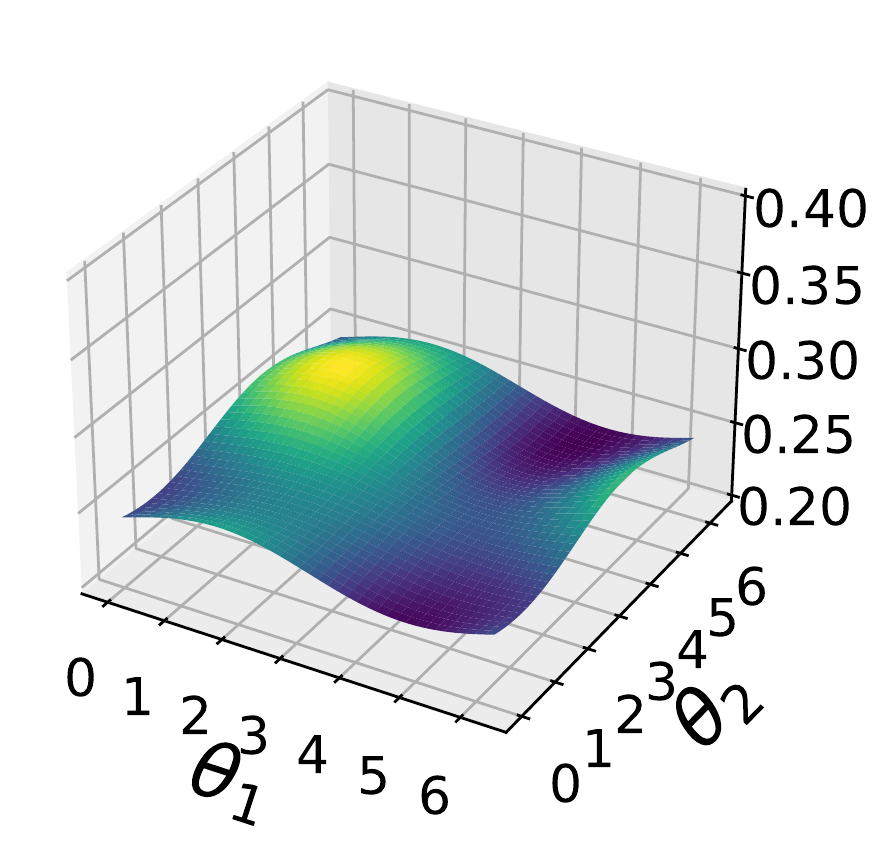}}
%%% ===== subfig-7 ===== 
\subfigure[$(10,6)$]{\label{fig:bp_n10qsc6}
\includegraphics[width=0.11\textwidth,height=1.5cm]{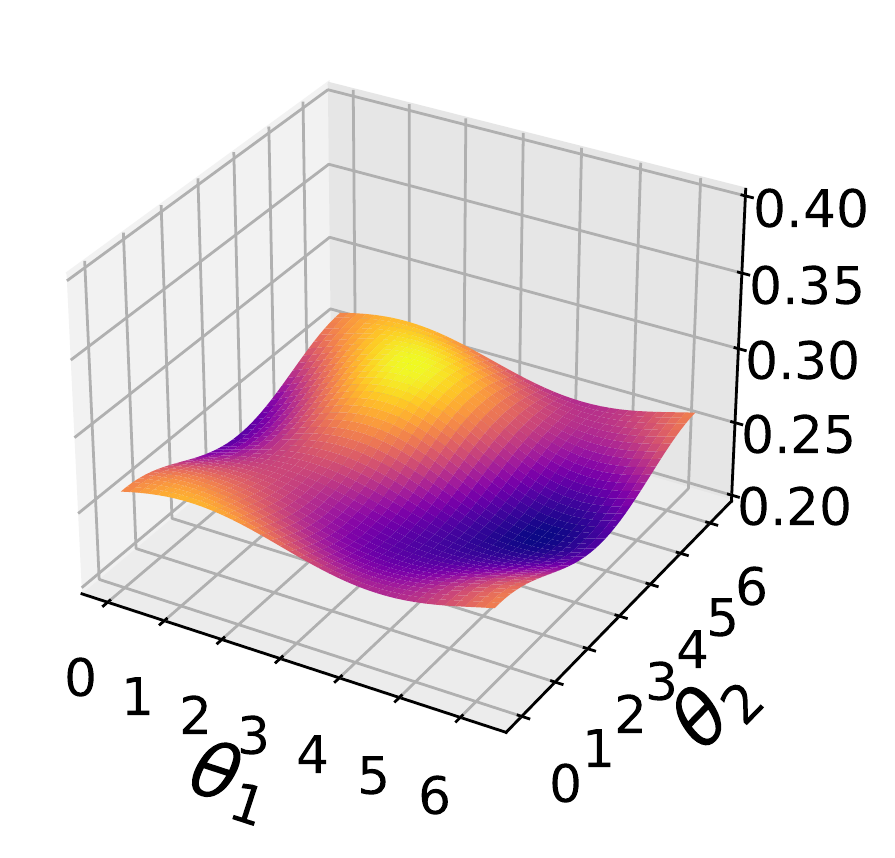}}
%%% ===== subfig-8 ===== 
\subfigure[$(10,10)$]{\label{fig:bp_n10qsc10}
\includegraphics[width=0.11\textwidth,height=1.5cm]{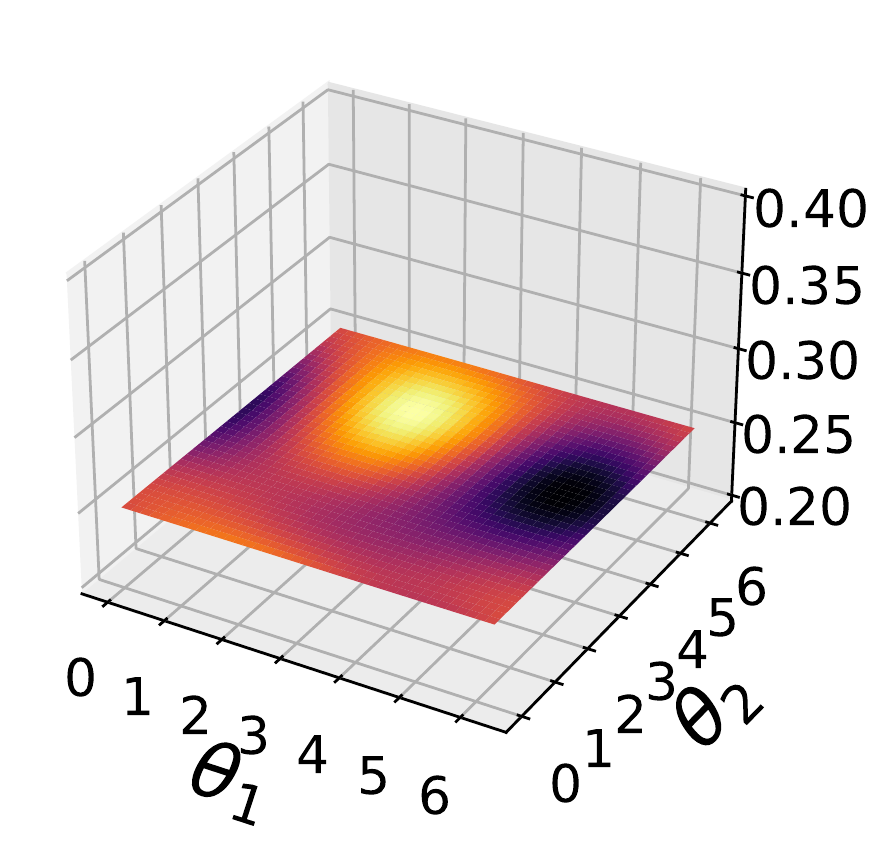}}
\caption{The slice of loss landscape with respect to the first two circuit parameters by changing the system size $n$ and operating scope $n_{qsc}$. Here, the binary list represents $(n,n_{qsc})$.}
%\caption{The loss landscape with respect to the first two parameters in Fig. \ref{fig:ansatz_mnist_binary} under different $n$ and $n_{qsc}$. (a)-(d) have the same $n_{qsc}=2$ but different $n$; (e)-(h) have the same $n=10$ but different $n_{qsc}$.}
\label{fig:bp_landscape1}
\end{figure}

\section{Numerical Experiments}
\label{sec:appli_VSQL}

We supplement our theoretical results with numerical experiments by classical simulation of VSQL. Specifically, our numerical experiments include distinguishing two (and three) families of $2$-qubit quantum states and classifying handwritten digit images taken from the MNIST data set.
We also conduct experiments on classifying noisy quantum states to exhibit the robustness of VSQL, which is deferred to Appendix due to limited space.
All the simulations and optimization loop are implemented via Paddle Quantum\footnote{https://github.com/paddlepaddle/Quantum} on the PaddlePaddle Deep Learning Platform~\cite{Ma2019p}.
% All the simulations and optimization loop are implemented via Paddle Quantum~\cite{Paddlequantum} on the PaddlePaddle Deep Learning Platform~\cite{Paddle,Ma2019p}.

\subsection{Classification of quantum states}
Quantum state discrimination (QSD) is a fundamental problem in quantum information and it underlies various applications~\cite{Nielsen2002,Barnett2009,Bae2015}. The goal of QSD is to determine which state a given physical system has been prepared. In general, finite copies of the given states are considered in the study of distinguishing quantum states~\cite{Lloyd2020,Gambs2008}.

\subsubsection{ Classification of binary quantum states}
We choose two canonical families of non-orthogonal 2-qubit quantum states as a proof of principle. These states are well-studied in Refs. \cite{Mohseni2004,Chen2018,Patterson2019}, and are parametrized by real numbers $u$ and $v$. Here, we use the Dirac (bra-ket) notation to represent the quantum states as
%The quantum states to be distinguished, we choose here, are two canonical families of non-orthogonal 2-qubit quantum states, which are analyzed in detail as well in \cite{Mohseni2004,Chen2018,Patterson2019}. These two families of states are identified by $u$ and $v$:
% \begin{small}
\begin{align}
    \ket{\psi_u} &= [\sqrt{1-u^2}, 0, u, 0]^\top,   \label{eq:psi_u} \\
    \ket{\psi_{v \pm}} &= [0, \pm \sqrt{1-v^2}, v, 0]^\top,   \label{eq:psi_v}
\end{align}
% \end{small}
where %the two real hyper-parameters 
$u,v \in [0, 1]$.
Then, we can write this two sets of quantum states as a mixed quantum state $\rho$:
\begin{small}
\begin{align}
\label{eq:quantum_data_set}
    \rho(u,v)\! :=\!  q_1 \underbrace{\op{\psi_u}{\psi_u}}_{\rho_1(u)} \!+
     \frac{q_2}{2} (\underbrace{\op{\psi_{v+}}{\psi_{v+}}
     \! +\! \op{\psi_{v-}}{\psi_{v-}}}_{\rho_2(v)}),
\end{align}
\end{small}
with probability distribution $\{q_1\!=\!\frac{1}{3}, q_2\!=\!\frac{2}{3}\}$. These choices are consistent with the existing literature \cite{Mohseni2004,Chen2018}. 
% and $\rho_2=\op{\psi_v}{\psi_v} := \frac{1}{2}\lr{ \op{\psi_{v+}}{\psi_{v+}} +\op{\psi_{v-}}{\psi_{v-}}}$, where the sign $v+$ and $v-$ are corresponding to the sign $\pm$ of the second component of $\ket{\psi_v}$  in Eq. \eqref{eq:psi_v}.

\subsubsection{Theoretical distinguishability}
We first analyze the ability of our method for classifying these two families of quantum states. The result is summarized in Theorem \ref{thm:distinguish_psi_uv}.
\begin{thm}\label{thm:distinguish_psi_uv}
Given two families of non-orthogonal 2-qubit quantum states, shown in Eq. \eqref{eq:quantum_data_set}, and each has multiple copies. VSQL could exactly distinguish them by using only one shadow circuit, which consists of only one $R_y$ rotation gate applied on a 1-local qubit.
\end{thm}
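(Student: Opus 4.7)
My plan is to reduce the claim to Corollary~\ref{coro:classi_ability_partial_trace} and then verify its hypothesis by a direct calculation. By Theorem~\ref{thm:classi_ability} it suffices to exhibit a single value of $\theta$ for which at least one shadow feature of class~$0$ differs from its class~$1$ counterpart; and because shadow sliding with $n_{qsc}=1$ yields one feature per qubit position, I have two candidates available and only need one of them to separate the classes.

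The first step is to read off the single-qubit reduced density matrices. Noting that $\ket{\psi_u}=(\sqrt{1-u^2}\ket{0}+u\ket{1})\otimes\ket{0}$ is actually a product state, and that the $\pm$-symmetric mixture kills the off-diagonal terms of $\rho_2(v)$, I obtain on qubit~2 the reduced states $\Tr_1[\rho_1(u)]=\ket{0}\!\bra{0}$ and $\Tr_1[\rho_2(v)]=v^2\ket{0}\!\bra{0}+(1-v^2)\ket{1}\!\bra{1}$, and on qubit~1 the reduced states $\Tr_2[\rho_1(u)]=(\sqrt{1-u^2}\ket{0}+u\ket{1})(\sqrt{1-u^2}\bra{0}+u\bra{1})$ and $\Tr_2[\rho_2(v)]=(1-v^2)\ket{0}\!\bra{0}+v^2\ket{1}\!\bra{1}$. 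The second step uses the identity $R_y^\dag(\theta)\,X\,R_y(\theta)=\cos\theta\,X+\sin\theta\,Z$, which at $\theta=\pi/2$ converts the shadow feature into a Pauli-$Z$ expectation on the corresponding reduced state. On qubit~2 these expectations read $1$ for class~$0$ and $2v^2-1$ for class~$1$, differing whenever $v\neq 1$; on qubit~1 they read $1-2u^2$ and $1-2v^2$, differing whenever $u\neq v$. A quick case split then shows that for every $(u,v)\neq(1,1)$ at least one of the two qubit positions separates the classes, while at $(u,v)=(1,1)$ both density matrices collapse to $\ket{10}\!\bra{10}$ and therefore cannot be distinguished by any classifier.

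The main subtlety is conceptual rather than computational: I must explain why a one-parameter $R_y$ family, rather than a generic single-qubit unitary, is already expressive enough. The reason is that all four reduced states above are real in the computational basis, so their distinguishing Bloch vectors lie entirely in the $X$--$Z$ plane, which is precisely the plane swept out by $R_y(\theta)$. Once this observation is articulated, the single choice $\theta=\pi/2$ combined with Corollary~\ref{coro:classi_ability_partial_trace} closes the proof.
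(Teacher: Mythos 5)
Your reduced-density-matrix computations are all correct, and the identity $R_y^\dagger(\theta)\,X\,R_y(\theta)=\cos\theta\,X+\sin\theta\,Z$ evaluated at $\theta=\pi/2$ reproduces exactly the shadow features the paper computes: its Eqs.~\eqref{eq:o_1_u}--\eqref{eq:o_2_v} give $(o_1,o_2)=(1-2u^2,\,1)$ for class $0$ and $(1-2v^2,\,2v^2-1)$ for class $1$ at that angle. Your observation that all four reduced states have Bloch vectors in the $X$--$Z$ plane is also a clean explanation of why a single $R_y$ gate is expressive enough.

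The gap is in the reduction to Theorem~\ref{thm:classi_ability} and Corollary~\ref{coro:classi_ability_partial_trace}. Those statements concern one fixed pair of states, and the proof of Theorem~\ref{thm:classi_ability} chooses the readout parameters as a function of that pair (e.g.\ $b=-(o_1^{(0)}+o_1^{(1)})/2$). Theorem~\ref{thm:distinguish_psi_uv}, by contrast, asserts that a \emph{single} classifier $(\theta,w_1,w_2,b)$ correctly labels \emph{every} $\rho_1(u)$, $u\in[0,1]$, and \emph{every} $\rho_2(v)$, $v\in[0,1]$, simultaneously --- which is how it is used in the experiments, where one classifier is trained on $300$ states sampled across the whole parameter range. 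Your case split only shows that for each fixed $(u,v)\neq(1,1)$ some feature differs, which yields a different separating threshold for each pair; pairwise distinctness of feature vectors does not by itself imply that the two feature \emph{curves} admit a common separating hyperplane. The missing step is to verify linear separability of $\{(1-2u^2,1):u\in[0,1]\}$ and $\{(1-2v^2,2v^2-1):v\in[0,1]\}$ in $\mathbb{R}^2$. It does hold: taking $w_1=-1$, $w_2=-2$, $b=1$ gives $z_u=2u^2-2\le 0$ and $z_v=2-2v^2\ge 0$, with equality only at the degenerate point $u=v=1$ where the two states coincide. Supplying such an explicit uniform choice of $(w_1,w_2,b)$ is exactly what the paper's proof does (via the conditions $w_2<w_1<0$ and $(w_2-w_1)\sin\theta+b=0$); adding that one computation closes your argument.
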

We defer the proof to Appendix. This Theorem shows that VSQL could theoretically distinguish these two different families of quantum states. We further evaluate the performance of VSQL via numerical experiments.

\subsubsection{Experimental setting and results}
300 density matrices with 100 $\rho_1(u)$ (labeled as `0') and 200 $\rho_2(v)$ (labeled as `1') are sampled according to Eq. \eqref{eq:quantum_data_set}, where the parameters $u$ and $v$ are uniformly taken from $[0, 1]$. Then, we randomly select 80\% of them as the training set and the rest 20\% as the validation set. Consistent with the Theorem above, one 1-local shadow circuit, which consists of $R_y$ gate only, is used to extract local features. The parameters of the shadow circuit $\bm{\theta}$ and the FCNN $\{\bm{w}, b\}$ are initialized from the uniform distribution $\text{Uni}[0,2\pi]$ and the Gaussian distribution $\mathcal{N}(\bm{0},\mathbb I)$, respectively. During the optimization loop,  we choose the Adam \cite{Kingma2014} optimizer with a learning rate $\text{LR}=0.03$. 
%During the training process, 
Learning curves for the training loss and the validation accuracy are illustrated in Fig. \ref{fig:qsd_binary10}, where the distinguishability shown coincides with Theorem \ref{thm:distinguish_psi_uv}. We conclude VSQL could perfectly recognize the two families of quantum states defined in  Eq. \eqref{eq:quantum_data_set} after about 700 iterations. We find that the classification task becomes very difficult when $u,v$ are both close to 1. This makes sense because $\rho_1(u=1) \rightarrow \rho_2(v=1)$ on the extreme case. This experimental result highlights the strength of our method. As a comparison, we adjust the sample range such that e.g., $u,v\in [0.1,0.9]$, and the results are shown in Fig. \ref{fig:qsd_binary09}. 
As expected, this modification leads to a faster convergence.

 \begin{figure}[t]
	\centering
	\subfigure[$ u,v\in \text{Uni}\Lr{0,1}$]{\label{fig:qsd_binary10}
		\includegraphics[width=0.22\textwidth]{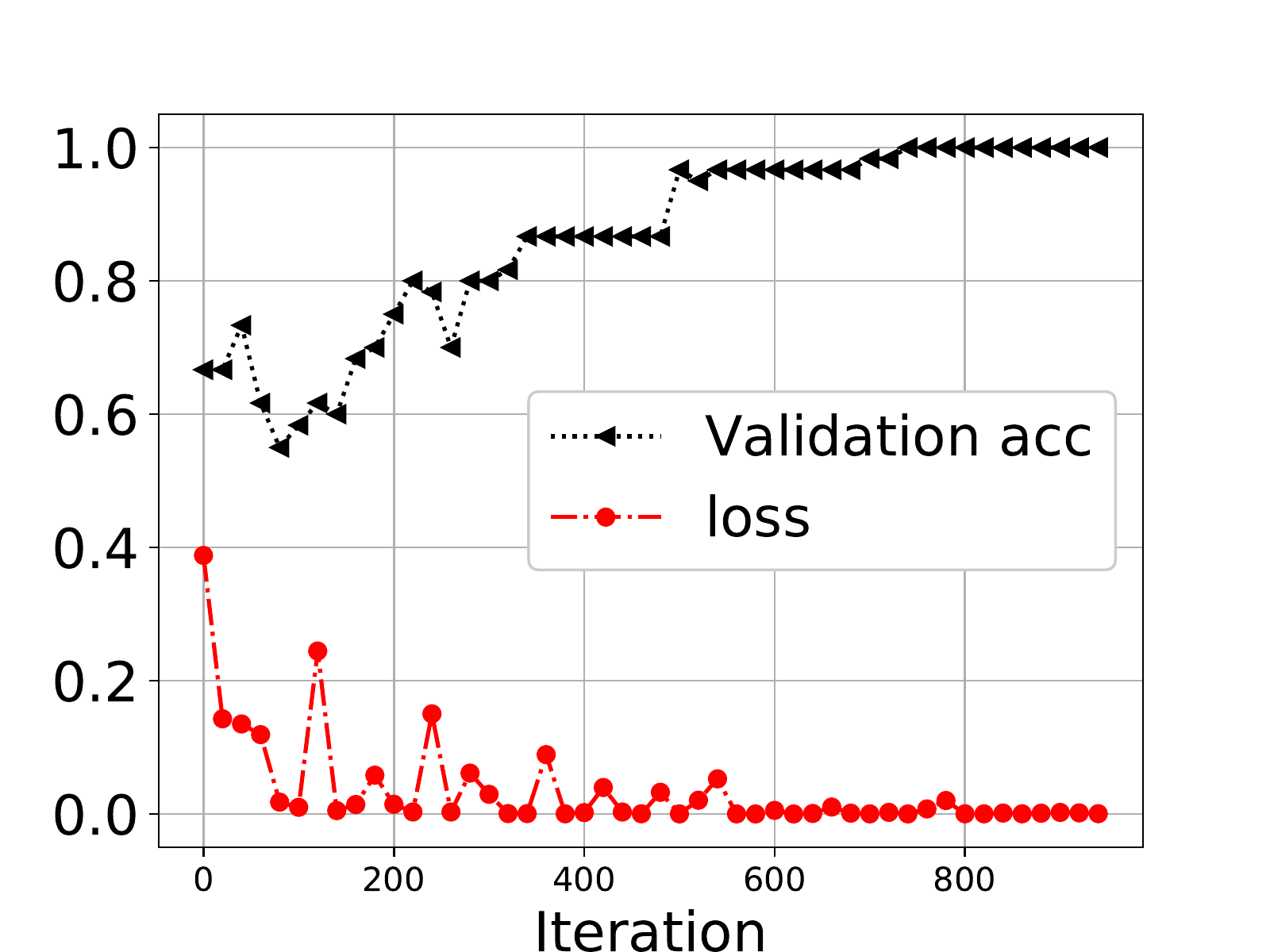}}
	\subfigure[$ u,v\in \text{Uni}\Lr{0.1,0.9}$]{\label{fig:qsd_binary09}
		\includegraphics[width=0.22\textwidth]{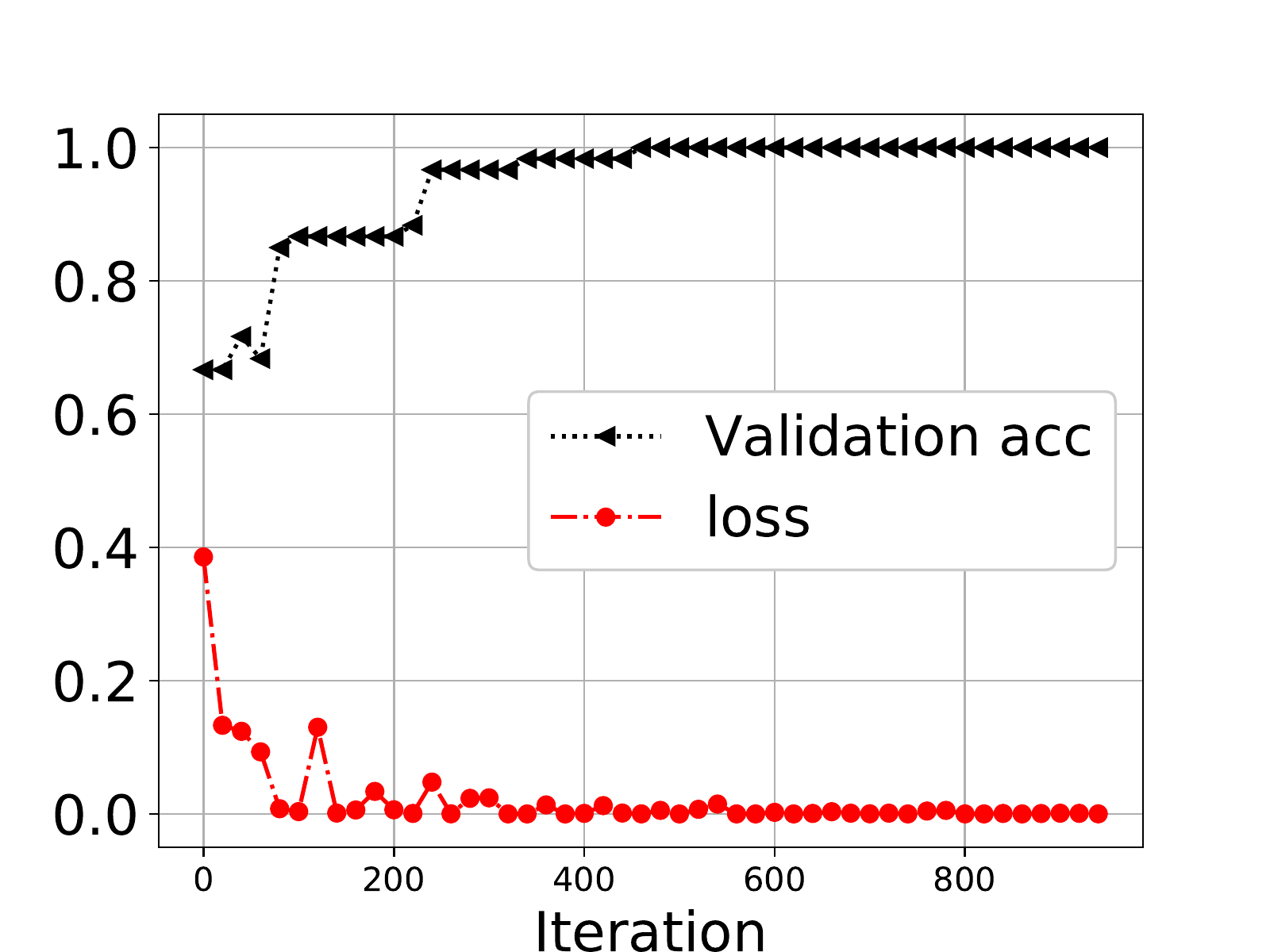}}
	\subfigure[$ u,v,t\in \text{Uni}\Lr{0,1}$]{\label{fig:qsd_multi10}
		\includegraphics[width=0.22\textwidth]{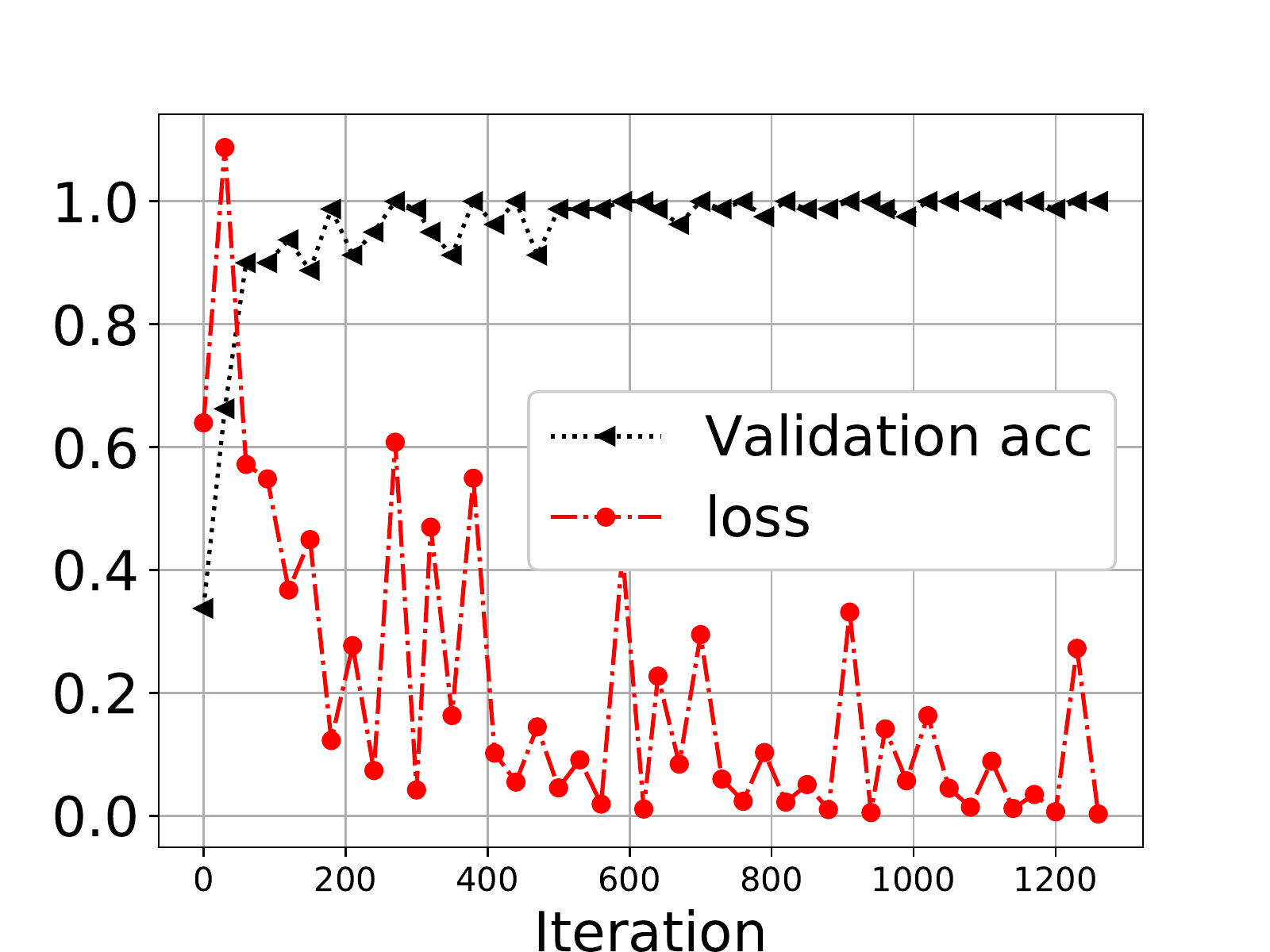}}
	\subfigure[$ u,v,t\in \text{Uni}\Lr{0.1,0.9}$]{\label{fig:qsd_multi09}
		\includegraphics[width=0.22\textwidth]{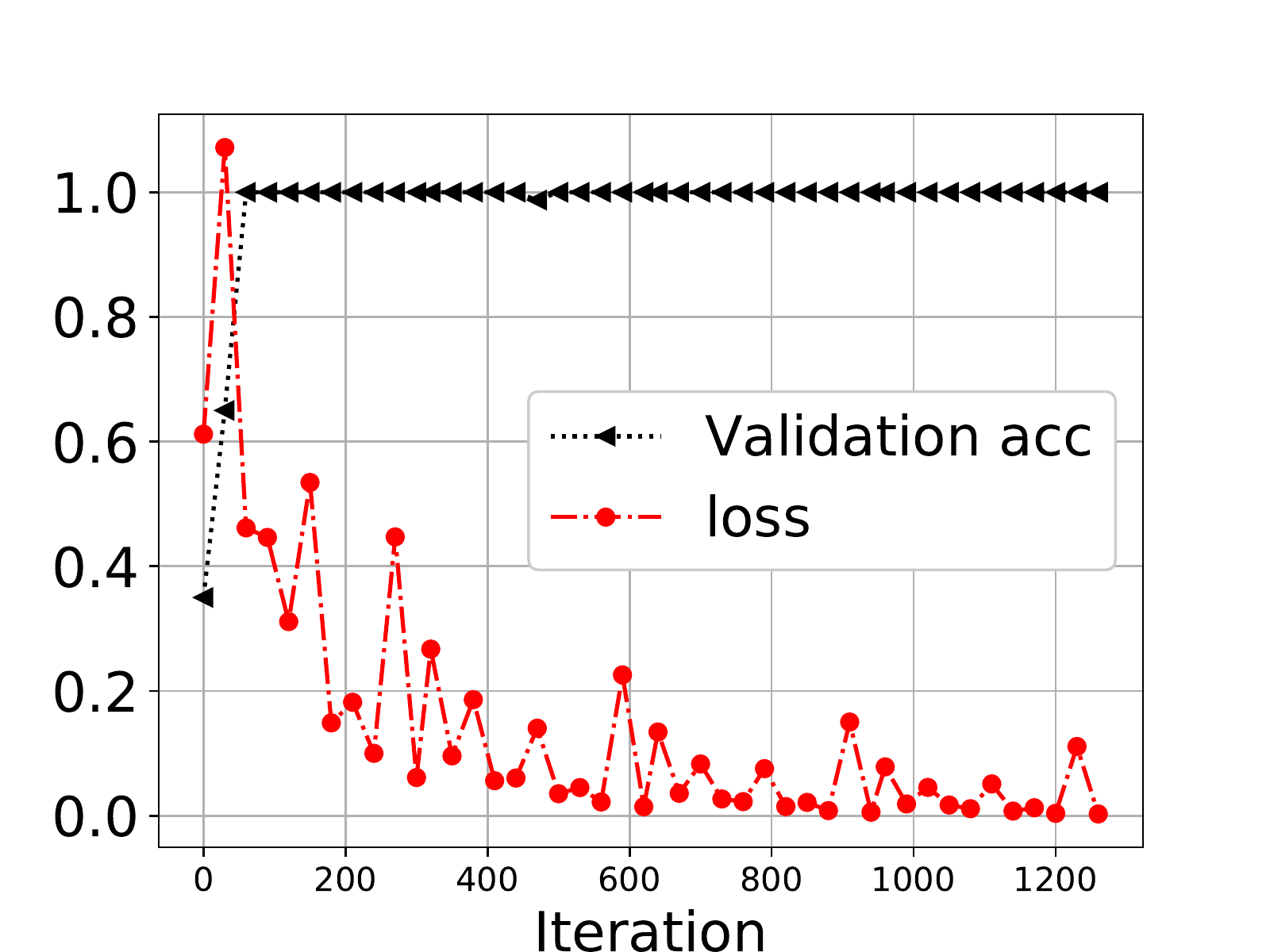}}
	\caption{Learning curves that record the training loss and the validation accuracy of VSQL with different experimental settings. (a) and (b) are binary classification with different parameter range $u,v \in [0,1]$ and $u,v \in [0.1,0.9]$. By adjusting the sample range, the training loss and the validation accuracy reach the optimal values faster. (c) and (d) describe the similar experimental setting but for three-class classification of quantum states.}
	\label{fig:qsd_class}
\end{figure}

\subsubsection{Classification of multi-class quantum states}
We also conduct the experiments for three-class case with 400 density matrices in total, where 
the data set is again taken from Eq. \eqref{eq:quantum_data_set}, but adding the third family $\rho_3(t) = \op{\psi_t}{\psi_t}$ with a new probability distribution $\{q_i\} = \{\frac{1}{4}, \frac{1}{2}, \frac{1}{4} \}$, where 
\begin{align} \label{eq:psi_t}
    \ket{\psi_t} = [\sqrt{1-t^2}, t, 0, 0]^\top, \quad t\in [0,1].
\end{align}
%  $t\in [0,1]$.
The experimental settings are almost the same as the binary case and results are illustrated in Figs. \ref{fig:qsd_multi10}  and \ref{fig:qsd_multi09}.  
We conclude that VSQL also has a strong distinguishability for multi-class quantum states. A complete analysis about this block could be found in Appendix.

\subsection{MNIST classification}
Next, we apply VSQL to classify handwritten digits taken from a public benchmark dataset MNIST \cite{LeCun1998}, which consists of 60,000 train examples and 10,000 test examples.
The MNIST data set contains 10 different classes labeled from `0' to `9'. Each image contains 28 $\times$ 28 gray scale pixels valued in $0\sim 255$. In order to match the input of VSQL, these pictures are normalized and unfolded into 784-dimensional vectors. Then, we expand their dimension to 10-qubit pure quantum states $\{\ket{\psi_i}\}$ (1024-dimensional vectors) via zero-padding and represent them in the density matrix formulation $\{\rho^{(i)}_{in}\} = \{\ket{\psi_i}\bra{\psi_i}\}$. By doing so, the pre-processing is complete and we obtain the training set $\mathcal{D}^{(train)}:=\{(\rho_{in}^{(m)}, y^{(m)})\}_{m=1}^{N_{train}}\subset \mathbb C^{1024\times 1024}\times \mathbb R^{10}$.
We first select two classes (`0' and `1') to verify the binary classification ability of VSQL, which contains 12,665 training samples (5923 0-label and 6742 1-label) and 2115 test samples (980 0-label and 1135 1-label). Then, we use the whole data set to evaluate the 10-class classification performance. 

\subsubsection{Experimental setting}
For the binary case, the 2-local shadow circuit (ansatz) used to extract local features is shown in Fig. \ref{fig:ansatz_mnist_binary}. The number of repetitions of the dashed block structure is denoted as the circuit depth $D$ and this ansatz has $2(D+3)$ parameters in total. The parameters $\bm{\theta}$ and $\{\bm{w},b\}$ are initialized from a uniform distribution in $[0, 2\pi]$ and a Gaussian distribution $\mathcal{N}(\bm{0},\mathbb I)$, respectively. 
During the optimization, we choose the Adam optimizer with a batch size of 20 samples and a learning rate of $\text{LR} =0.02$.
% During the optimization loop, we choose the Adam (Adaptive moment estimation) \cite{Kingma2014} optimizer with a batch size of 20 samples. 
%The learning rate is set to be $\text{LR} =0.02$. 
Each experiment is repeated 10 times to collect the mean accuracy and the corresponding fluctuations.
%For the binary case, the ansatz is chosen as the 2-local shadow circuit shown in Fig. \ref{fig:ansatz_mnist_binary}, where $R_z-R_y-R_z$ represents rotation framework is first selected, followed by repeated two CNOTs and two single-qubit $R_y$ rotations. Assume the number of repeated times is $D$, which is also called depth, then there are totally $2(D+3)$ parameters in the ansatz. Here we shall know that in order for a better performance, we could take multiple ansatzes simultaneously with different parameters. The parameters in the ansatz and FCNN are initialized from a uniform distribution in $[0, 2\pi]$ and a Gaussian distribution $\mathcal{N}(\bm{0},\mathbb I)$, respectively, and updated via ADAM method \cite{Kingma2014} with batches of 20 samples. The learning rate is set as 0.02. Each experiment is running 10 times to obtain the mean and the corresponding fluctuation.
For the 10-class case, the classification task becomes much difficult and hence we choose 4-local shadow circuits to extract shadow features, which can be extended from the 2-local design in Fig. \ref{fig:ansatz_mnist_binary}. There will be $4(D+3)$ parameters in each shadow circuit. All the other settings are identical to the binary case, except for a new batch size of 200 samples.
%For the 10-class case, due to the difficulty of multiple classification task, we employ 4-local shadow circuits, which are direct extensions of 2-local shadow circuits, to extract shadow features. Hence there are $4(D+3)$ parameters in each shadow. All the other settings are the same as the binary case, except the batches changing to 200 samples.

  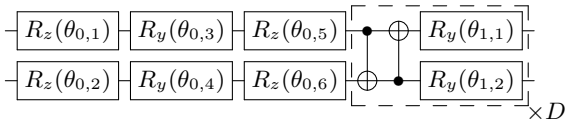
\begin{figure}[t]
  \small
\[\Qcircuit @C=0.5em @R=0.5em {
   & \gate{R_z(\theta_{0,1})} & \gate{R_y(\theta_{0,3})} & \gate{R_z(\theta_{0,5})} &\ctrl{+1}& \targ &  \gate{R_y(\theta_{1,1})}&\qw 
%   &\cdots& &\ctrl{+1}& \targ & \gate{R_y(\theta_{D,1})} &\qw 
   \\ 
   & \gate{R_z(\theta_{0,2})} & \gate{R_y(\theta_{0,4})} & \gate{R_z(\theta_{0,6})} & \targ & \ctrl{-1} &  \gate{R_y(\theta_{1,2})} & \qw 
%   &\cdots& &\targ & \ctrl{-1} & \gate{R_y(\theta_{D,2})}    &\qw 
   \gategroup{1}{5}{2}{7}{0.5em}{--} \\
  &&&&&&& \ \ \ \ \times D 
}\]
\caption{The 2-local shadow circuit design for MNIST classification (binary case). The first part uses $R_z-R_y-R_z$ combination to represent general rotations on each single-qubit subspace. The followed repeated block consists of CNOT gates  and two single-qubit $R_y$ rotations. The block circuit in the dashed box is repeated $D$ times to extend the expressive power of quantum circuits.}
\label{fig:ansatz_mnist_binary}
\end{figure}

\subsubsection{Results}
The results for the binary case are summarized in Table \ref{table:result_mnist_binary}. Our method VSQL easily achieves an average test accuracy above 99\% with only $n_s =1$ shadow circuit and depth $D = 1$, which has 8 rotation angles in the shadow circuit and 9 weights and 1 bias in FCNN. This result demonstrates the powerful capacity of VSQL to classify handwritten digits. By adding another shadow circuit to $n_s = 2$ with 35 parameters, one could obtain an average test accuracy above 99.5\%. As a comparison, we list the results of existing methods: Circuit-centric classifier \cite{Schuld2018} and QNN classifier \cite{Farhi2018}. Our method outperforms these variational quantum classifiers in terms of the number of parameters and test accuracy. Here, we should note that the details of their data preprocessing are slightly different from us, i.e., the Circuit-centric classifier uses the MNIST256 dataset with an 8-qubit quantum system, and the QNN classifier uses a $4\times 4$ downsampled version of the MNSIT dataset with a 17-qubit quantum system. 

%Here we merely give a rough reference and conclude that VSQL could obtain a higher accuracy while requiring fewer parameters.

% \linespread{1.5}
\begin{table}[t]
\small
%\caption{MNIST binary classification}
\centering
\begin{tabular}{c|c|c|c|c}
\toprule
 Methods & $n_s$ & $D$  & \# Ps & Test acc (\%) \\ \midrule
%  Circuit-centric classifier
 \cite{Schuld2018} &  /  & /  & 124   &   96.70  \\ \midrule
%  QNN classifier
 \cite{Farhi2018} &  /  &  / & 96   &    98.00  \\ \midrule
%  Hierarchical classifier \cite{Grant2018} &  /  &  / &  > 66   &    99.74 $\pm$ 0.02  \\ \midrule
 \multirow{2}{*}{VSQL (this paper)} & \multirow{1}{*}{1} & 1       &            18       & 99.43 $\pm$ 0.14 \\ \cline{2-5} 
        & \multirow{1}{*}{2} & 1   &    35             &  \textbf{99.52 $\pm$ 0.18} \\ \toprule
\end{tabular}
% \linespread{1.0}
\caption{Summary of the existing variational quantum classifiers on MNIST binary classification. VSQL outperforms other classifiers in terms of number of parameters and test accuracy by reaching 99.52\% average test accuracy among 10 random experiments. \# Ps denotes \# Params.}
\label{table:result_mnist_binary}
\end{table}
% \linespread{1.0}

For multi-class classification, it is rarely discussed and tested in the literature of variational quantum classifiers. The one-vs.-all method is mentioned in \citeauthor{Schuld2018} but troublesome to implement. Therefore, we only compare the performance of VSQL with a single-layered classical neural network (NN). The experimental settings of the classical neural network are similar to VSQL, and it contains 7840 weights and 10 biases to map the 784-dimension input vectors to 10-dimensional output vectors. The results are summarized in Table \ref{table:result_mnist_10_class}. When using 9 different shadow circuits with each circuit depth $D=5$, VSQL could reach almost the same test accuracy with the single layer NN, but requiring much fewer parameters. Although this accuracy is not quite satisfied, it can still compete with the simplest classical NN. Notably, we find that if we select 1k samples (about 100 samples for each class) for training from 60k examples and choose the same size of test examples (10k), VSQL could achieve a higher test accuracy than NN (cf. the bottom half of Table \ref{table:result_mnist_10_class}). The above finding indicates that VSQL could extract high-level features from fewer training samples than NN, which may be a potential advantage of VSQL for future practical applications in the NISQ era.

%For the 10-class classification tasks, there are few other variational quantum algorithms can do this, therefore, we only compare VSQL with single layer neural network (NN). The experimental settings of the neural network are similar to VSQL and it has 7840 weights and 10 biases to map the 784-dimension input vectors to 10-dimension output vectors. The results are concluded in the top half of Table \ref{table:result_mnist_10_class}. When using 9 shadows with depth $D=5$, VSQL could reach almost the same test accuracy with the single layer NN, but requiring much less number of parameters. Although this accuracy is not quite satisfied, it can still compete with the simplest classical NN. What's more, an interesting experimental result is that if we just select 1k samples (about 100 samples for each class) for training from totally 60k examples and the test 10k examples are not changing, then VSQL could achieve a higher test accuracy than NN (cf. the bottom half of Table \ref{table:result_mnist_10_class}). This indicates that VSQL could extract high-level features from very fewer training samples than NN, which maybe a potential advantage of VSQL for future practical applications in the NISQ era.

% \linespread{1.5}
\begin{table}[t]
\small
%\caption{MNIST 10-class classification}
\centering
\begin{tabular}{c|c|c|c|c}
\toprule
 Methods & $n_s$ & $D$  & \# Ps & Test acc (\%)  \\\midrule
  NN (60k samples) &  /    & 1 & 7850  &  \textbf{91.63 $\pm$ 0.15} \\\midrule       
  VSQL (60k samples) & 5  & 5 & 520 &   87.69 $\pm$ 0.98      \\\midrule
  VSQL (60k samples) & 9  & 5 & 928 &   91.13 $\pm$ 0.51     \\\midrule
  &&&& \\\midrule
      NN (1k samples) & /  & 1 & 7850  &  86.36 $\pm$ 0.23 \\\midrule 
  VSQL (1k samples) & 5  & 5 & 520 &   83.92 $\pm$ 1.20      \\\midrule
  VSQL (1k samples) & 9  & 5 & 928 &   \textbf{87.39 $\pm$ 0.40}     \\ \toprule
\end{tabular}
% \linespread{1.0}

\caption{MNIST 10-class classification}
\label{table:result_mnist_10_class}
\end{table}

\section{Discussions}
\label{sec:conclusion}

We propose the VSQL framework, which adopts the idea of classical shadows to distinguish quantum data.  With theoretical justifications and numerical experiments, we have shown that VSQL for classification outperforms many other variational classifiers on the benchmark test of binary MNIST handwritten digit recognition with much fewer network parameters. In particular, in our framework, less noise will be introduced during the quantum-classical hybrid information processing as the number of quantum gates used is independent of the problem size. Simultaneously, by sampling a slice of the loss landscape, we briefly introduce the barren plateau problem and show the solution to escape from it. Notably, by adjusting the operating scope of shadow circuits, our approach can be easily implemented on the existing quantum devices with topological connectivity limits.

We believe that VSQL would open the possibility of many future directions. For example, it would be interesting to explore the applications on VSQL for generative models and unsupervised quantum machine learning tasks such as clustering. 
Furthermore, the online learning version of VSQL may also be a good future direction, see \cite{aaronson2018online,chen2020more,yang2020revisiting}.
We also expect that VSQL may shed light on other quantum applications on near-term quantum devices.

\section*{Acknowledgements}
We would like to thank Prof. Sanjiang Li, Prof. Yuan Feng and Youle Wang for helpful discussions.
G. L. acknowledges the support from the Baidu-UTS AI Meets Quantum project and the China Scholarship Council (No. 201806070139).  This work was partly supported by the Australian Research Council (Grant No: DP180100691).
This work was done when Z. S. was a research intern at Baidu Research.

\bibliography{ref}

% \begin{linenumbers}

% \appendix

\newpage

\onecolumn
\begin{center}
{\textbf{\Large Supplementary Material for ``VSQL: Variational Shadow Quantum Learning for Classification''}}
\end{center}
% \twocolumngrid

\renewcommand{\theequation}{S\arabic{equation}}
\renewcommand{\thethm}{S\arabic{thm}}
\renewcommand{\thefigure}{S\arabic{figure}}
\renewcommand{\thealgorithm}{S\arabic{algorithm}}
\setcounter{equation}{0}
\setcounter{figure}{0}
\setcounter{table}{0}
\setcounter{section}{0}
\setcounter{algorithm}{0}
\setcounter{thm}{0}

\vspace{2ex}
The outline of this supplementary material is as follows:
\vspace{1ex}

\begin{tabular}{l}
\vspace{1ex}
  \hspace{2em}  $\ast$ Algorithm of VSQL for binary classification: the inference process \\
  \vspace{1ex} \hspace{2em}  $\ast$    Further theoretical performance analysis about the number of repetitions of the shadow circuit \\
\vspace{1ex}    \hspace{2em} $\ast$ Further experimental results, including distinguishing noisy quantum states.\\
\vspace{1ex}    \hspace{2em} $\ast$  VSQL for multi-label classification \\
    \hspace{2em} $\ast$ Proof details for Theorems \ref{thm:classi_ability}, \ref{thm:n_1_n_2_lcoal_shadow} and \ref{thm:distinguish_psi_uv}, Proposition \ref{prop:barren_VSQL}.
\end{tabular}
\linespread{1.0}

\vspace{4ex}
\section{Algorithm of VSQL for binary classification: the inference process}

\begin{algorithm}[ht]
\caption{Variational shadow quantum learning (VSQL) for binary classification: the inference process}
 \label{alg:VSQL_binary_inference}
\begin{algorithmic}[1]
\REQUIRE The test data set $\mathcal{D}^{(test)}:=\{(\rho_{in}^{(m)}, y^{(m)}\in\{0,1\})\}_{m=1}^{N_{test}}$, the parameters $\bm{\theta}$, $\bm{w}$ and $b$ from the training process
\ENSURE The list of predicted labels and the test accuracy
\STATE Set the counter $n\_c = 0$, denoting the number of correct predicted labels
\FOR{$m = 1,\ldots, N_{test}$} 
\STATE Apply multi-times the shadow circuit $U(\bm{\theta})$ to the input density matrix $\rho_{in}^{(m)}$
\STATE Measure and estimate a series of expectations $\langle X\otimes X \rangle$, recorded as $o_i$'s
\STATE Feed these shadow features $o_i$'s into the classical neural network and obtain the output $\hat{y}^{(m)}\in [0,1]$
\IF{$\hat{y}^{(m)}\le 0.5$}
\STATE Set the predicted label as `0'
\ELSE 
\STATE Set the predicted label as `1'
\ENDIF
\IF{the predicted label == $y^{(m)}$}
\STATE $n\_c = n\_c + 1$
\ENDIF
\ENDFOR
\STATE Compute the test accuracy as $n\_c/N_{test}$
\end{algorithmic}
\end{algorithm}

\section{Further theoretical performance analysis }

\subsection{Number of repetitions for computing each shadow feature}

As we need to repeat the shadow circuits multiple times to estimate the shadow features, here we give the number of repetitions required in VSQL.
\begin{prop}\label{prop:num_repet}
Given a precision $\epsilon$, the number of repetitions of the shadow circuit for computing each shadow feature at error $\epsilon$, with probability at least $1-\eta$,  scales as $O\lr{\log({1}/{\eta})/{\epsilon^2}}$.
\end{prop}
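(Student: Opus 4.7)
The plan is to treat the estimation of each shadow feature as a standard empirical mean problem and invoke a concentration inequality of Hoeffding/Chernoff type. The key observation is that the shadow feature $o_i$ defined in Eq.~\eqref{eq:def_observ} is the expectation value of the Pauli product observable $O=X\otimes\cdots\otimes X$ on the state $U(\bm\theta)\rho_i U^\dag(\bm\theta)$, and this observable has spectrum contained in $\{-1,+1\}$. Hence a single experimental run (prepare $\rho_{in}^{(m)}$, apply the shadow circuit, measure in the appropriate basis) yields a $\{-1,+1\}$-valued random variable $Z_k$ whose mean is exactly $o_i$.

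First I would set up the estimator $\hat o_i = \tfrac{1}{K}\sum_{k=1}^K Z_k$, with the $Z_k$ i.i.d.\ copies of the measurement outcome. This is an unbiased estimator of $o_i$. Next I would apply Hoeffding's inequality to bounded random variables taking values in $[-1,1]$:
\begin{align}
\Pr\!\left[\,|\hat o_i - o_i| \ge \epsilon\,\right] \;\le\; 2\exp\!\left(-\tfrac{K\epsilon^2}{2}\right).
\end{align}
Setting the right-hand side to be at most $\eta$ and solving gives $K \ge \tfrac{2}{\epsilon^2}\ln(2/\eta)$, which immediately yields the claimed scaling $K = O\!\left(\log(1/\eta)/\epsilon^2\right)$.

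The only subtlety worth mentioning is that in the actual pipeline one may wish to measure the product $X\otimes\cdots\otimes X$ by measuring each qubit in the $X$-basis and taking the product of the outcomes; this does not change the analysis, since the product still lies in $\{-1,+1\}$ and its expectation is still $o_i$. There is essentially no hard part here: the result is a direct corollary of Hoeffding's inequality once one recognizes that the estimator is an average of $\pm 1$ random variables. If one preferred a multiplicative Chernoff bound, a variance-based (Bernstein) refinement could replace the factor $2$ in the exponent, but the asymptotic scaling $O(\log(1/\eta)/\epsilon^2)$ is unchanged.
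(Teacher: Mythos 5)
Your proof is correct and follows the same route as the paper, which simply notes that the proposition ``is directly derived from the Chernoff--Hoeffding theorem'': you recognize the shadow feature as the mean of i.i.d.\ $\pm 1$ measurement outcomes and apply Hoeffding's inequality to get $K = O\lr{\log(1/\eta)/\epsilon^2}$. You actually supply more detail than the paper does, and the details are all sound.
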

This proposition is directly derived from the Chernoff–Hoeffding theorem \cite{Hoeffding1963}.
Furthermore, by utilizing these estimated shadow features, VSQL outputs the prediction value $\hat{y}$ and gives a label according to the following prediction rule
\begin{align}\label{eq:prediction_rule}
    \text{predicted label} = 
    \begin{cases}
    0, & \hat{y}<0.5 \\
    1, & \hat{y}\ge 0.5.
    \end{cases}
\end{align}
Therefore, in the inference process of VSQL, for an input state with the label $y \in \{0,1\}$,
if the predicted label is correct and the gap between the prediction value and 0.5 is $\tau$ under an infinite number of repetitions of the shadow circuits, then the actual number of repetitions, required to ensure that the input state is not misclassified, will be related to the gap $\tau$.

\begin{prop}\label{prop:num_repetitions_gap}
For an $n$-qubit quantum system, if we use $n_s$ shadow circuits and assume the final weights $w_i$ of the neural networks in VSQL are bounded as $|w_i|\le  C_w$, and the prediction gap is $\tau\in (0,0.5)$, then the actual number of repetitions for computing each shadow feature, with probability at least $1-\eta$, scales as  $O\lr{{n_s^2n^2C_w^2}\log({1}/{\eta})/{\tau^2}}$.
\end{prop}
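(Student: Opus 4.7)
The plan is to decompose the total error from finite sampling into three controllable pieces: (i) a per-feature concentration error controlled by Hoeffding's inequality, (ii) a linear aggregation over all shadow features into the pre-activation $z := \sum_i w_i o_i + b$, and (iii) a Lipschitz-plus-threshold argument translating a deviation in $z$ into a misclassification event. The key observation is that the prediction label only changes when the estimated prediction $\tilde{y}$ crosses the threshold $0.5$, so a deviation bound in the output of the sigmoid is what we ultimately need. Since the infinite-sample prediction satisfies $|\hat{y}^\star - 0.5| = \tau$, it suffices to make the per-sample deviation $|\tilde{y} - \hat{y}^\star|$ strictly less than $\tau$ with probability at least $1 - \eta$.

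First I would count shadow features: sliding $n_s$ shadow circuits yields $N := n_s(n - n_{qsc} + 1) \le n_s n$ features. For a fixed shadow feature $o_i$, each shot of the Pauli-$(X \otimes \cdots \otimes X)$ measurement returns an independent $\pm 1$-valued random variable whose mean equals $o_i$. Hoeffding's inequality then gives, for any $\epsilon > 0$,
\begin{align}
\Pr\bigl[\,|\tilde{o}_i - o_i| \ge \epsilon\,\bigr] \le 2 \exp\bigl(-R\epsilon^2/2\bigr),
\end{align}
where $R$ is the number of repetitions. Applying a union bound over all $N$ shadow features and requiring the total failure probability to be at most $\eta$ yields $R \ge (2/\epsilon^2)\log(2N/\eta)$, i.e.\ $R = O\bigl(\log(N/\eta)/\epsilon^2\bigr)$ suffices so that $\max_i |\tilde{o}_i - o_i| \le \epsilon$ with probability $\ge 1 - \eta$.

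Next I would propagate this bound through the network. Conditioned on the good event above, the pre-activation deviates by at most $|\tilde{z} - z| \le \sum_i |w_i|\,|\tilde{o}_i - o_i| \le N C_w \epsilon$. Using the fact that $\sigma$ is $\tfrac{1}{4}$-Lipschitz (its derivative is maximized at $0$ with value $1/4$), the output deviation obeys $|\tilde{y} - \hat{y}^\star| \le \tfrac{1}{4}|\tilde{z} - z| \le \tfrac{1}{4} N C_w \epsilon$. For the predicted label to be preserved it is enough to impose $|\tilde{y} - \hat{y}^\star| < \tau$, so I would choose
\begin{align}
\epsilon < \frac{4\tau}{N C_w}.
\end{align}
Substituting this back into the Hoeffding bound gives $R = O\bigl(N^2 C_w^2 \log(N/\eta)/\tau^2\bigr) = O\bigl(n_s^2 n^2 C_w^2 \log(1/\eta)/\tau^2\bigr)$ after absorbing the $\log(N) = O(\log(n_s n))$ factor into a slightly larger constant, establishing the claim.

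The main obstacle is the middle step, namely converting a gap $\tau$ measured in the post-sigmoid output into a usable tolerance on the linear sum $\tilde{z}$. Here the $1/4$-Lipschitz constant of the sigmoid gives the cleanest bound and is essentially tight near the threshold $\hat{y} = 0.5$; any looser control there (e.g.\ a global Lipschitz constant $<1/4$) would translate directly into a worse prefactor. A secondary subtlety is ensuring the union bound is taken over exactly the $N = n_s(n - n_{qsc} + 1)$ features rather than, say, the number of parameters, since in VSQL the same shadow circuit is reused across sliding positions so measurements at different positions are statistically independent new runs and Hoeffding applies cleanly feature-by-feature.
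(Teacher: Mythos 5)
Your proof is correct and follows essentially the same route as the paper's: per-feature Hoeffding/Chernoff concentration, aggregation over at most $n_s n$ shadow features weighted by $C_w$, the $1/4$ bound on the sigmoid derivative, and the threshold condition tied to the gap $\tau$. The only deviation is your added union bound over features, which the paper omits because the proposition bounds the repetitions \emph{per feature} at confidence $1-\eta$; if you do keep the union bound, the resulting $\log(n_s n)$ factor grows with $n$ and cannot honestly be "absorbed into a slightly larger constant," so either drop it or state the extra logarithmic factor explicitly.
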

\begin{proof}
If the estimated error of each shadow feature $o_i$ is $\delta$, then from Proposition \ref{prop:num_repet}, the number of repetitions, with probability at least $1-\eta$, is $O\lr{\log(1/\eta)/\delta^2}$.
What's more, due to
\begin{small}
\begin{align}
    \frac{\partial \hat{y}}{\partial o_i} \!:=\!\frac{\partial \sigma\lr{\sum_i w_i o_i +b}}{\partial o_i}\! =\! \hat{y}\lr{1\!-\!\hat{y}} \cdot w_i \le \frac{|w_i|}{4}  \le \frac{C_w}{4},
\end{align}
\end{small}
the error of $\hat{y}$ could be bounded as $\frac{1}{4}n_snC_w \delta$, where the first inequality follows from $0<\hat{y}<1$ and the term $n_sn$ means there are at most $n_sn$ shadow features. If we let $\frac{1}{4}n_snC_w \delta \le \tau$, the number of repetitions for computing each shadow feature is obtained.
\end{proof}

From Proposition \ref{prop:num_repetitions_gap}, we know, in the inference process of VSQL, if there is a large prediction gap, then VSQL will  require much less number of repetitions for computing each shadow feature to ensure obtaining a correct predicted label.

\section{Further experimental results}

\subsection{Background of QSD}
Quantum state discrimination (QSD) is fundamental to the theory of quantum cryptography \cite{Bennett1992} and quantum communications \cite{Nielsen2002,Barnett2009,Bae2015}. It is usually defined as follows: can we recognize a quantum state $\rho_k$ from a set of quantum states $\{\rho_i\}_{i=1}^{N}$ with known probability distribution $\{q_i\}_{i=1}^{N}$ for the quantum system to be in each corresponding state, via certain measurements? This is non-trivial since arbitrary pre-measurement manipulations and measurement does not always extract useful classical information from the quantum system. Although, in principle, an optimal projective measurement can be designed according to the Helstrom bound \cite{Helstrom1969} by minimizing the average guessing error, this kind of strategy is difficult to find in general and the optimal strategy is only know for limited cases. Furthermore, even if we could obtain this optimal measure, the amount of information that we can extract is still limited by the Holevo bound \cite{Holevo1973} and the physical realization of those measure remain challenging given the hardware restrictions mentioned before. From our perspective, it is natural to think of the combination of variationally searching appropriate pre-measurement manipulations and hardware-efficient measure instead of directly finding the optimal measure. In particular, those locally-operated shadow circuits $U(\bm{\theta})$ will function as the pre-measurement manipulation in VSQL and the Pauli measure $X$ on each qubit is indeed hardware-efficient.

\subsection{Classification of multi-class quantum states}

As declared before, our method can be easily extended to multi-class classification and numerically verified. Here, we take three different categories as an example. The data set we choose is again taken from Eq. \eqref{eq:quantum_data_set}, but adding the following third family $\rho_3(t) = \op{\psi_t}{\psi_t}$ with a new probability distribution $\{q_i\} = \{\frac{1}{4}, \frac{1}{2}, \frac{1}{4} \}$,
\begin{align}
    \ket{\psi_t} &= \Lr{\sqrt{1-t^2}, t, 0, 0}^\top,  
\end{align}
where $t\in [0,1]$. We shall note that these three families of states $\ket{\psi_u},\ket{\psi_v}$ and $\ket{\psi_t}$ are mutually non-orthogonal unless $u,v,t$ are taken as 0 or 1. Hence, it's difficult to distinguish them via POVM method \cite{Bae2015}. 
Now we use VSQL to distinguish them. Similarly, we generate another 100 density matrices $\rho_3$ which are labeled as `[0,0,1]' (here we use one-hot vectors to denote the labels, i.e., `[1,0,0]' for $\rho_1$ and `[0,1,0]' for $\rho_2$). The other experimental settings are identical to the binary case except for the softmax activation function used in FCNN. Similar learning curves of the training process for the loss and the validation accuracy are demonstrated in Fig. \ref{fig:qsd_multi10}, which shows VSQL could perfectly distinguish multi-class quantum states by reaching 100\% validation accuracy. The fluctuations on the loss curve are probably due to the design of cross-softmax loss function and the existence of the highly non-orthogonal data samples. As a consequence, the validation accuracy is also jiggling around but gradually converge to the theoretical maximum. Similar to the binary case, we repeat the simulation by sampling $u,v,t \in \text{Uni}\Lr{0.1,0.9}$ and summarize the results in Fig. \ref{fig:qsd_multi09}. This eliminates the extreme cases $u,v,t \in \{0,1\}$ which reduce the multi-class to binary classification. As expected, smaller fluctuations are observed which means our method could unambiguously distinguish multi-class quantum states.

\subsection{Distinguishing noisy quantum states}

In practice, it is inevitable to deal with noise on the current quantum hardware which leads to noisy quantum states. Thus, it is essential to verify whether VSQL could distinguish noisy quantum states if we want to realize VSQL on the hardware in near future. In this subsection, we will run simulations on a pair of constructed noisy quantum states with a high fidelity.

%In practice, the quantum states are probability generated accompanied by noise, hence, it's essential to verify whether VSQL could distinguish noisy quantum states. In this subsection, we will run simulations on a pair of constructed noisy quantum states with a high fidelity.

 \begin{figure}[ht]
	\centering
	\subfigure[Noise probability equals to 0.1]{\label{fig:visual01}
		\includegraphics[width=0.48\textwidth]{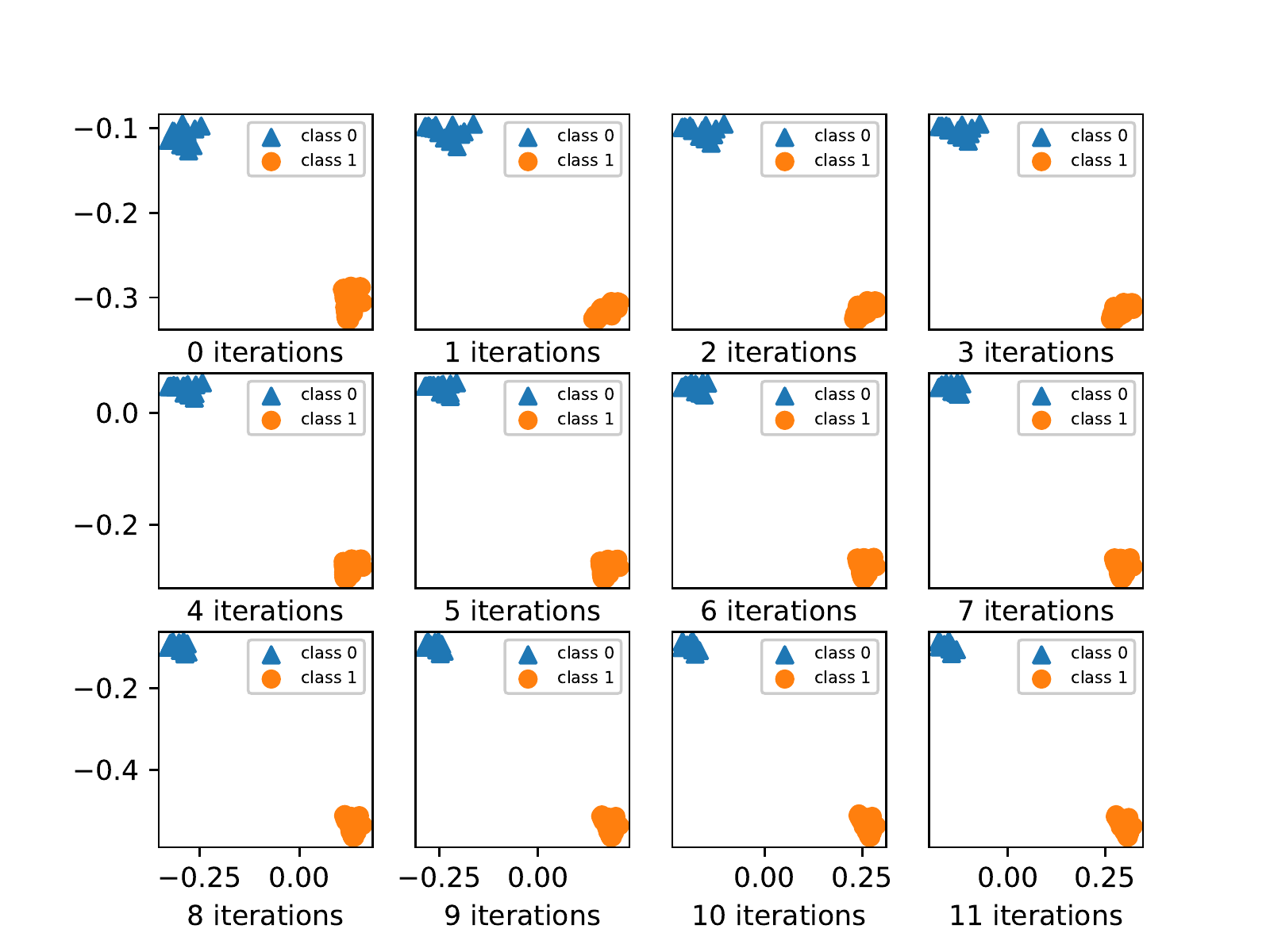}}
	\subfigure[Noise probability equals to 0.5]{\label{fig:visual05}
		\includegraphics[width=0.48\textwidth]{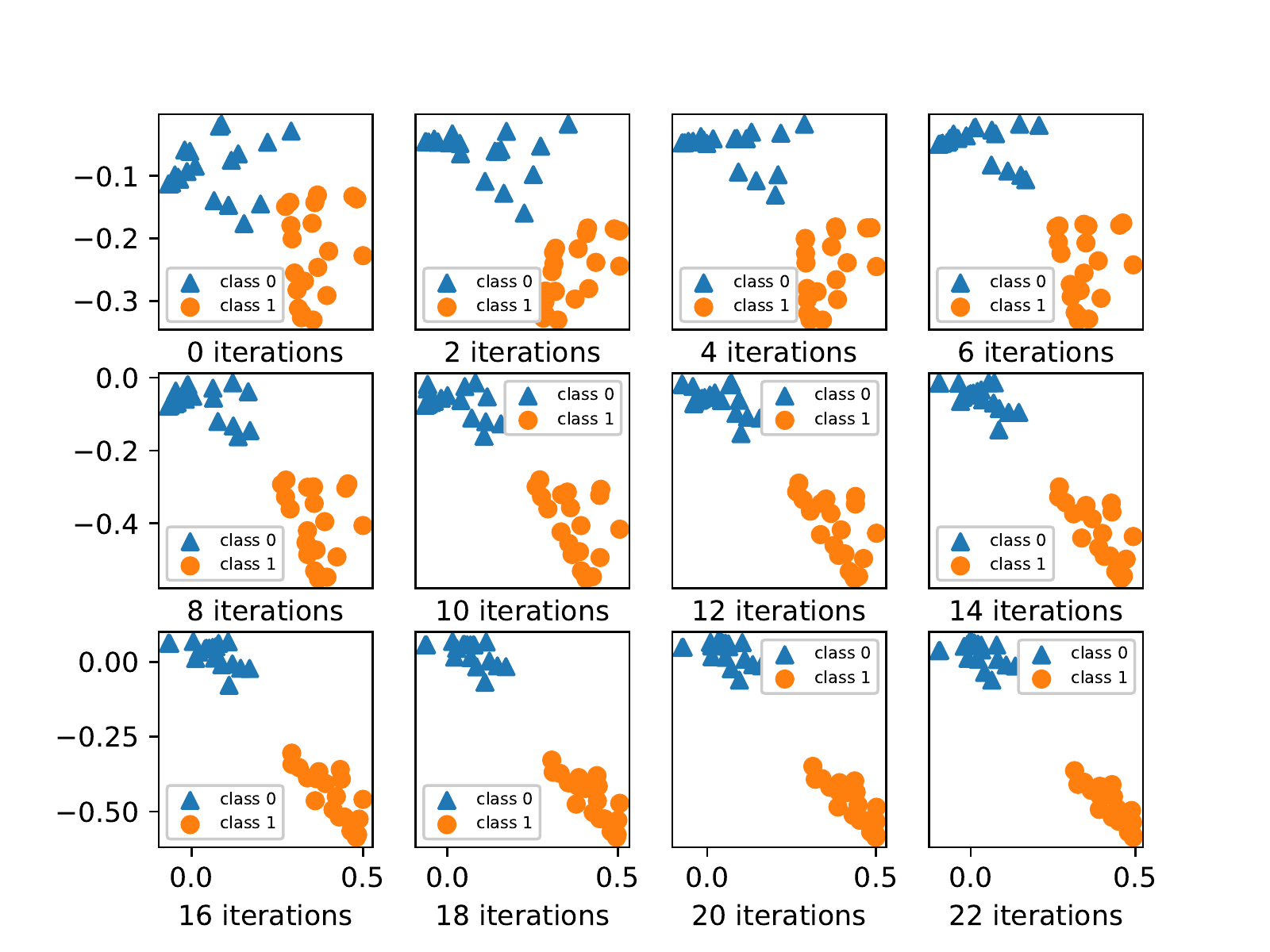}}
	\subfigure[Noise probability equals to 0.9]{\label{fig:visual09}
		\includegraphics[width=0.48\textwidth]{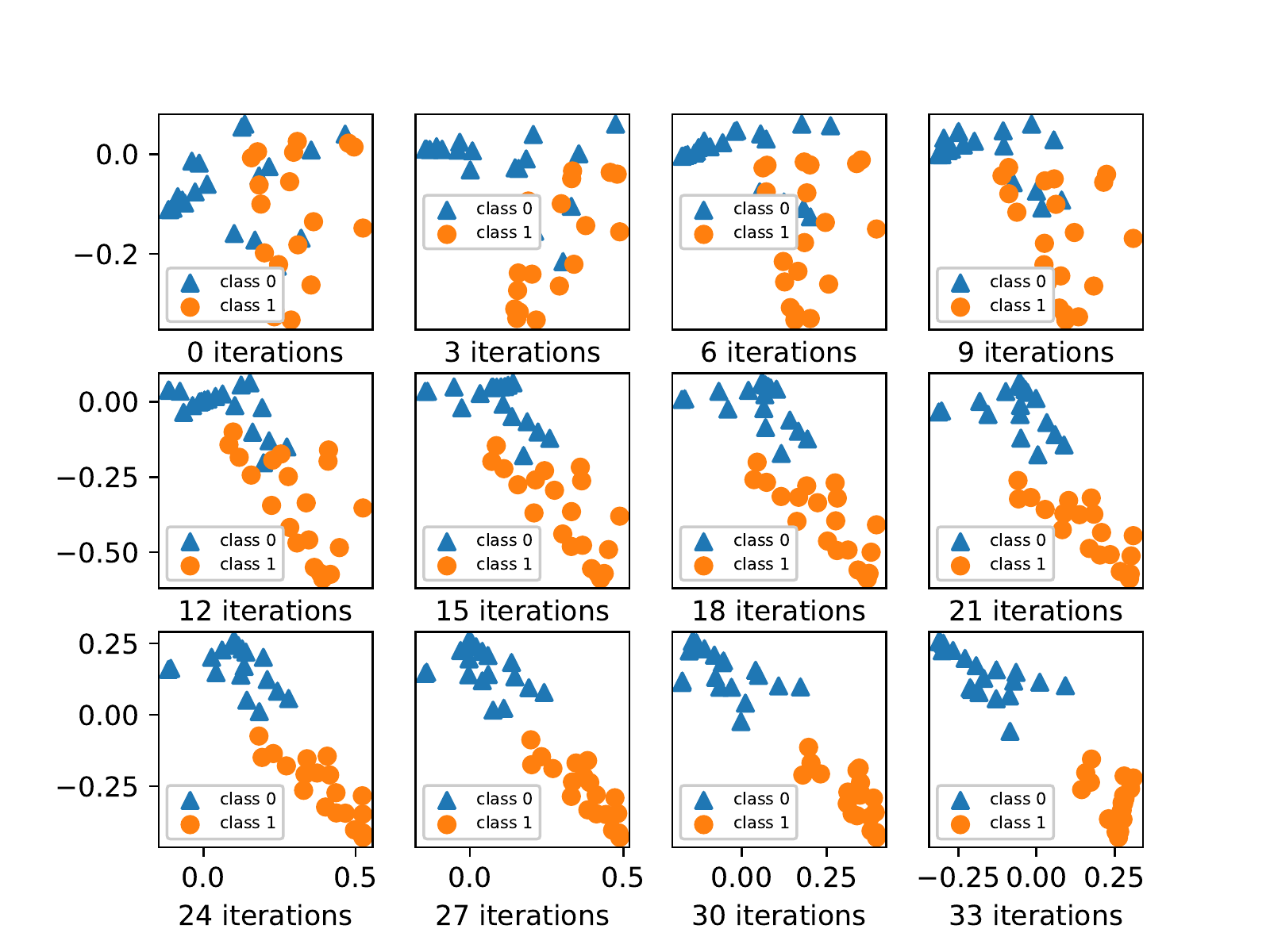}}
			\subfigure[Test accuracy curves]{\label{fig:visual_test_acc}
		\includegraphics[width=0.45\textwidth]{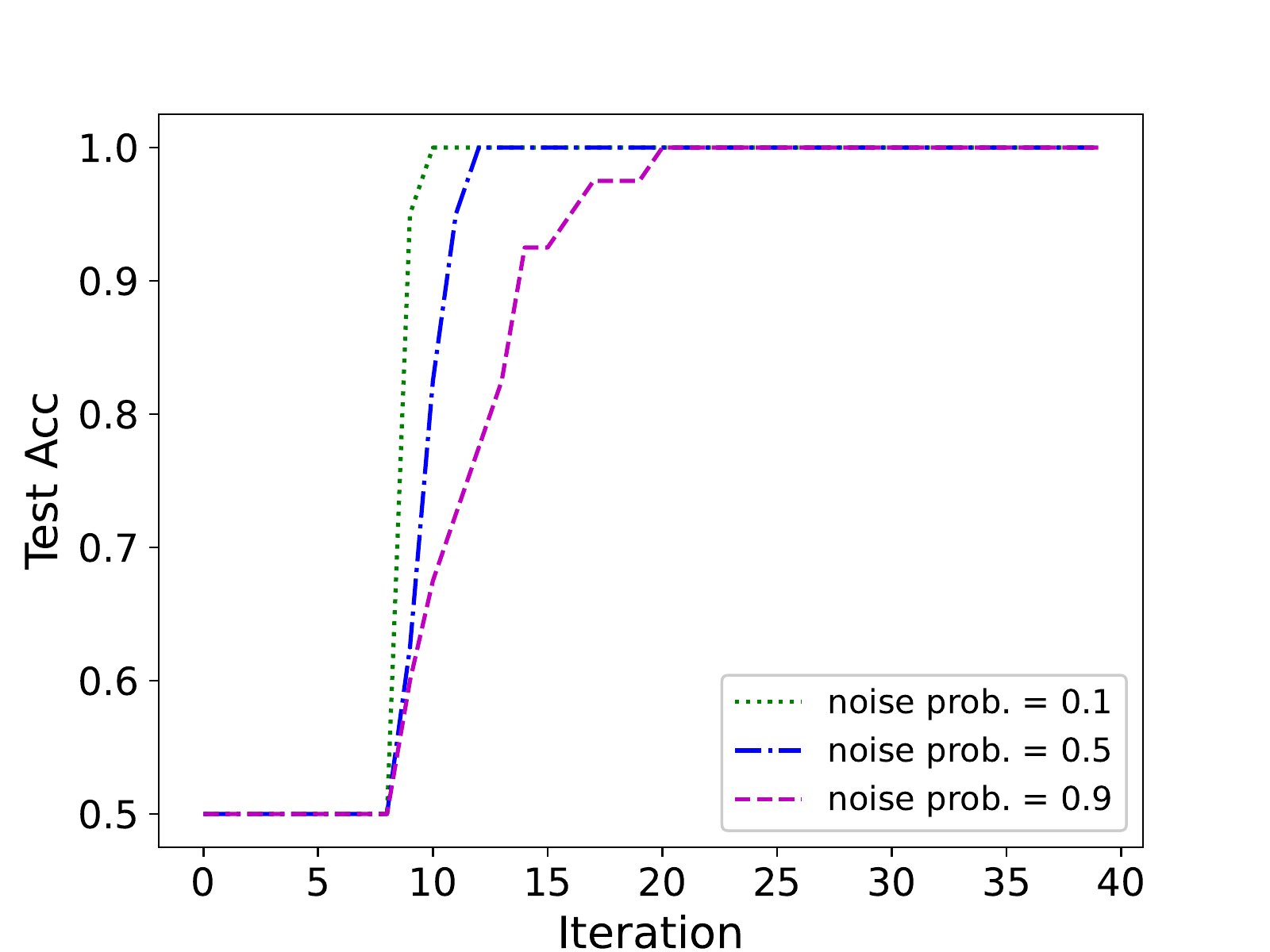}}
	\caption{Shadow features changing with the number of iterations under different noise probabilities, and the corresponding test accuracy curves.}
	\label{fig:visualize}
\end{figure}

The procedure for generating this pair of simulated quantum states is as follows:
(i) We first construct two pure states (in 3 qubits as an example) with high fidelity: $ \rho^{(0)} = \op{\psi_0}{\psi_0}$ and $ \rho^{(1)} = \op{\psi_1}{\psi_1}$, which are labeled 0 and 1 respectively, where $ \ket{\psi_0} ={1}/{2}\lr{\ket{000}+\ket{001}+\ket{010}+\ket{011}}$,
    $\ket{\psi_1} ={1}/{\sqrt{3}}\lr{\ket{000}+\ket{001}+\ket{010}}$.
(ii) Then we sample a unitary $U_s$ from matrix QR decomposition and apply it to these two pure states: $  U_s\rho^{(0)} U_s^\dag$ and $U_s\rho^{(1)} U_s^\dag$.
(iii) Last we imposed a Pauli noise on the states, i.e., 
\begin{align}
    \rho^{(i)}_{in}= (1-p_i) U_s\rho^{(i)} U_s^\dag +\frac{p_i}{3} \sum_{j=1}^3 E_j U_s\rho^{(i)} U_s^\dag E_j^\dag,
\end{align} 
where $i\in \{0,1\}$, $E_1=P\otimes \mathbb I\otimes \mathbb I, E_2= \mathbb I \otimes P\otimes \mathbb I, E_3= \mathbb I\otimes \mathbb I\otimes P$, $P\in\{X,Y,Z\}$, and
the noise probability $p_i$ is sampled from a 
uniform distribution $[0, \mathcal{P}]$ with a constant  $ \mathcal{P}$  between 0 and 1.

% \begin{enumerate}
%     \item We first construct two pure states (in 3 qubits as an example) with high fidelity: $ \rho^{(0)} = \op{\psi_0}{\psi_0}$ and $ \rho^{(1)} = \op{\psi_1}{\psi_1}$, which are labeled 0 and 1 respectively, where
% \begin{align}
%     \ket{\psi_0} &=\frac{1}{2}\lr{\ket{0}+\ket{1}+\ket{2}+\ket{3}}, \\
%     \ket{\psi_1}& =\frac{1}{\sqrt{3}}\lr{\ket{0}+\ket{1}+\ket{2}}.
% \end{align}
% \item Then we sample a unitary $U_s$ from matrix QR decomposition and apply it to these two pure states: 
% \begin{align}
%   U_s\rho^{(0)} U_s^\dag\quad \text{and} \quad U_s\rho^{(1)} U_s^\dag. 
% \end{align}
% \item Last we imposed a noise on the states, i.e., 
% \begin{align}
%     \rho^{(i)}_{in}= (1-p_i) U_s\rho^{(i)} U_s^\dag +\frac{p_i}{3} \sum_{j=1}^3 E_j U_s\rho^{(i)} U_s^\dag E_j^\dag,
% \end{align} 
% where $i\in \{0,1\}$, $E_1=P\otimes \mathbb I\otimes \mathbb I, E_2= \mathbb I \otimes P\otimes \mathbb I, E_3= \mathbb I\otimes \mathbb I\otimes P$, $P\in\{X,Y,Z\}$, and
% the noise probability $p_i$ is sampled from a 
% uniform distribution $[0, \mathcal{P}]$ with a constant  $ \mathcal{P}$  between 0 and 1.
% \end{enumerate}

In our experiment, given a fixed  $ \mathcal{P}$ , we sample 40 probabilities $p_0$'s and 40 $p_1$'s and thus generating 40 noisy quantum states for each class. Amongst these states, 50\% is for training and the remaining 50\% for testing. We employ one 2-local shadow circuit, which is similar to the Fig. \ref{fig:ansatz_mnist_binary} with depth $D=1$. The learning rate is set to 0.1 and the other experimental settings are similar to the above two experiments.

In order to explore the sensitivity of VSQL to the noise level, we conduct multiple experiments by setting $ \mathcal{P}$  as $ 0.1, 0.5, 0.9$, respectively. The test accuracy curves in the training process is illustrated in Fig. \ref{fig:visual_test_acc}, where we see intuitively that all the test accuracy could reach 100\% after 20 iterations also, even though given a higher noise level. It also shows that the lower the noise level is, the faster the test accuracy increases, which is in line with our intuition. 
Furthermore, for the sake of figuratively understanding the classification ability of VSQL, we record the two shadow features (in Eq. \eqref{eq:def_observ}) of the 40 test quantum states in each training iteration. The results with different noise probabilities are illustrated in Figs. \ref{fig:visual01}, \ref{fig:visual05} and \ref{fig:visual09}, respectively. We observe that it is easier to distinguish when the noise probability equals to 0.1 or 0.5, as the corresponding two classes of points are distributed in two clusters initially. However, even the two classes of points are interlaced initially when the noise probability equals to 0.9, they will be gradually separated into two clusters with the training process going on.

\section{Variational shadow quantum learning for multi-label classification}
\label{appendix:sec:multi_class}

In this section, we will simply describe the VSQL for multi-label classification, which consists of overall sketch, loss function, analytical gradients, number of parameters, number of repetitions and theoretical classification ability. Most of the settings are the same as the binary case, except for the final activation function, i.e., sigmoid activation function for binary case and softmax activation function for multi-label case.

\subsection{Sketch of VSQL for multi-label classification}

In this subsection, we will supplement the sketch of VSQL for multi-label classification, see Fig. \ref{fig:scheme_vsqc_multi}. And the corresponding training and inference processes are concluded in Algorithms \ref{alg:VSQL_multi_train} and \ref{alg:VSQL_multi_inference}, respectively.

\begin{figure}[ht]
	\centering
		\includegraphics[width=0.8\textwidth]{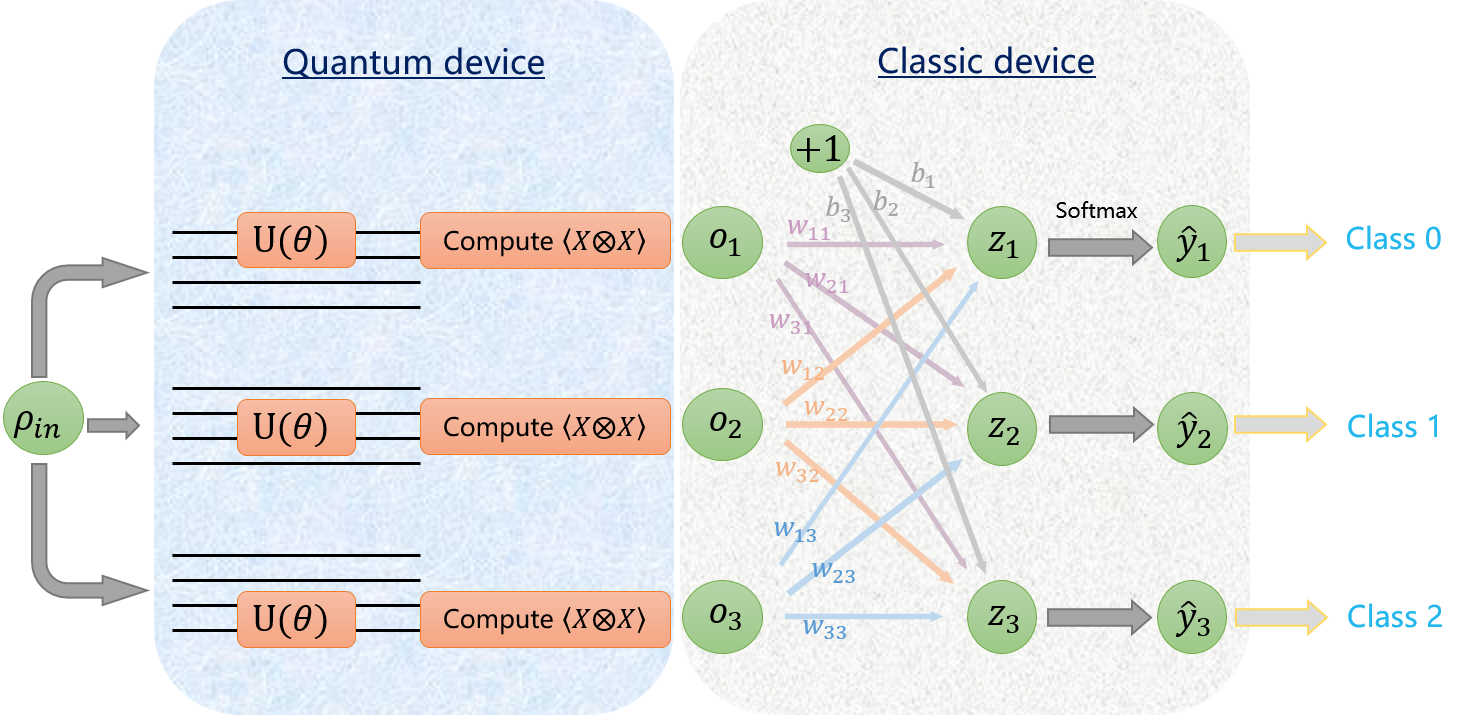}
	\caption{Sketch of variational shadow quantum learning (VSQL) for multi-label classification with $n=4$, $n_{qsc} =2$ and $K=3$. In the quantum device, the shadow circuit is implemented on the subspace of input state $\rho_{in}$. Sliding through the whole system to collect the Pauli-$(X\otimes X)$ expectations, i.e., shadow features. In the classic device, the resulting shadow features $o_i$'s are fed into a fully-connected neural network (FCNN). Here, the softmax activation function is employed and the output $\hat{y}$ is a $K$-dimensional vector for multi-label case.}
	\label{fig:scheme_vsqc_multi}
\end{figure}

\begin{algorithm}[ht]
\caption{VSQL for multi-label classification: the training process}
 \label{alg:VSQL_multi_train}
\begin{algorithmic}[1]
\REQUIRE The training data set $\mathcal{D}^{(train)}:=\{(\rho_{in}^{(m)}, y^{(m)}\in \mathbb R^K)\}_{m=1}^{N_{train}}$, $EPOCH$, optimization procedure
\ENSURE The final parameters $\bm{\theta}^*$, $\bm{W}^*$ and $\bm{b}^*$, and the list of losses 
\STATE Initialize the parameters $\bm{\theta}$ of the 2-local (for example) shadow circuit $U(\bm{\theta})$ from uniform distribution Uni$[0, 2\pi]$ and $\bm{W},\bm{b}$ from Gaussian distribution $\mathcal{N}(\bm{0},\mathbb I)$
\FOR{$ep=1,\ldots, EPOCH$}
    \FOR{$m = 1,\ldots, N_{train}$} 
    \STATE Apply multi-times the shadow circuit $U(\bm{\theta})$ to the input density matrix $\rho_{in}^{(m)}$
    \STATE Measure the subsystem and estimate a series of expectations $\langle X\otimes X \rangle$, recorded as $o_i$'s
    \STATE Feed the shadow features $o_i$'s into the classical neural network and obtain the output $\hat{y}^{(m)}$
    \STATE Compute the accumulated loss $\sum_{k=1}^{K} y^{(m)}_k \log  \hat{y}^{(m)}_k$ and update accordingly the parameters $\bm{\theta}$, $\bm{W}$ and $\bm{b}$ via gradient-based optimization procedure
    \ENDFOR
    \IF{the stopping criterion is satisfied}
    \STATE Break
    \ENDIF
\ENDFOR
\end{algorithmic}
\end{algorithm}

\begin{algorithm}[t]
\caption{VSQL for multi-label classification: the inference process}
 \label{alg:VSQL_multi_inference}
\begin{algorithmic}[1]
\REQUIRE The test data set $\mathcal{D}^{(test)}:=\{(\rho_{in}^{(m)}, y^{(m)}\in \mathbb R^K)\}_{m=1}^{N_{test}}$, the parameters $\bm{\theta}$, $\bm{W}$ and $\bm{b}$ from the training process
\ENSURE The list of predicted labels and the test accuracy
\STATE Set the counter $n\_c = 0$, denoting the number of correct predicted labels
\FOR{$m = 1,\ldots, N_{test}$} 
\STATE Apply multi-times the shadow circuit $U(\bm{\theta})$ to the input density matrix $\rho_{in}^{(m)}$
\STATE Measure and estimate a series of expectations $\langle X\otimes X \rangle$, recorded as $o_i$'s
\STATE Feed these shadow features $o_i$'s into the classical neural network and obtain the output $\hat{y}^{(m)}\in \mathbb R^K$
\IF{$ l = \mathop{\text{argmax}}\limits_{k} \{\hat{y}^{(m)}_k\}$}
\STATE Set the predicted label as `$l-1$'
\ENDIF
\IF{$ \mathop{\text{argmax}}\limits_{k} \{\hat{y}^{(m)}_k\}$ == $ \mathop{\text{argmax}}\limits_{k} \{{y}^{(m)}_k\}$ }
\STATE $n\_c = n\_c + 1$
\ENDIF
\ENDFOR
\STATE Compute the test accuracy as $n\_c/N_{test}$
\end{algorithmic}
\end{algorithm}

\subsection{Loss function}
Given the data set $\mathcal{D}:=\{(\rho_{in}^{(m)}, y^{(m)})\}_{m=1}^N \subset \mathbb C^{2^n\times 2^n}\times \mathbb R^K$ and $n_{qsc}$-local shadow circuits, where
$y^{(m)}$ is a one-hot vector which indicates the category to which the $m^{\text{th}}$ data sample $\rho_{in}^{(m)}$ belongs. For example, if $K=3$, $y^{(m)}=[1,0,0]^\top$ indicates the $m^{\text{th}}$ sample belongs to class 0, $y^{(m)}=[0,1,0]^\top$ for class 1 and $y^{(m)}=[0,0,1]^\top$ for class 2.
The loss function of VSQL for multi-label classification is derived from cross-entropy \cite{Goodfellow-et-al-2016}:
\begin{align}\label{eq:def_loss_multi}
    \mathcal{L}(\bm{\theta},\bm{W},\bm{b};\mathcal{D}) := - \frac{1}{N}\sum_{m=1}^N \sum_{k=1}^{K} y^{(m)}_k \log  \hat{y}^{(m)}_k\lr{\rho_{in}^{(m)};\bm{\theta},\bm{W},\bm{b}}.
    \end{align}
Here, the output $K$-dimensional vector $\hat{y}^{(m)}$ of VSQL is defined as follows:
\begin{align}
    \hat{y}^{(m)}\lr{\rho_{in}^{(m)};\bm{\theta},\bm{W},\bm{b}} := \sigma\lr{\sum_{i=1}^{n-n_{qsc}+1} \bm{w}_i o^{(m)}_i\lr{\rho_{in}^{(m)};\bm{\theta}}+ \bm{b}},
\end{align}
where $\bm{W}=[\bm{w}_1,\bm{w}_2,\ldots,\bm{w}_{n-n_{qsc}+1}] \in \mathbb R^{K\times (n-n_{qsc}+1)}$, $\bm{b}\in \mathbb R^{K\times 1}$,
$\sigma(\bm{z})= \frac{\mathrm{e}^{\bm{z}} }{\sum_j\mathrm{e}^{z_j}}$ denotes
the softmax activation function and the shadow features $o_i$ are calculated through
\begin{align}\label{eq:def_observ_multi}
    o^{(m)}_i\lr{\rho_{in}^{(m)};\bm{\theta}} = \Tr \lr{ \rho_{in}^{(m)}  ( \mathbb I\otimes \cdots \otimes U^\dag(\bm{\theta}) O U(\bm{\theta})\otimes \cdots \otimes \mathbb I)  }.
\end{align}
Note that the shadow circuit $U(\bm{\theta})$ and the Hermitian operator $O=X\otimes\cdots\otimes X $ are applied on the same local qubits.
We also note that the calculation of these shadow features and the construction of shadow circuits are the same as the binary case, i.e.,
% What's more, $U(\bm{\theta})$ is usually decomposed as a chain of unitary operators:
\begin{align}\label{eq:def_U_multi}
    U(\bm{\theta})=\prod_{l=L}^1 U_l(\theta_l)V_l,
\end{align}
where $ U_l(\theta_l)=  \exp(-i\theta_l/2P_l)$ with the Pauli product operator $P_l$ and $V_l$ denotes a fixed operator such as Identity, CNOT and so on.

\subsection{Analytical gradients}
For each data sample $(\rho_{in}^{(m)}, y^{(m)})$ in the data set $\mathcal{D}$ and assume $y^{(m)}_k=1$, the partial derivatives with respect to the parameters $w_{ji},b_j$ and $\theta_l$ are calculated as follows:
\begin{align}
    \frac{\partial \mathcal{L}(\bm{\theta},\bm{W},\bm{b};\rho_{in}^{(m)}, y^{(m)})}{\partial w_{ji}} &= \begin{cases}
    \lr{\hat{y}^{(m)}_k-1} \cdot o^{(m)}_i, & j= k \\
     \hat{y}^{(m)}_j \cdot o^{(m)}_i, & j\neq k
    \end{cases}   \label{eq:derivative_w_i_multi}        \\  
   \frac{\partial \mathcal{L}(\bm{\theta},\bm{W},\bm{b};\rho_{in}^{(m)}, y^{(m)})}{\partial b_{j}} &= \begin{cases}
    \lr{\hat{y}^{(m)}_k-1}, & j= k \\
     \hat{y}^{(m)}_j, & j\neq k
    \end{cases}   \label{eq:derivative_b_multi}  \\
% \end{align}
% \begin{align}
    \frac{\partial \mathcal{L}(\bm{\theta},\bm{W},\bm{b}; \rho_{in}^{(m)}, y^{(m)})}{\partial \theta_l} & = \sum_{i=1}^{n-n_{qsc}+1}  \sum_{j=1}^{K} \lr{\hat{y}^{(m)}_j  w_{ji} -w_{ki} } \frac{\partial o^{(m)}_i\lr{\bm{\theta};\rho_{in}^{(m)}}}{\partial \theta_l},
     \label{eq:derivative_theta_multi}
\end{align}
where $\hat{y}^{(m)}$ and $o^{(m)}_i$ are the corresponding abbreviations.
It should be noted that the last term in Eq. \eqref{eq:derivative_theta_multi}, i.e., the partial gradient $ {\partial o^{(m)}_i}/{\partial \theta_l}$, is resolved the same as Eq. \eqref{eq:ansatz_derivative_theta}.

\subsection{Number of parameters}

The number of parameters of VSQL for multi-label classification is summarized in the following proposition.
\begin{prop}\label{prop:num_params_multi}
For an $n$-qubit quantum system, if we use $n_s$ shadow circuits, then the number of parameters of VSQL for $K$-label classification is 
\begin{align}
     \text{\# Params} =  \text{\# Params} \big|_{\text{ in shadow circuits}} + \text{\# Params} \big|_{\text{ in NN}} = n_s n_{qsc} D +\Lr{ n_s \lr{n-n_{qsc}+1}+1}K,
\end{align}
where we denote by $n_{qsc}$ the \textbf{n}umber of \textbf{q}ubits of the \textbf{s}hadow \textbf{c}ircuits and assume each shadow circuit consists of $D$ layers with $n_{qsc}$ parameters in each layer. 
\end{prop}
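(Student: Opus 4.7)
The plan is a direct counting argument, mirroring the binary-case Proposition on the number of parameters but adjusting the output dimension of the classical fully-connected network to accommodate the $K$ softmax outputs. Because the quantum side of VSQL is unaffected by the number of classes, the only change relative to the binary case is in the classical FCNN.

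First, I would count parameters on the quantum side. Each shadow circuit is built from $D$ layers of the ansatz in Eq.~\eqref{eq:def_U_multi}, with $n_{qsc}$ single-qubit rotation parameters per layer, giving $n_{qsc}D$ parameters per shadow circuit. Crucially, a single shadow circuit is reused identically at every sliding position (as noted in the caption of Fig.~\ref{fig:scheme_vsqc_multi}), so the position index contributes no additional parameters. With $n_s$ independent shadow circuits, this yields $n_s n_{qsc} D$ quantum parameters.

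Next, I would count shadow features and then classical weights. By the sliding construction applied to $n_s$ circuits on an $n$-qubit register, the number of expectation values $o_i^{(m)}$ produced is $n_s(n-n_{qsc}+1)$. The FCNN in the multi-label setting maps this feature vector into $\mathbb{R}^K$ via the affine map $\bm{W}\bm{o}+\bm{b}$ followed by softmax, so $\bm{W}\in\mathbb{R}^{K\times n_s(n-n_{qsc}+1)}$ contributes $K\cdot n_s(n-n_{qsc}+1)$ entries and $\bm{b}\in\mathbb{R}^{K}$ contributes $K$ entries. Summing gives $[n_s(n-n_{qsc}+1)+1]K$ classical parameters, and adding the quantum count yields the stated total.

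There is no real obstacle here: the proof is a routine enumeration. The only minor subtlety worth stating explicitly is that the same $U(\bm{\theta})$ is shared across sliding positions within each shadow circuit (so no parameter multiplicity in the position index), and that the factor of $K$ appearing in the FCNN term is precisely what distinguishes this statement from the binary analogue, where the output is a scalar and the bracketed term is multiplied by $1$ instead of $K$.
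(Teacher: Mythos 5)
Your counting is correct and matches the paper's intent: the paper states this proposition without an explicit proof, treating it as the same routine enumeration you perform ($n_s n_{qsc}D$ rotation angles on the quantum side, and a $K\times n_s(n-n_{qsc}+1)$ weight matrix plus a $K$-dimensional bias on the classical side). Your remarks on weight sharing across sliding positions and on the factor of $K$ being the only change from the binary case are exactly the right observations.
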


\subsection{Number of repetitions for computing each shadow feature}

Since the number of repetitions to estimate the shadow features is the same as the binary case, here we merely rewrite it simply.
  
\renewcommand{\thethm}{\ref{prop:num_repet}}
\begin{prop}
Given a precision $\epsilon$, the number of repetitions of the shadow circuit for computing each shadow feature at error $\epsilon$, with probability at least $1-\eta$,  scales as $O\lr{\log({1}/{\eta})/{\epsilon^2}}$.
\end{prop}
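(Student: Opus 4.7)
The plan is to realize each shadow feature as the mean of bounded i.i.d. random variables, then invoke a standard concentration inequality. Concretely, computing $o_i = \Tr(\rho_{in} \cdot \tilde O_i)$ with $\tilde O_i = \mathbb{I}\otimes\cdots\otimes U^\dag(\bm\theta) O\, U(\bm\theta)\otimes\cdots\otimes \mathbb{I}$ and $O = X^{\otimes n_{qsc}}$ amounts to preparing $U(\bm\theta)$ on the appropriate subsystem of $\rho_{in}$ and measuring Pauli $X$ on each of the $n_{qsc}$ target qubits. Since each $X$-measurement returns $\pm 1$, the product of the outcomes is a $\{-1,+1\}$-valued random variable $\xi$ whose expectation is exactly $o_i$.

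Next I would form the empirical estimator $\hat o_i = \frac{1}{N}\sum_{k=1}^{N}\xi^{(k)}$ from $N$ independent repetitions of the circuit. Because $\xi^{(k)} \in [-1,1]$, Hoeffding's inequality yields
\begin{align}
\Pr\!\left(|\hat o_i - o_i| \ge \epsilon\right) \le 2\exp\!\left(-\frac{N\epsilon^2}{2}\right).
\end{align}
Demanding that this probability be at most $\eta$ and solving for $N$ gives $N \ge \frac{2}{\epsilon^2}\log(2/\eta)$, which is $O\!\left(\log(1/\eta)/\epsilon^2\right)$ as claimed.

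The argument is essentially a one-line application of Hoeffding, so there is no serious obstacle. The only subtlety I would flag explicitly is the justification that the relevant measurement produces an i.i.d.\ sequence of $\pm 1$ outcomes with mean $o_i$: this requires noting that $O$ has spectrum $\{\pm 1\}$ and that measuring each qubit in the $X$ basis and multiplying the outcomes implements the projective measurement of $O$ on the rotated state $(\mathbb{I}\otimes U(\bm\theta)\otimes\mathbb{I})\rho_{in}(\mathbb{I}\otimes U^\dag(\bm\theta)\otimes\mathbb{I})$, so the outcome expectation is precisely $\Tr(\rho_{in}\tilde O_i) = o_i$. Once this identification is made, independence across shots is automatic from the circuit model, and the Chernoff--Hoeffding bound delivers the claimed scaling.
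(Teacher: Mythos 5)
Your proof is correct and follows exactly the route the paper takes: the paper simply states that the proposition is "directly derived from the Chernoff--Hoeffding theorem," and your write-up supplies the standard details (the $\pm 1$-valued outcome of measuring $O=X^{\otimes n_{qsc}}$ on the rotated state, the empirical mean over $N$ shots, and Hoeffding's tail bound). Nothing is missing.
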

\renewcommand{\thethm}{S\arabic{thm}}
\addtocounter{thm}{-1}

Furthermore, by utilizing these estimated shadow features, VSQL outputs the prediction vector $\hat{y}$ and gives a label according to the following prediction rule
\begin{align}\label{eq:prediction_rule_multi}
    \text{predicted label} = \mathop{\text{argmax}}\limits_{k} \{\hat{y}_k\} -1.
\end{align}
Therefore, in the inference process of VSQL for multi-label classification, for an input state with the label $y \in \mathbb R^K$,
if the predicted label is correct and the gap between the largest two values of the prediction vector $\hat{y}$ is $\tau$ under an infinite number of repetitions of the shadow circuits, then the actual number of repetitions, required to ensure that the input state is still correctly classified, will be similarly related to the gap $\tau$.
And an analogous result is concluded in Proposition \ref{prop:num_repetitions_gap_multi}.

\begin{prop}\label{prop:num_repetitions_gap_multi}
For an $n$-qubit quantum system, if we use $n_s$ shadow circuits and assume the final weights $w_{ji}$ of the neural networks in VSQL are bounded as $|w_{ji}|\le  C_w$ for all $i,j$, and the prediction gap is $\tau\in (0,1)$, then the actual number of repetitions for computing each shadow feature, with probability at least $1-\eta$, scales as $O\lr{{n_s^2n^2C_w^2}\log({1}/{\eta})/{\tau^2}}$.
\end{prop}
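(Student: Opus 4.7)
My plan is to mirror the proof of Proposition \ref{prop:num_repetitions_gap} for the binary case, the only real novelty being the sensitivity of a softmax layer to additive perturbations in the shadow features. First I would let $\delta$ denote the estimation error per shadow feature and invoke Proposition \ref{prop:num_repet} to record that $O(\log(1/\eta)/\delta^2)$ repetitions of the shadow circuit suffice for any single feature. The remaining task is then to choose $\delta$ small enough that the argmax rule in Eq.~\eqref{eq:prediction_rule_multi} is preserved.

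The key step is the softmax Jacobian bound. Writing $z_k = \sum_i w_{ki} o_i + b_k$ and $\hat{y}_k = e^{z_k}/\sum_j e^{z_j}$, the standard identity $\partial \hat{y}_k/\partial z_j = \hat{y}_k(\delta_{kj}-\hat{y}_j)$ together with $|w_{ji}|\le C_w$ yields
\[
\Bigl|\frac{\partial \hat{y}_k}{\partial o_i}\Bigr| \le C_w \sum_{j=1}^K \hat{y}_k\,|\delta_{kj}-\hat{y}_j| \;=\; 2\,C_w\,\hat{y}_k(1-\hat{y}_k) \;\le\; \frac{C_w}{2},
\]
using $\sum_{j\neq k}\hat{y}_j = 1-\hat{y}_k$ and $\hat{y}_k(1-\hat{y}_k)\le 1/4$. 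Summing this sensitivity over the at most $n_s(n-n_{qsc}+1)\le n_s n$ shadow features shows that each component of $\hat{y}$ shifts by at most $\tfrac{1}{2}C_w n_s n\,\delta$ when every feature is perturbed by at most $\delta$.

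Next I would translate this into a condition on $\delta$ via the prediction gap. Let $k^{*}$ be the argmax of the exact $\hat{y}$; by hypothesis $\hat{y}_{k^{*}} - \hat{y}_j \ge \tau$ for every $j\ne k^{*}$, and the worst-case shrinkage of this gap under perturbation is $C_w n_s n\,\delta$. Requiring this to stay strictly below $\tau$ forces $\delta < \tau/(C_w n_s n)$, and substituting into $O(\log(1/\eta)/\delta^2)$ produces the claimed $O\bigl(n_s^2 n^2 C_w^2 \log(1/\eta)/\tau^2\bigr)$ scaling.

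The main technical obstacle is precisely the softmax Jacobian inequality above. It is a short calculation, but the factor of $2$ arising from combining the diagonal $\hat{y}_k(1-\hat{y}_k)$ and off-diagonal $\hat{y}_k\hat{y}_j$ contributions is what makes the final rate match the binary case up to a constant; getting this coupling right is essentially the whole content beyond a verbatim port of Proposition \ref{prop:num_repetitions_gap}.
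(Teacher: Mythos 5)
Your proposal is correct and follows essentially the same route as the paper's proof: the same invocation of Proposition \ref{prop:num_repet} with per-feature error $\delta$, the same softmax sensitivity bound $\bigl|\partial \hat{y}_j/\partial o_i\bigr|\le 2C_w\hat{y}_j(1-\hat{y}_j)\le C_w/2$, the same accumulation over at most $n_s n$ features, and the same requirement that the resulting $\tfrac{1}{2}n_s n C_w\delta$ shift per component not close the gap $\tau$, yielding $\delta\lesssim \tau/(n_s n C_w)$. If anything, your version is slightly cleaner, since you take absolute values explicitly where the paper only bounds the signed derivative from above.
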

\begin{proof}
If the estimated error of each shadow feature $o_i$ is $\delta$, then from Proposition \ref{prop:num_repet}, the number of repetitions, with probability at least $1-\eta$, is $O\lr{\log(1/\eta)/\delta^2}$.
What's more, due to
\begin{align}
    \frac{\partial \hat{y}_j}{\partial o_i}
    & :=\frac{\partial }{\partial o_i}\lr{\frac{\text{e}^{\sum_i w_{ji}o_i +b_j}}{ \sum_l \text{e}^{\sum_i w_{li}o_i +b_l}} }
     = \hat{y}_j\lr{1-\hat{y}_j} \cdot w_{ji} + \sum_{l,l\neq j} -\hat{y}_j\hat{y}_l \cdot w_{li}
    \\
    & \le \hat{y}_j C_w \lr{1-\hat{y}_j+\sum_{l,l\neq j} \hat{y}_l} = 2\hat{y}_j C_w \lr{1-\hat{y}_j} \le \frac{1}{2}C_w,
\end{align}
for all $j=1,\ldots,K$,
the error of each value of $\hat{y}$ could be bounded as $\frac{1}{2}n_snC_w \delta$. If we let $\frac{1}{2}n_snC_w \delta \le \frac{\tau}{2}$, the number of repetitions for computing each shadow feature is obtained.
\end{proof}

\subsection{Theoretical classification ability}
Since it is too complex to explore the theoretical classification ability of VSQL for multi-label classification, here, we merely give an sufficient condition which is concluded in  Theorem \ref{thm:classi_ability_multi}.
% is similar to the binary case and the corresponding necessary and sufficient conditions are concluded as follows.
From Theorem \ref{thm:classi_ability_multi}, we could also directly induce the corresponding results analogous to Corollary \ref{coro:classi_ability_partial_trace} and Theorem \ref{thm:n_1_n_2_lcoal_shadow}, here we omit them.

\begin{thm}\label{thm:classi_ability_multi}
Given $K$ types of input density matrices $\rho_{in}^{(0)}$, $\rho_{in}^{(1)}$ and up to $\rho_{in}^{(K-1)}$ with labels 0, 1 up to $K-1$, respectively. If there exists a group of $\bm{\theta} $ that makes at least one group of shadow features $o_i^{(0)}$, $o_i^{(1)}$, $\ldots, o_i^{(K-1)}$ different, i.e., $| o_i^{(k)}-o_i^{(k^\prime)}|  >0 $ for all $k\neq k^\prime$, then VSQL is theoretically capable of distinguishing them.
\end{thm}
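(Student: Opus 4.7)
The plan is to mimic the sufficiency argument used for the binary case in Theorem~\ref{thm:classi_ability}, except that the classical post-processor now ends in a softmax rather than a sigmoid. By hypothesis there is an index $i$ and a choice of $\bm{\theta}$ making the scalars $c_k:=o_i^{(k)}(\rho_{in}^{(k)};\bm{\theta})$ pairwise distinct for $k=0,\ldots,K-1$. So freezing the shadow circuits at this $\bm{\theta}$, the task reduces to exhibiting a single affine map $\mathbb{R}^{n-n_{qsc}+1}\to\mathbb{R}^{K}$ whose composition with the softmax produces an argmax that agrees with the true label on every class, after which the prediction rule in Eq.~\eqref{eq:prediction_rule_multi} yields the desired label.

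The construction I would write down is a tangent-line trick: set $W_{k+1,i}=c_{k}$ and $W_{k+1,i'}=0$ for $i'\neq i$, and take $b_{k+1}=-\tfrac{1}{2}c_{k}^{2}$, for $k=0,\ldots,K-1$. Each row of $\bm{W}\bm{o}+\bm{b}$ then becomes, viewed as a function of the single scalar variable $x=o_i$, the tangent line $z_{k+1}(x)=c_{k}x-\tfrac{1}{2}c_{k}^{2}$ to the parabola $\tfrac{1}{2}x^{2}$ at $x=c_{k}$.

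The core computation is then the elementary identity
\begin{align*}
z_{k+1}(c_{k})-z_{k'+1}(c_{k})=\tfrac{1}{2}(c_{k}-c_{k'})^{2}>0\qquad(k'\neq k),
\end{align*}
which is just completing the square and uses nothing beyond $c_k\neq c_{k'}$. Because the softmax preserves the strict ordering of its inputs, this transfers to $\hat{y}_{k+1}^{(k)}>\hat{y}_{k'+1}^{(k)}$ for every $k'\neq k$, so $\arg\max_{j}\hat{y}_{j}^{(k)}=k+1$ and the rule in Eq.~\eqref{eq:prediction_rule_multi} returns label $k$ on input $\rho_{in}^{(k)}$ for each $k$.

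I do not anticipate any serious obstacle here. The only delicate point is arguing that a single affine-plus-softmax head really can separate $K$ arbitrary distinct scalars, and the tangent-line construction achieves this uniformly, with no ordering or positivity assumption on the $c_{k}$, which is exactly what the statement requires. Note that, unlike the binary case in Theorem~\ref{thm:classi_ability}, the converse is not claimed here, and indeed need not hold in the same clean form: even if no single shadow feature separates all classes, a jointly separating linear combination of several shadow features may still exist, so only sufficiency requires a proof.
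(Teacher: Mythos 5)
Your proof is correct, but it takes a genuinely different route from the paper's. The paper first reorders so that $o_1^{(0)}<o_1^{(1)}<\cdots<o_1^{(K-1)}$, then builds the separating head recursively: it introduces an auxiliary value $o_1^{(-1)}=o_1^{(0)}-1$, sets $b_j=-w_{j1}o_1^{(j-2)}$, and chooses an increasing sequence of positive weights satisfying $w_{j1}>w_{j-1,1}\bigl(o_1^{(j-1)}-o_1^{(j-3)}\bigr)/\bigl(o_1^{(j-1)}-o_1^{(j-2)}\bigr)$, after which it checks three separate cases ($l\ge k+2$, $l=k+1$, and a downward chain $z_{k+1}>z_k>\cdots>z_1$) to conclude that $z_{k+1}$ dominates. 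Your tangent-line construction $z_{k+1}(x)=c_kx-\tfrac12 c_k^2$ reaches the same conclusion via the single identity $z_{k+1}(c_k)-z_{k'+1}(c_k)=\tfrac12(c_k-c_{k'})^2>0$, which needs no sorting of the $c_k$, no auxiliary point, no recursive weight inequality, and no case analysis; it is symmetric in the classes and visibly uses only the hypothesis $c_k\neq c_{k'}$. The softmax step is also handled correctly (it is strictly order-preserving coordinatewise since the denominator is common), and your remark that only sufficiency is claimed matches the paper, which explicitly restricts the multi-label statement to a sufficient condition. In short, both proofs are valid constructive exhibitions of a separating affine head; yours is shorter and arguably more transparent, while the paper's makes the monotone structure of the weights explicit.
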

\begin{proof}
Without loss of generality, we assume $i=1$ and $o_1^{(0)}<o_1^{(1)}<\cdots<o_1^{(K-1)}$. By simply setting $w_{ji}=0$ for $i\neq 1,j=1,2,\ldots,K$, we have 
\begin{align}
z_1=w_{11} o_1^{(k)}+b_1,\quad  z_2=w_{21} o_1^{(k)}+b_2,\quad \cdots\quad z_K=w_{K1} o_1^{(k)}+b_K.
\end{align}
Our goal is to prove that $z_{k+1}$ is the largest one for any $o_1^{(k)}$, $k=0,1,\ldots,K-1$, via adjusting $\bm{w}_1$ and $\bm{b}$.

Now if we define $o_1^{(-1)}=o_1^{(0)}-1$, set $b_j= -w_{j1}o_1^{(j-2)}$ for all $j=1,2,\ldots,K$, and set $0<w_{11}<\cdots<w_{j1}<\cdots <w_{K1}$ such that
\begin{align}\label{eq:w_neq_multi}
    w_{j1} > w_{j-1,1}\frac{o_1^{(j-1)}-o_1^{(j-3)}}{o_1^{(j-1)}-o_1^{(j-2)}}, \qquad \text{for}\quad  j>1.
\end{align}
Then we could easily verify that for any $o_1^{(k)}$,
\begin{enumerate}
    \item if $l\ge k+2$, then 
\begin{align}
z_l=w_{l1}o_1^{(k)} + b_l =w_{l1}o_1^{(k)} -w_{l1} o_1^{(l-2)} =w_{l1} \lr{o_1^{(k)} -o_1^{(l-2)}}\le 0;
\end{align}
\item  if $l= k+1$, then
\begin{align}
    z_l=w_{l1}o_1^{(k)} + b_l =w_{l1}o_1^{(k)} -w_{l1} o_1^{(l-2)} = w_{l1} \lr{o_1^{(k)} -o_1^{(k-1)}}> 0;
\end{align}
\item from Eq. \eqref{eq:w_neq_multi}, we have
\begin{align}
    z_{k+1} =w_{k+1,1} \lr{o_1^{(k)} -o_1^{(k-1)}} > w_{k1} \lr{o_1^{(k)} -o_1^{(k-2)}}> w_{k1} \lr{o_1^{(k-1)} -o_1^{(k-2)}} = z_k,
\end{align}
and go on we have $z_{k+1}>z_{k}>z_{k-1}>\cdots > z_1= w_{11} \lr{o_1^{(0)} -o_1^{(-1)}}=w_{11}>0$.
\end{enumerate}
Based on the above three cases, we obtain that $z_{k+1}$ is the largest one, i.e., $\hat{y}_{k+1}$ is largest. That is to say, for any input density matrix $\rho_{in}^{(k)}$, VSQL outputs the predicted label $=  \mathop{\text{argmax}}\limits_{k} \{\hat{y}_k\} -1 = k$, which means classifying correctly.
\end{proof}

\section{Proof details}

\subsection{Proof of Theorem \ref{thm:classi_ability}}
\label{appedix:thm:classi_ability}

\renewcommand\thethm{\ref{thm:classi_ability}}
\begin{thm}
Given two types of input density matrices $\rho_{in}^{(0)}$ and $\rho_{in}^{(1)}$ with labels 0 and 1, respectively, if there exists a group of $\bm{\theta} $ that makes at least one pair of shadow features $o_i^{(0)}$ and $o_i^{(1)}$ different, i.e., $| o_i^{(0)}-o_i^{(1)}|  >0 $, then VSQL can distinguish them, vice versa.
\end{thm}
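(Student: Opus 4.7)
The plan is to prove the two implications of this ``if and only if'' statement separately. For the forward (sufficient) direction, I will assume there exist $\bm{\theta}^*$ and an index $i^*$ with $|o_{i^*}^{(0)}(\bm{\theta}^*) - o_{i^*}^{(1)}(\bm{\theta}^*)| > 0$, and explicitly construct weights $\bm{w}$ and a bias $b$ such that the sigmoid outputs fall on opposite sides of the decision threshold $0.5$. Concretely, I would set $w_j = 0$ for every $j \ne i^*$, denote $\alpha = o_{i^*}^{(0)}(\bm{\theta}^*)$ and $\beta = o_{i^*}^{(1)}(\bm{\theta}^*)$ with (WLOG) $\alpha < \beta$, and pick $b = -w_{i^*}(\alpha + \beta)/2$. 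Then $w_{i^*}\alpha + b = -w_{i^*}(\beta - \alpha)/2 < 0$ while $w_{i^*}\beta + b = w_{i^*}(\beta - \alpha)/2 > 0$ whenever $w_{i^*} > 0$, so by the strict monotonicity of $\sigma$ we get $\hat y^{(0)} < 1/2 < \hat y^{(1)}$; choosing $w_{i^*}$ sufficiently large pushes these values arbitrarily close to $0$ and $1$, which yields correct predictions under the decision rule in Eq.~\eqref{eq:prediction_rule}.

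For the converse (necessary) direction, I will argue the contrapositive: suppose that for every $\bm{\theta}$ and every index $i$ we have $o_i^{(0)}(\bm{\theta}) = o_i^{(1)}(\bm{\theta})$. Then for any choice of FCNN parameters $\bm{w}, b$, the pre-activation values $\sum_i w_i o_i^{(0)} + b$ and $\sum_i w_i o_i^{(1)} + b$ coincide, hence $\hat y^{(0)}(\bm{\theta}, \bm{w}, b) = \hat y^{(1)}(\bm{\theta}, \bm{w}, b)$ identically. Since VSQL's prediction rule depends only on $\hat y$, it must assign the same label to $\rho_{in}^{(0)}$ and $\rho_{in}^{(1)}$, so the two classes are indistinguishable by VSQL regardless of training.

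I do not anticipate a significant obstacle here; the argument is essentially a transparent consequence of the sigmoid being a strictly monotone, surjective map onto $(0,1)$ together with the linearity of the FCNN's pre-activation in the shadow features. The only minor care required is on the forward direction, to make sure the explicit $(w_{i^*}, b)$ construction is valid and respects the decision boundary at $0.5$. One small subtlety worth flagging in the write-up is that the quantifier ``there exists a group of $\bm{\theta}$'' is crucial: the same $\bm{\theta}^*$ used to separate the shadow features must then be combined with the constructed $(\bm{w}, b)$, and no training dynamics argument is needed since the theorem only asserts theoretical (in-principle) distinguishability.
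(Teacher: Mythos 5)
Your proposal is correct and follows essentially the same route as the paper's proof: an explicit construction of $(\bm{w},b)$ that zeroes out all but the distinguishing feature and centers the bias at the midpoint, giving outputs on opposite sides of $\sigma(0)=0.5$, together with the contrapositive for necessity. The paper simply fixes $w_{i^*}=1$ rather than leaving it as an arbitrary positive weight, which is an inessential difference.
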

\renewcommand{\thethm}{S\arabic{thm}}
\addtocounter{thm}{-1}

\begin{proof}
Sufficiency:
Without loss of generality, we assume $i=1$ and $o_1^{(0)}\!<\!o_1^{(1)}$. By simply setting $w_1\!=\!1$ and other $w_i$'s as 0, and setting $b\!=\!-(o_1^{(0)}\!+\! o_1^{(1)})/{2}$, we could obtain
\begin{small}
\begin{align}
    \hat{y}^{(0)} & \!:=\! \sigma(\sum_i w_i o_i^{(0)} \!+ \! b)\!=\! \sigma [(o_1^{(0)}\!-\!o_1^{(1)})/2]\! < \! \sigma\lr{0} \!=\! 0.5; \nonumber\\ \nonumber
    \hat{y}^{(1)} &\! :=\! \sigma(\sum_i w_i o_i^{(1)}\! +\! b)\!=\! \sigma[ (o_1^{(1)}\!-\! o_1^{(0)})/2]\! >\!  \sigma\lr{0} \!=\! 0.5.
\end{align}
\end{small}
By taking 0.5 as the decision boundary, we know VSQL could distinguish these two types of input density matrices theoretically. 

Necessity: Assuming all pairs of shadow features $o_i^{(0)}$ and $o_i^{(1)}$ are identical for any group of $\bm{\theta}$, then  $\hat{y}^{(0)} $ and $\hat{y}^{(1)} $ will always be same. Hence, VSQL fails to theoretically distinguish these two input density matrices, which is in contradiction with the condition.
\end{proof}
% Necessity: Assume all pairs of shadow features $o_i^{(0)}$ and $o_i^{(1)}$ are identical for any group of $\bm{\theta}$, then  $\hat{y}^{(0)} $ and $\hat{y}^{(1)} $ will be the same all the time. Hence, VSQL fails to theoretically distinguish these two input density matrices, which is in contradiction with the condition.

\subsection{Proof of Theorem \ref{thm:n_1_n_2_lcoal_shadow}}
\label{appedix:thm:n_1_n_2_lcoal_shadow}

\renewcommand\thethm{\ref{thm:n_1_n_2_lcoal_shadow}}
\begin{thm}
Given two types of $n$-qubit input density matrices $\rho_{in}^{(0)}$ and $\rho_{in}^{(1)}$. If VSQL can not theoretically distinguish them via $m$-local shadow circuits, then neither can via $m^\prime$-local shadow circuits, where $m^\prime<m<n$. And not vice versa.
\end{thm}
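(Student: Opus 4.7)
My plan is to reduce the claim to an elementary fact about partial traces by invoking Corollary~\ref{coro:classi_ability_partial_trace}, which translates ``$k$-local VSQL cannot distinguish $\rho_{in}^{(0)}$ and $\rho_{in}^{(1)}$'' into ``every pair of $k$-local reduced density matrices, taken on the consecutive qubit windows swept by the sliding shadow circuit, coincide''. The key ingredient is then the composition law for the partial trace: an $m'$-local reduction can always be recovered from an $m$-local reduction by further tracing out the remaining $m-m'$ qubits.

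For the forward implication, I would start from the assumption that, for every length-$m$ consecutive window $S\subseteq\{1,\ldots,n\}$, one has $\Tr_{\bar S}(\rho_{in}^{(0)}) = \Tr_{\bar S}(\rho_{in}^{(1)})$. Given any length-$m'$ consecutive window $S'$ with $m'<m\le n$, I can always choose a length-$m$ window $S\supseteq S'$, and applying $\Tr_{S\setminus S'}$ to both sides of the equality above preserves the identity, yielding $\Tr_{\bar{S'}}(\rho_{in}^{(0)}) = \Tr_{\bar{S'}}(\rho_{in}^{(1)})$ for every such $S'$. A second application of Corollary~\ref{coro:classi_ability_partial_trace} then gives $m'$-local VSQL indistinguishability, completing this direction.

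For the ``not vice versa'' clause, I would exhibit a concrete counterexample. Take $m'=1$, $m=2$, and consider the pair of $n$-qubit states obtained by padding with $\proj{0}$ on qubits $3,\ldots,n$ the two-qubit states $\proj{\phi^+}$ and $\frac{1}{2}\lr{\proj{00}+\proj{11}}$, where $\ket{\phi^+}=\frac{1}{\sqrt 2}\lr{\ket{00}+\ket{11}}$. Every single-qubit marginal of both global states is either $I/2$ (on qubits $1,2$) or $\proj{0}$ (elsewhere), so the $m'=1$ marginals all match. However, on the two-qubit window $\{1,2\}$ the first state is pure while the second is a rank-two classical mixture, so the $m=2$ marginals on that window disagree. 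Corollary~\ref{coro:classi_ability_partial_trace} then converts this into $m'$-indistinguishability with $m$-distinguishability, disproving the converse.

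I do not anticipate any serious obstacle; the forward implication is essentially the tautology that the partial trace of an equality of density matrices is again an equality. The only mild bookkeeping is to verify that the sliding convolution mode really visits every length-$k$ consecutive window (so that Corollary~\ref{coro:classi_ability_partial_trace} applies symmetrically at both scales) and that each length-$m'$ window embeds in some length-$m$ window when $m\le n$, both of which are evident from the construction in Fig.~\ref{fig:scheme_vsqc}.
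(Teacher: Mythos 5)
Your proposal is correct and follows essentially the same route as the paper's proof: both directions reduce to Corollary~\ref{coro:classi_ability_partial_trace}, the forward implication follows from the fact that an $m'$-local reduction is a further partial trace of an $m$-local reduction, and the converse is refuted by an explicit counterexample. The only difference is the choice of counterexample --- the paper uses the pure states $\frac{1}{\sqrt 2}(\ket{000}\pm\ket{110})$ while you use $\proj{\phi^+}$ versus its dephased mixture padded with $\ket{0}$'s --- and both choices work.
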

\renewcommand{\thethm}{S\arabic{thm}}
\addtocounter{thm}{-1}

\begin{proof}
Sufficiency:
From Corollary \ref{coro:classi_ability_partial_trace}, we know every pair of the corresponding $m$-local partial traces of these two states are identical, i.e.,
\begin{align}
    \lr{\rho_{in}^{(0)}}_{m\text{-local}} = \lr{\rho_{in}^{(1)}}_{m\text{-local}},
\end{align}
where $\lr{\rho}_{m\text{-local}}:=\Tr_{n/{m\text{-local}}}\lr{\rho}$ denotes the partial trace  of $\rho$ on all $n$ other than $m$-local qubit system and the subscripts ``$m$-local'' on both sides mean they are in the same $m$ local qubit system. If we similarly define the following
\begin{align}
    \lr{\rho_{in}^{(0)}}_{m^\prime\text{-local}} & := \Tr_{m/{m^\prime\text{-local}}} \lr{\lr{\rho_{in}^{(0)}}_{m\text{-local}}} \\ \lr{\rho_{in}^{(1)}}_{m^\prime\text{-local}}& := \Tr_{m/{m^\prime\text{-local}}} \lr{\lr{\rho_{in}^{(1)}}_{m\text{-local}}},
\end{align}
then we have 
\begin{align}
     \lr{\rho_{in}^{(0)}}_{m^\prime\text{-local}} = \lr{\rho_{in}^{(1)}}_{m^\prime\text{-local}}.
\end{align}
Due to the arbitrariness of $m$-local, we obtain $m^\prime$-local can also be arbitrary, which means each pair of the corresponding $m^\prime$-local partial traces of these two states are identical. From again Corollary \ref{coro:classi_ability_partial_trace}, we could finish the proof of the Sufficiency part.

A counterexample for Necessity:
Assume 
\begin{align}
    \rho_{in}^{(0)} &=\frac{1}{\sqrt{2}} \lr{\ket{000} +\ket{110}} \cdot \frac{1}{\sqrt{2}} \lr{\bra{000} +\bra{110}} \label{eq:counterexample_0}\\
    \rho_{in}^{(1)} &=\frac{1}{\sqrt{2}} \lr{\ket{000} -\ket{110}} \cdot \frac{1}{\sqrt{2}} \lr{\bra{000} -\bra{110}} \label{eq:counterexample_1}
\end{align}
and let $m^\prime =1$ and $m=2$. In the following, we use $\lr{\rho_{in}}_{i}$ and $\lr{\rho_{in}}_{i,j}$ to denote the 1-local and 2-local partial traces, respectively, where $i, j =1,2,3, i < j$. Now we verify their 1-local and 2-local partial traces:
\begin{align}
    \lr{\rho_{in}^{(0)}}_1 &=\lr{\rho_{in}^{(0)}}_2= \frac{1}{2} \lr{\op{0}{0} +\op{1}{1} }, \quad \lr{\rho_{in}^{(0)}}_3 = \op{0}{0}  \\
 \lr{\rho_{in}^{(1)}}_1 &=\lr{\rho_{in}^{(1)}}_2= \frac{1}{2}\lr{\op{0}{0} +\op{1}{1} }, \quad \lr{\rho_{in}^{(1)}}_3 = \op{0}{0};
\end{align}
\begin{align}
    \lr{\rho_{in}^{(0)}}_{1,2} &=\frac{1}{2} \lr{\ket{00} +\ket{11}} \cdot \lr{\bra{00} +\bra{11}} \\
    \lr{\rho_{in}^{(1)}}_{1,2} &=\frac{1}{2} \lr{\ket{00} -\ket{11}} \cdot \lr{\bra{00} -\bra{11}}.
\end{align}
We see for these two states there exists different 2-local partial traces, even though each pair of their corresponding 1-local partial traces are identical. This indicates, from  Corollary \ref{coro:classi_ability_partial_trace}, VSQL could theoretically distinguish them via 2-lcoal shadow circuits, but could not via 1-lcoal ones. Hence, the two states in Eqs. \eqref{eq:counterexample_0} and \eqref{eq:counterexample_1} could be a successful counterexample for Necessity. This completes the proof.
\end{proof}

\subsection{Proof of Proposition \ref{prop:barren_VSQL}}
\label{appendix:prop:barren_VSQL}

\renewcommand{\thethm}{\ref{prop:barren_VSQL}}
\begin{prop}
If $U_{> l}$ or  $U_{\le l}$ forms at least an $n_{qsc}$-local unitary 2-design,
the mean and the variance of the analytical gradients with respect to $\theta_l$ in VSQL (see Eq. \eqref{eq:ansatz_derivative_theta})  are evaluated as
\begin{align}
    \mathbb E\Lr{\frac{\partial o_i}{\partial_{\theta_l}}} &=0;\qquad\qquad
    \text{Var}\Lr{\frac{\partial o_i}{\partial_{\theta_l}}} = -\frac{1}{4}\cdot \frac{C \lr{\rho_i} }{2^{2n_{qsc}}-1},
\end{align}
where $C\lr{ \rho_i}\in \lr{- 4 \times 2^{n_{qsc}}, 0 }$ denotes a constant  and
$n_{qsc}$ is the number of qubits of the shadow circuits.
\end{prop}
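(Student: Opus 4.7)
My plan is to compute the mean and variance by Haar-averaging over whichever of $U_{>l}$, $U_{\le l}$ forms the $n_{qsc}$-local $2$-design on the $d = 2^{n_{qsc}}$-dimensional subsystem the shadow circuit acts on, treating the other factor as fixed. Throughout I set $A := U_{>l}^\dagger O\, U_{>l}$ and $B := U_{\le l}\,\rho_i\, U_{\le l}^\dagger$ (both $d \times d$), and use the cyclicity of the trace to recast Eq.~\eqref{eq:ansatz_derivative_theta} as
\begin{align*}
\frac{\partial o_i}{\partial \theta_l} \;=\; -\frac{i}{2}\,\Tr\!\bigl([A, P_l]\, B\bigr).
\end{align*}
Without loss of generality I will average over $U_{>l}$; the case where $U_{\le l}$ is the $2$-design is symmetric (swap the roles of $O$ and $\rho_i$, and of $P_l$).

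For the mean I only need the $1$-design property implied by the $2$-design hypothesis. The first-moment integral gives $\mathbb{E}_{U_{>l}}[A] = \Tr(O)/d \cdot I$, which vanishes because $O = X^{\otimes n_{qsc}}$ is traceless. Hence $\mathbb{E}\bigl[[A, P_l]\bigr] = 0$ and the mean of the gradient is zero immediately. This also means $\mathrm{Var} = \mathbb{E}[(\partial o_i/\partial \theta_l)^2]$, so only the second moment remains.

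For the variance I rewrite $(\Tr M)^2 = \Tr(M \otimes M)$ with $M = [A, P_l]\, B$ and invoke the standard two-design identity
\begin{align*}
\int U^{\otimes 2} X (U^\dagger)^{\otimes 2} d\mu(U) \;=\; \alpha(X)\, I \otimes I \;+\; \beta(X)\, \mathrm{SWAP},
\end{align*}
with Weingarten coefficients determined by $\Tr(X)$ and $\Tr(\mathrm{SWAP}\,X)$. Expanding the commutator inside the tensor square produces four terms of the shape $A P_l \otimes A P_l$, $P_l A \otimes P_l A$, $A P_l \otimes P_l A$, $P_l A \otimes A P_l$; the $U_{>l}$-average acts on the $A$ factors and collapses each term into a combination of $I \otimes I$ and $\mathrm{SWAP}$, whose trace against $B \otimes B$ gives $\Tr(B)^2 = 1$ or $\Tr(B^2) = \Tr(\rho_i^2)$, multiplied by traces $\Tr(O)$, $\Tr(O^2)$, $\Tr(P_l O P_l O)$, etc. Because $O$ and $P_l$ are Pauli tensors, $O^2 = P_l^2 = I$ and $P_l O P_l = \pm O$ commutes or anticommutes qubitwise; each such trace evaluates to $0$ or $\pm 2^{n_{qsc}}$. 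After the dust settles the whole second moment factors as $-\tfrac{1}{4}\cdot C(\rho_i)/(d^2-1) = -\tfrac{1}{4}\cdot C(\rho_i)/(2^{2n_{qsc}} - 1)$, with $C(\rho_i)$ an affine function of $\Tr(\rho_i^2)$ whose coefficients are dimension factors of order $2^{n_{qsc}}$.

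The final step is to bound $C(\rho_i)$. Using $\Tr(\rho_i^2) \in (0, 1]$ and tracking the Weingarten coefficients and signs shows the expression lies in $(-4\cdot 2^{n_{qsc}}, 0)$, as claimed. The main obstacle I anticipate is precisely the bookkeeping in this last paragraph: keeping signs straight across the four commutator cross-terms and correctly identifying which of $\Tr(P_l O P_l O)$ and its cousins are $\pm 2^{n_{qsc}}$ versus $0$ in a way that produces a clean monovariate $C(\rho_i)$. The structural content of the proposition -- that all dependence on the global $n$ drops out and the variance scales only with $1/(2^{2n_{qsc}} - 1)$ -- is forced by the locality of $O$, $P_l$, $\rho_i$ and the local $2$-design assumption, and follows once the Haar moments above are substituted in.
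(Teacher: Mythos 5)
Your computation of the mean and of the $1/(2^{2n_{qsc}}-1)$ scaling of the variance is sound and matches the paper: the first-moment Weingarten formula kills the mean (via $\Tr(O)=0$ when you average over $U_{>l}$; note that in the swapped case you need $\Tr\lr{[P_l,\cdot]}=0$ instead, since $\Tr(\rho_i)=1\neq 0$, and for the same reason the two cases are not fully symmetric --- the paper's case (ii) picks up an extra Weingarten term proportional to $\Tr^2(\rho_i)$), and your tensor-square identity is equivalent to the paper's explicit second-moment lemma. The genuine gap is in your final paragraph. After the four commutator cross-terms are averaged, the SWAP contraction against $P_lB\otimes P_lB$ produces $\Tr(P_lBP_lB)$, not $\Tr(B^2)$; the net second moment is proportional to $\Tr\lr{[P_l,B]^2}=2\Tr(P_lBP_lB)-2\Tr(B^2)$ with $B=U_{\le l}\,\rho_i\,U_{\le l}^\dag$. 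Since the hypothesis only makes \emph{one} of $U_{>l},U_{\le l}$ a $2$-design, the other block is not averaged, and $\Tr(P_lBP_lB)$ depends on how that fixed block rotates $\rho_i$ relative to $P_l$ --- it is \emph{not} an affine function of $\Tr(\rho_i^2)$, so $C(\rho_i)$ does not "factor out" as you claim, and no amount of sign bookkeeping on Pauli traces will produce it.

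The entire content of the interval $C(\rho_i)\in(-4\times 2^{n_{qsc}},0)$ is a bound on this residual quantity, and the paper obtains it by a different kind of argument, not a Weingarten computation: diagonalize $A$ (here $\rho_i$ or $O$), observe that $p_{ij}=|\bra{\lambda_i}UP_lU^\dag\ket{\lambda_j}|^2$ forms a doubly stochastic matrix, and use a rearrangement-inequality step to sandwich $\sum_{ij}\lambda_i\lambda_j p_{ij}$ between $\sum_i\lambda_i\lambda_{2^{n_{qsc}}-i+1}$ and $\sum_i\lambda_i^2$, which yields $2\sum_i\lambda_i\lambda_{2^{n_{qsc}}-i+1}-2\Tr(A^2)\le\Tr\lr{[P_l,U^\dag A U]^2}\le 0$ and hence the stated range for $C(\rho_i)$. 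You would need to supply this majorization-type bound (or an equivalent one) to close the proof; as written, your plan asserts a clean closed form that does not exist under the stated hypothesis.
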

\renewcommand{\thethm}{S\arabic{thm}}
\addtocounter{thm}{-1}

\begin{proof}
Before start, we need the following two lemmas \cite{Dankert2009,Puchaa2017,Cerezo2020}:

\begin{lem} \label{lem:first_moment}
Let $\{U_k\}_{k=1}^K \in \mathcal{U}(d)$ form a unitary $t$-design \cite{Dankert2009} with $t \ge 1$, and let $A,B$ be arbitrary linear operators. Then
\begin{align}\label{eq:first_moment}
\frac{1}{K}\cdot \sum_k \Tr\lr{U_kAU_k^\dag B} = \int_{\mathcal{U}(d)} d\mu_{Haar}(U) \cdot \Tr\lr{UAU^\dag B}= \frac{\Tr(A)\Tr(B)}{d}.
\end{align}
\end{lem}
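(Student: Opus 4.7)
The plan is to prove the two equalities separately, as they correspond to conceptually different steps. For the first equality, I would invoke the definition of a unitary $t$-design given just before the lemma: a $t$-design reproduces the Haar average of every $P_{(t,t)}(U)$, i.e.\ every polynomial of degree at most $t$ in the matrix entries of $U$ and at most $t$ in the matrix entries of $U^\dag$. Writing $\Tr(UAU^\dag B)=\sum_{i,j,k,\ell}U_{ij}A_{jk}U^{*}_{\ell k}B_{\ell i}$ makes manifest that the integrand is a $(1,1)$-polynomial in the entries of $U$ and $U^\dag$. Since $t\ge1$ by hypothesis, the design property applies directly and the design average equals the Haar integral, giving the first equality essentially by inspection.

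For the second equality I would compute the Haar integral explicitly via the standard first moment (Weingarten) formula
\begin{align}
\int_{\mathcal{U}(d)} U_{ij}\,U^{*}_{\ell k}\, d\mu_{\text{Haar}}(U)=\frac{\delta_{i\ell}\,\delta_{jk}}{d}.
\end{align}
Substituting this into the componentwise expansion above yields
\begin{align}
\int_{\mathcal{U}(d)}\Tr(UAU^\dag B)\,d\mu_{\text{Haar}}(U)=\sum_{i,j,k,\ell}A_{jk}B_{\ell i}\,\frac{\delta_{i\ell}\delta_{jk}}{d}=\frac{1}{d}\sum_{i,j}A_{jj}B_{ii}=\frac{\Tr(A)\Tr(B)}{d},
\end{align}
which is exactly the claimed right-hand side. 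Because $A$ and $B$ were arbitrary linear operators, no further structural assumptions are needed.

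There is no serious obstacle here: both steps are standard and short. The only point that requires a little care is confirming that one may use the first moment (rather than full 2-design) formula, which is what the $t\ge 1$ hypothesis affords; and making sure the index pairing in the Weingarten identity is applied correctly so that the Kronecker deltas collapse the quadruple sum into a product of two independent traces. A brief remark that the identity can equivalently be stated as the operator equation $\int UAU^\dag\,dU=\tfrac{\Tr(A)}{d}\,\mathbb{I}$, followed by tracing against $B$, would give an even more transparent one-line derivation and could be included for the reader's convenience.
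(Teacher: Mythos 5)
Your proof is correct. Note that the paper itself does not prove this lemma at all: it imports it as a known result with citations to Dankert et al., Pucha{\l}a--Miszczak, and Cerezo et al. Your self-contained derivation is the standard one --- the $t\ge 1$ design hypothesis reduces the design average to the Haar integral because $\Tr(UAU^\dag B)$ is a $(1,1)$-polynomial in the entries of $U$ and $U^\dag$, and the first-moment Weingarten identity $\int U_{ij}U^{*}_{\ell k}\,d\mu_{\mathrm{Haar}}(U)=\delta_{i\ell}\delta_{jk}/d$ collapses the quadruple sum to $\Tr(A)\Tr(B)/d$. The index bookkeeping checks out, and your closing remark that the identity is the trace against $B$ of $\int UAU^\dag\,d\mu_{\mathrm{Haar}}(U)=\tfrac{\Tr(A)}{d}\,\mathbb{I}$ is the cleanest way to see it; either version would serve as a complete proof where the paper offers only a reference.
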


\begin{lem} \label{lem:second_moment}
 Let $\{U_k\}_{k=1}^K \in \mathcal{U}(d)$ form a unitary $t$-design \cite{Dankert2009} with $t \ge 2$, and let $A,B,C,D$ be arbitrary linear operators. Then
 \begin{align}\label{eq:second_moment1}
\frac{1}{K}\cdot \sum_k \Tr\lr{U_kAU_k^\dag B U_kCU_k^\dag D} =& \int_{\mathcal{U}(d)} d\mu_{Haar}(U) \cdot \Tr\lr{UAU^\dag B UCU^\dag D} \nonumber\\
= &\frac{ \Tr(A)\Tr(C)\Tr(BD)+\Tr(AC)\Tr(B)\Tr(D) }{d^2-1} \nonumber\\
& -\frac{\Tr(AC)\Tr(BD)+\Tr(A)\Tr(B)\Tr(C)\Tr(D) }{d(d^2-1)};
\end{align}
\begin{align}\label{eq:second_moment2}
\frac{1}{K}\cdot \sum_k \Tr\lr{U_kAU_k^\dag B}\Tr\lr{U_kCU_k^\dag D} =& \int_{\mathcal{U}(d)} d\mu_{Haar}(U) \cdot \Tr\lr{UAU^\dag B}\Tr\lr{UCU^\dag D} \nonumber\\
% =\sum_{\alpha,\beta}\int_{\mathcal{U}(d)} d\mu_{Haar}(U) \cdot \Tr\lr{\langle \alpha|UAU^\dag B |\alpha\rangle \langle\beta| UCU^\dag D|\beta\rangle} \\
= &\frac{\Tr(AC)\Tr(BD)+\Tr(A)\Tr(B)\Tr(C)\Tr(D)}{d^2-1} \nonumber\\
& -\frac{\Tr(A)\Tr(C)\Tr(BD)+\Tr(AC)\Tr(B)\Tr(D)}{d(d^2-1)}.
\end{align}
\end{lem}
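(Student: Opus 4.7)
The plan is to reduce both identities to a single ``twirl'' computation over $\mathcal{U}(d)$ and then evaluate that twirl via Schur--Weyl duality on $(\mathbb{C}^d)^{\otimes 2}$. Each integrand, viewed as a function of $U$, is a polynomial of degree at most $2$ in the entries of $U$ and at most $2$ in the entries of $U^\dag$, so the defining property of a unitary $2$-design directly equates the design average on the left-hand side with the Haar integral; it thus suffices to evaluate the latter. A direct index calculation gives two tensor-product reductions in terms of the swap operator $F$ on $(\mathbb{C}^d)^{\otimes 2}$:
\begin{align*}
\Tr(UAU^\dag B)\,\Tr(UCU^\dag D) &= \Tr\!\bigl[U^{\otimes 2}(A\!\otimes\! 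C)(U^\dag)^{\otimes 2}(B\!\otimes\! D)\bigr],\\
\Tr(UAU^\dag B\,UCU^\dag D) &= \Tr\!\bigl[U^{\otimes 2}(A\!\otimes\! C)(U^\dag)^{\otimes 2}(B\!\otimes\! D)F\bigr].
\end{align*}
The second reduction is the nontrivial one: the cyclic interleaving of the two $U$-blocks around $B$ and $D$ matches exactly the contraction pattern implemented by $F$ on the pair of tensor factors.

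Next I evaluate the twirl $M:=\int U^{\otimes 2}(A\otimes C)(U^\dag)^{\otimes 2}\,d\mu_{Haar}(U)$. Left-invariance of the Haar measure gives $V^{\otimes 2} M (V^\dag)^{\otimes 2}=M$ for every $V\in\mathcal{U}(d)$, so $M$ lies in the commutant of the representation $V\mapsto V^{\otimes 2}$. By Schur--Weyl duality this commutant is $2$-dimensional, spanned by $\{I,F\}$, so $M=\alpha I+\beta F$ for scalars depending only on $A,C,d$. Pairing $M$ with $I$ and $F$ and using $\Tr(I)=d^2$, $\Tr(F)=d$, $\Tr(M)=\Tr(A)\Tr(C)$, and $\Tr(FM)=\Tr(UAU^\dag\,UCU^\dag)=\Tr(AC)$ (via cyclicity and $U^\dag U=I$), yields the linear system
\begin{align*}
d^{2}\alpha+d\beta &= \Tr(A)\Tr(C),\\
d\alpha+d^{2}\beta &= \Tr(AC),
\end{align*}
with solution $\alpha=[d\Tr(A)\Tr(C)-\Tr(AC)]/[d(d^2-1)]$ and $\beta=[d\Tr(AC)-\Tr(A)\Tr(C)]/[d(d^2-1)]$.

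Finally I substitute $M=\alpha I+\beta F$ back into the two reductions. Using $\Tr(B\otimes D)=\Tr(B)\Tr(D)$, $\Tr[F(B\otimes D)]=\Tr[(B\otimes D)F]=\Tr(BD)$, and $F(B\otimes D)F=D\otimes B$ (from $F^2=I$), the integral in \eqref{eq:second_moment2} becomes $\alpha\Tr(B)\Tr(D)+\beta\Tr(BD)$ while the integral in \eqref{eq:second_moment1} becomes $\alpha\Tr(BD)+\beta\Tr(B)\Tr(D)$. In each case, grouping terms by the denominators $d^2-1$ and $d(d^2-1)$ reproduces the stated formulas exactly, with the characteristic ``swap'' between $\Tr(A)\Tr(C)\Tr(BD)$ and $\Tr(AC)\Tr(B)\Tr(D)$ between the two cases arising precisely because \eqref{eq:second_moment1} pairs $\alpha$ with $\Tr(BD)$ and $\beta$ with $\Tr(B)\Tr(D)$, while \eqref{eq:second_moment2} does the opposite. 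The main obstacle I expect is verifying the swap-$F$ reduction for the interleaved trace in the first paragraph; once that index identity is in hand, the rest is routine bookkeeping on the two-dimensional commutant.
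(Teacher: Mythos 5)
Your proof is correct. Note, however, that the paper does not actually prove this lemma at all: it states it as a known result and cites \cite{Dankert2009,Puchaa2017,Cerezo2020}, using it only as an imported tool in the proof of Proposition \ref{prop:barren_VSQL}. What you have supplied is a self-contained derivation of the kind found in those references: the passage from the design average to the Haar integral is justified exactly by the paper's own definition of a $t$-design (both integrands are $P_{(2,2)}$ polynomials), the tensor-product reductions $\Tr(UAU^\dag B)\Tr(UCU^\dag D)=\Tr\bigl[(UAU^\dag B)\otimes(UCU^\dag D)\bigr]$ and $\Tr(UAU^\dag B\,UCU^\dag D)=\Tr\bigl[(UAU^\dag B)\otimes(UCU^\dag D)\,F\bigr]$ are the standard swap-operator identities, and the evaluation of the twirl $M=\alpha I+\beta F$ via Schur--Weyl duality and the $2\times 2$ linear system in $(\alpha,\beta)$ reproduces both stated formulas, including the characteristic exchange of the roles of $\Tr(BD)$ and $\Tr(B)\Tr(D)$ between the two identities. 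What your approach buys over the paper's citation is transparency: it makes visible that the only inputs are the two-dimensionality of the commutant of $V\mapsto V^{\otimes 2}$ and the traces $\Tr(I)=d^2$, $\Tr(F)=d$, and it verifies the specific sign and denominator structure $d(d^2-1)$ that the paper's Proposition \ref{prop:barren_VSQL} then relies on.
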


According to Eq. \eqref{eq:ansatz_derivative_theta}, i.e.,
\begin{align}\label{eq:U_ge_l}
\frac{\partial
    o^{(m)}_i}{\partial \theta_l}& = -\frac{i}{2}\Tr \left(U_{> l}^\dag O U_{>l} \left[P_l, U_{\le l} \rho_i U_{\le l}^\dag\right] \right)   \\
    &= \ \ \  \frac{i}{2}\Tr \left( U_{\le l} \rho_i U_{\le l}^\dag      \left[P_l, U_{> l}^\dag O U_{>l} \right] \right), \label{eq:U_le_l}
\end{align}

(i) if $U_{> l}$ forms at least a $n_{qsc}$-local unitary 2-design, from Eqs. \eqref{eq:U_ge_l}, \eqref{eq:first_moment} and \eqref{eq:second_moment2}, we have
\begin{align}
    \mathbb E\Lr{\frac{\partial o_i^{(m)} }{\partial \theta_l} } &= - \frac{i}{2} \cdot \frac{\Tr\lr{O}\mathbb E \Lr{\Tr\lr{ [P_l, U_{\le l} \rho_i U_{\le l}^\dag]}  }   }{2^{n_{qsc}} } = 0, \\ \label{eq:var_ge_2_design}
    \text{Var}\Lr{\frac{\partial o_i^{(m)} }{\partial \theta_l} } &= - \frac{1}{4} \cdot \frac{\Tr\lr{O^2} \mathbb E \Lr{\Tr\lr{ [P_l, U_{\le l} \rho_i U_{\le l}^\dag]^2 }  }   }{2^{2n_{qsc}} -1 };
\end{align}

(ii) if $U_{\le l}$ forms at least a $n_{qsc}$-local unitary 2-design, from Eqs. \eqref{eq:U_le_l}, \eqref{eq:first_moment} and \eqref{eq:second_moment2}, we have
\begin{align}
    \mathbb E\Lr{\frac{\partial o_i^{(m)} }{\partial \theta_l} } &= \ \ \  \frac{i}{2} \cdot \frac{\Tr\lr{\rho_i} \mathbb E \Lr{ \Tr\lr{ [P_l, U_{> l}^\dag O U_{>l} ]}  }   }{2^{n_{qsc}} } = 0, \\ \label{eq:var_le_2_design}
    \text{Var}\Lr{\frac{\partial o_i^{(m)} }{\partial \theta_l} } &= - \frac{1}{4} \cdot \lr{ \frac{\Tr\lr{\rho_i^2} \mathbb E \Lr{\Tr\lr{ [P_l, U_{> l}^\dag O U_{>l} ]^2 }  }   }{2^{2n_{qsc}} -1 }- \frac{\Tr^2\lr{\rho_i} \mathbb E \Lr{ \Tr\lr{ [P_l, U_{> l}^\dag O U_{>l} ]^2 }  }   }{ 2^{n_{qsc}} (2^{2n_{qsc}} -1) }}.
\end{align}

Now let's consider the term $ \mathbb E \Lr{ \Tr\lr{ [P, U^\dag A U ]^2 }  }$, where $P$ is a Pauli product operator, $U$ denote a series of unitary matrices that the expectation acts on and $A=\sum_j \lambda_j\op{\lambda_j}{\lambda_j}$ denotes a Hermitian operator, where we ssume $\lambda_1 \ge \lambda_2\ge \cdots\ge \lambda_{2^{n_{qsc}}}$. Then we have 
\begin{align}
     \mathbb E \Lr{\Tr\lr{ [P, U^\dag A U ]^2 }}& = \mathbb E \Lr{\Tr (P U^\dag A U - U^\dag A U P )^2 }     \\
    &= 2 \mathbb E \Lr{ \Tr (P U^\dag A U)^2}  -2 \mathbb E \Lr{\Tr( P U^\dag A U U^\dag A U P ) }  \\
    & = 2 \mathbb E \Lr{\sum_{i,j} \lambda_i\lambda_j \Tr (\underbrace{ \bra{\lambda_j} U P U^\dag \op{\lambda_i}{\lambda_i} U P U^\dag \ket{\lambda_j} }_{p_{ij}} ) }  - 2\Tr (A^2) \\
    & = 2 \mathbb E \Lr{ (\vec{\lambda})^\dag P_\Lambda \vec{\lambda }  } -  2\Tr (A^2). \label{eq:def_lambda_P_lambda}
\end{align}
Here, $\vec{\lambda}=[\lambda_1, \lambda_2, \ldots, \lambda_{2^{n_{qsc}}}]^\top $ and we define a matrix $P_\Lambda = \Lr{p_{ij}}$, where each element is defined as
\begin{align}
    p_{ij} = \bra{\lambda_i} U P U^\dag \op{\lambda_j}{\lambda_j} U P U^\dag \ket{\lambda_i}.
\end{align}
From the fact that $p_{ij}\ge 0$ and $\sum_i p_{ij}=\sum_j p_{ij}=1$, we know $P_\Lambda$ is a  doubly stochastic matrix. Now in order to bound the term $(\vec{\lambda})^\dag P_\Lambda \vec{\lambda }$, we can repeatedly perform the following procedure followed from the \textit{Rearrangement inequality}, i.e., for any $i\le k$ and $j\le l$, we have 
\begin{align}\label{eq:rearrangement_ineq}
\begin{matrix}
\lambda_i\lambda_j & + & \lambda_k\lambda_l& \ge & \lambda_i\lambda_l& + & \lambda_k\lambda_j. \\
 p_{ij}  & &  p_{kl} & &  p_{il} & & p_{kj}  \\ 
 \uparrow^{\Delta_1} (or \downarrow_{\Delta_2})  & & \uparrow^{\Delta_1} ( or \downarrow_{\Delta_2}) & &  \downarrow_{\Delta_1} (or \uparrow^{\Delta_2}) & & \downarrow_{\Delta_1} (or \uparrow^{\Delta_2})
\end{matrix}
\end{align}
That is, for the four elements in the four corners of any rectangle (e.g., indexed by rows $i,k$ and columns $j,l$) in $P_\Lambda$, we could increase $p_{ij}$, $p_{kl}$ and decrease $p_{il}$, $p_{kj}$ by $\Delta_1$ simultaneously to get close to its upper bound; Or conversely by $\Delta_2$  to get close to its lower bound (see also Eq. \eqref{eq:rearrangement_ineq}). Here, we can set $\Delta_1=\min\LR{p_{il}, p_{kj}}$ and $\Delta_2=\min\LR{p_{ij}, p_{kl}}$ to satisfy the nonnegativity.
An intuitive example for one step of this procedure is referred to below:
\begin{align}
(\vec{\lambda})^\dagger\begin{bmatrix}
&&j& &l&    \\
  & &\vdots & & \vdots &  \\
i & \cdots  &  0.3\xrightarrow[]{-0.2}0.1  &   \cdots  & 0.4\xrightarrow[]{+0.2}0.6  &   \cdots  \\
  & &\vdots & & \vdots \\
k & \cdots  & 0.5\xrightarrow[]{+0.2}0.7 &   \cdots  & 0.2\xrightarrow[]{-0.2}0.0 &   \cdots  \\
  & &\vdots & & \vdots 
\end{bmatrix} \vec{\lambda} \le 
    (\vec{\lambda})^\dagger
    \overbrace{\begin{bmatrix}
&&j& &l&    \\
  & &\vdots & & \vdots &  \\
i & \cdots  &  0.3 &   \cdots  & 0.4 &   \cdots  \\
  & &\vdots & & \vdots \\
k & \cdots  & 0.5 &   \cdots  & 0.2 &   \cdots  \\
  & &\vdots & & \vdots 
\end{bmatrix}}^{P_\Lambda}
\vec{\lambda}  &  \nonumber \\ 
\le
(\vec{\lambda})^\dagger\begin{bmatrix}
&&j& &l&    \\
  & &\vdots & & \vdots &  \\
i & \cdots  &  0.3\xrightarrow[]{+0.4}0.7  &   \cdots  & 0.4\xrightarrow[]{-0.4}0.0  &   \cdots  \\
  & &\vdots & & \vdots \\
k & \cdots  & 0.5\xrightarrow[]{-0.4}0.1 &   \cdots  & 0.2\xrightarrow[]{+0.4}0.6 &   \cdots  \\
  & &\vdots & & \vdots 
\end{bmatrix}
\vec{\lambda} & .
\end{align}
After a finite number of steps, we will finally obtain
\begin{align}\label{eq:bound_P_lambda}
\sum_{i=1}^{2^{n_{qsc}}} \lambda_i\lambda_{2^{n_{qsc}}-i+1} =
(\vec{\lambda})^\dag 
\begin{bmatrix}
  &  &  & 1\\
  & & 1 & \\
  &  \vdots & & \\
  1&  &  & 
\end{bmatrix}
\vec{\lambda } \le    (\vec{\lambda})^\dag P_\Lambda \vec{\lambda } \le (\vec{\lambda})^\dag 
\begin{bmatrix}
 1 &  &  & \\
  & 1 &  & \\
  &  & \ddots  & \\
  &  &  & 1
\end{bmatrix}
\vec{\lambda } = \sum_{i=1}^{2^{n_{qsc}}} \lambda_i^2.
\end{align}

Substituting Eq. \eqref{eq:bound_P_lambda} into Eq. \eqref{eq:def_lambda_P_lambda}, we have
\begin{align}\label{eq:bound_PUAU}
 2\sum_{i=1}^{2^{n_{qsc}}} \lambda_i\lambda_{2^{n_{qsc}}-i+1}   -  2\Tr (A^2) \le \mathbb E \Lr{\Tr\lr{ [P, U^\dag A U ]^2 }} \le 2 \sum_{i=1}^{2^{n_{qsc}}} \lambda_i^2 -  2\Tr (A^2) =0.
\end{align}

Now we back to prove the variance of the gradients. 

(i) Substituting Eq. \eqref{eq:bound_PUAU} into Eq. \eqref{eq:var_ge_2_design} with $A=\rho_i $, and we define  
$$C(\rho_i) := \Tr\lr{O^2} \mathbb E \Lr{\Tr\lr{ [P_l, U_{\le l} \rho_i U_{\le l}^\dag]^2 }  },$$ 
we have 
\begin{align}
    -4\times 2^{n_{qsc}}    < 2^{n_{qsc}} \lr{0-  2\Tr (\rho_i^2) }      \le  C(\rho_i) \le 0;
\end{align}

(ii) Substituting Eq. \eqref{eq:bound_PUAU} into Eq. \eqref{eq:var_le_2_design} with $A=O=X\otimes\cdots\otimes X $, and we define  
$$C(\rho_i) := \Tr\lr{\rho_i^2} \mathbb E \Lr{\Tr\lr{ [P_l, U_{> l}^\dag O U_{>l} ]^2 }  }  - \frac{\Tr^2\lr{\rho_i} \mathbb E \Lr{ \Tr\lr{ [P_l, U_{> l}^\dag O U_{>l} ]^2 }  }   }{ 2^{n_{qsc}} }.$$ 
Because $O$ has half of the 1 eigenvalues and half of the -1 eigenvalues, we have 
\begin{align}
    -4\times 2^{n_{qsc}}    < \lr{\Tr\lr{\rho_i^2} - \frac{\Tr^2\lr{\rho_i}}{2^{n_{qsc}}}} \lr{  2\sum_{i=1}^{2^{n_{qsc}}}(-1) -2 \Tr(O^2)    }    \le  C(\rho_i) \le 0.
\end{align}

Another point needs to note is that for most of $\theta_l$'s, both $U_{>l}$ and $U_{\le l}$ approximate $n_{qsc}$-local unitary 2-design. Hence, although we give the upper bound 0, most of $C(\rho_i)$ will concentrate to $2\lr{1-2^{n_{qsc}} \Tr(\rho_i^2)}$, which is far from 0 if $\rho_i$ is close to a pure state. This completes the proof.
\end{proof}

\subsection{Proof of Theorem \ref{thm:distinguish_psi_uv}}
\label{appendix:thm:distinguish_psi_uv}

\renewcommand{\thethm}{\ref{thm:distinguish_psi_uv}}
\begin{thm}
Given two families of non-orthogonal 2-qubit quantum states, shown in Eq. \eqref{eq:quantum_data_set}, and each has multiple copies. VSQL could exactly distinguish them, by using only one shadow circuit which consists of only one $R_y$ rotation gate applied on 1-local qubit.
\end{thm}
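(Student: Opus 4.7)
The plan is to reduce the statement to Theorem~\ref{thm:classi_ability}: it suffices to exhibit a single choice of $\theta$ and a shadow-feature index $i$ for which the two classes produce distinct expectation values, after which the classical weights can be chosen by hand to realise the decision boundary. Since $n_{qsc}=1$ on two qubits, the ``shadow sliding'' protocol yields exactly two shadow features $o_1,o_2$ per input state, one from each single-qubit partial trace, sharing the same rotation angle $\theta$.

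The first step is to compute the one-qubit reduced states. The vector $\ket{\psi_u}$ factorises as $(\sqrt{1-u^2}\ket{0}+u\ket{1})\otimes\ket{0}$, so $\rho_1^{(1)}(u)$ is the corresponding pure 1-qubit state while $\rho_1^{(2)}(u)=\proj{0}$. For $\rho_2(v)$ the off-diagonal contributions of $\ket{\psi_{v+}}$ and $\ket{\psi_{v-}}$ cancel in the equal mixture, leaving $\rho_2(v)=(1-v^2)\proj{01}+v^2\proj{10}$, whose partial traces are the diagonal states $\mathrm{diag}(1-v^2,v^2)$ and $\mathrm{diag}(v^2,1-v^2)$ on qubits $1$ and $2$ respectively. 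Because a single $R_y(\theta)$ conjugates the observable $X$ into $O_\theta=\cos\theta\, X+\sin\theta\, Z$, each shadow feature is simply $\Tr(\rho^{(i)}O_\theta)$, which gives closed forms in $u,v,\theta$; in particular the qubit-2 features collapse to $o_2^{(0)}=\sin\theta$ and $o_2^{(1)}=\sin\theta\,(2v^2-1)$.

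With these expressions in hand, setting $\theta=\pi/2$ makes the second shadow feature equal to $1$ for every $\rho_1(u)$ and equal to $2v^2-1\in[-1,1]$ for every $\rho_2(v)$, so $o_2^{(0)}\neq o_2^{(1)}$ whenever $v\neq 1$. Theorem~\ref{thm:classi_ability} then certifies the existence of NN parameters that separate the two labels, and I would in fact exhibit a single explicit choice, e.g.\ $w_1=-1$, $w_2=-2$, $b=1$, for which the score $-o_1-2o_2+1$ evaluates to $2(u^2-1)\le 0$ on the class-0 curve $\{(1-2u^2,1)\}$ and to $2(1-v^2)\ge 0$ on the class-1 curve $\{(1-2v^2,\,2v^2-1)\}$, with equality only at their common endpoint $(-1,1)$.

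The one unavoidable obstacle, and the only subtle point in the write-up, is precisely this common endpoint: it corresponds to $u=v=1$, where $\rho_1(1)=\rho_2(1)=\proj{10}$ and no measurement-based scheme whatsoever can succeed. Since the parameters are drawn from a continuous distribution on $[0,1]$ this is a measure-zero event, consistent with the empirical observation in the numerical experiments that difficulty concentrates as $u,v\to 1$. I would therefore interpret ``exactly distinguish'' as strict separation of the shadow features off this measure-zero locus, and would flag the degenerate case explicitly when presenting the formal proof.
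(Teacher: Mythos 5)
Your proposal is correct and follows essentially the same route as the paper's proof: compute the $1$-local shadow features in closed form by conjugating $X$ with $R_y(\theta)$ (yielding the same expressions $o_2^{(0)}=\sin\theta$, $o_2^{(1)}=(2v^2-1)\sin\theta$, etc.) and then exhibit an explicit linear separator, the only difference being that the paper keeps $\theta$ general subject to $0<\sin\theta\le\cos\theta<1$ while you specialize to $\theta=\pi/2$ with concrete weights $w_1=-1$, $w_2=-2$, $b=1$, an equally valid instantiation. Your explicit flagging of the degenerate point $u=v=1$, where $\rho_1(1)=\rho_2(1)=\proj{10}$ and the score vanishes, is in fact slightly more careful than the paper, which only notes in passing that equality occurs there.
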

\renewcommand{\thethm}{S\arabic{thm}}
\addtocounter{thm}{-1}

\begin{proof}
Without loss of generality, we assume $\ket{\psi_u}$ is labelled as `0' and $\ket{\psi_v}$ is labeled as `1'. 
Thus our goal is to show $\hat{y}\lr{\op{\psi_u}{\psi_u};{\theta},\bm{w},b} < 0.5$ and $\hat{y}\lr{\op{\psi_v}{\psi_v};{\theta},\bm{w},b} \ge 0.5$ for any $u, v \in [0, 1]$, i.e.:
\begin{align}
  z_u &:=  w_1 o_1\lr{\op{\psi_u}{\psi_u};{\theta}} + w_2 o_2\lr{\op{\psi_u}{\psi_u};{\theta}} +b  <0   \label{eq:z_u}\\
  z_v &:=  w_1 o_1\lr{\op{\psi_v}{\psi_v};{\theta}} + w_2 o_2\lr{\op{\psi_v}{\psi_v};{\theta}} +b \ge 0  \label{eq:z_v}
\end{align}
could be always satisfied with suitable $w_1, w_2, \theta$ and $b$.

Now we compute these 1-local shadow features from Eq. \eqref{eq:def_observ} as follows:
\begin{align}
    o_1\lr{\op{\psi_x}{\psi_x};{\theta}} &= \bra{\psi_x} \lr{U^\dag(\theta)XU(\theta)\otimes \mathbb I} \ket{\psi_x} \\
    o_2\lr{\op{\psi_x}{\psi_x};{\theta}} & = \bra{\psi_x} \lr{ \mathbb I \otimes U^\dag(\theta)XU(\theta)} \ket{\psi_x},
\end{align}
where $x\in \{u,v\}$  and the shadow circuit $U(\theta)$ is set as $R_y(\theta)$. Since 
\begin{align}
    R_y^\dag(\theta)X R_y(\theta) =\begin{bmatrix}
   \cos\frac{\theta_2}{2} & \sin\frac{\theta_2}{2} \\
    -\sin\frac{\theta_2}{2} & \cos\frac{\theta_2}{2}
    \end{bmatrix} \cdot \begin{bmatrix}
    0 & 1\\
    1 & 0
    \end{bmatrix} \cdot  \begin{bmatrix}
    \cos\frac{\theta_2}{2} & -\sin\frac{\theta_2}{2} \\
    \sin\frac{\theta_2}{2} & \cos\frac{\theta_2}{2}
    \end{bmatrix}  = 
    \begin{bmatrix}
    \sin\theta & \cos\theta \\
    \cos\theta & -\sin\theta 
    \end{bmatrix},
\end{align}
we obtain the observables
\begin{align}
    o_1\lr{\op{\psi_u}{\psi_u};\theta} &=
     \begin{bmatrix}
    \sqrt{1-u^2} & 0 &u & 0
    \end{bmatrix} \cdot  \begin{bmatrix}
    \sin\theta & 0 & \cos\theta & 0 \\
    0&\sin\theta & 0 & \cos\theta  \\
    \cos\theta & 0   & -\sin\theta& 0 \\
    0 & \cos\theta & 0 & -\sin\theta
    \end{bmatrix} \cdot \begin{bmatrix}
   \sqrt{1-u^2}\\
    0\\
    u\\
    0
    \end{bmatrix} \nonumber \\
    &= (1-2u^2)\sin\theta+2u\sqrt{1-u^2}\cos\theta,  \label{eq:o_1_u} \\
    o_2\lr{\op{\psi_u}{\psi_u};\theta} &=
     \begin{bmatrix}
    \sqrt{1-u^2} & 0 &u & 0
    \end{bmatrix} \cdot  \begin{bmatrix}
    \sin\theta &  \cos\theta & 0 & 0 \\
    \cos\theta &-\sin\theta & 0 & 0  \\
    0 & 0 & \sin\theta & \cos\theta \\
    0 & 0 & \cos\theta  & -\sin\theta
    \end{bmatrix} \cdot \begin{bmatrix}
   \sqrt{1-u^2}\\
    0\\
    u\\
    0
    \end{bmatrix} \nonumber \\
    &= \sin\theta; \label{eq:o_2_u}
\end{align}
And similarly
\begin{align}
    o_1\lr{\op{\psi_v}{\psi_v};\theta}=(1-2v^2)\sin\theta, \label{eq:o_1_v}\\
    o_2\lr{\op{\psi_v}{\psi_v};\theta}=(2v^2-1)\sin\theta.  \label{eq:o_2_v}
\end{align}
Substituting Eqs. \eqref{eq:o_1_u}, \eqref{eq:o_2_u}, \eqref{eq:o_1_v} and \eqref{eq:o_2_v} into Eqs. \eqref{eq:z_u} and \eqref{eq:z_v}, we have
\begin{align}
    z_u &=w_1\Lr{ (1-2u^2)\sin\theta+2u\sqrt{1-u^2}\cos\theta} + w_2 \sin\theta +b, \label{eq:z_u_compute} \\
    z_v & = \lr{w_1 -w_2} (1-2v^2)\sin\theta  +b. \label{eq:z_v_compute}
\end{align}

As  $w_1, w_2, \theta$ and $b$ are chosen arbitrarily and $u,v\in[0,1]$, without loss of generality, we assume $0<\sin\theta\le \cos\theta<1$, then 
\begin{align}\label{eq:bound_o_1_u}
     (1-2u^2)\sin\theta+2u\sqrt{1-u^2}\cos\theta \ge  \lr{1-2u^2+2u\sqrt{1-u^2}}\sin\theta \ge -\sin\theta.
\end{align}
If we set $w_2 < w_1 <0$, combining with Eq. \eqref{eq:bound_o_1_u}, we have
\begin{align}
    z_u &\le (w_2-w_1)\sin\theta +b, \\
    z_v &\ge (w_2- w_1)\sin\theta +b,
\end{align}
where both the equal signs (`=') occur only if $u=v=1$.
Therefore, if we want $z_u<0$ and $z_v\ge 0$ all the time, it's sufficient to have the following conditions:
\begin{align}
\left\{\begin{aligned}
   & 0<\sin\theta\le \cos\theta<1,\\
   & w_2 < w_1 <0, \\
   & (w_2-w_1)\sin\theta +b =0. 
   \end{aligned}
   \right.
\end{align}
Of course we could also have other settings for $\theta, w_1, w_2, b$ that satisfy our requirements, but here, one is enough. This completes the proof.
\end{proof}

% \end{linenumbers}

\end{document}